\def\submission{1}
\def\issupplement{1} %
\newcommand{\cWpriv}[1][\priv]{\cW^{{\rm priv}, #1}}
\newcommand{\cWcomm}[1][\numbits]{\cW^{{\rm comm},#1}}
\newcommand{\channel}{\tilde{W}}
\newcommand{\ff}{g}
\newcommand{\sparam}{\tau} %
\newcommand{\nohat}[1]{\vphantom{\hat{#1}}#1} %
\newcommand{\ldim}{r}
\newcommand{\zero}{\mathbf{0}}
\newcommand{\one}[1][]{\mathbf{1}_{#1}}
\newcommand{\numone}[1]{\norm{#1}_+}
\newcommand{\kappavar}{\alpha^2}
\theoremstyle{plain}
\newtheorem{theorem}{Theorem}
\newtheorem{corollary}{Corollary}
\newtheorem{lemma}{Lemma}
\theoremstyle{remark}
\newtheorem{definition}{Definition}
\newtheorem{assumption}{Assumption}
\newtheorem{remark}{Remark}
\newcommand{\saX}{\mathfrak{X}} %
\newcommand{\saY}{\mathfrak{Y}} %
\newenvironment{proofof}[1]{\begin{proof}[\bf Proof of {#1}]}{\end{proof}}
\crefname{assumption}{Assumption}{Assumptions}
\begin{document}

\begin{frontmatter}
	\title{Unified lower bounds for interactive high-dimensional estimation under information constraints}
	\runtitle{Unified lower bounds for interactive estimation}

\begin{aug}
\author[A]{\fnms{Jayadev} \snm{Acharya}\ead[label=e1,mark]{acharya@cornell.edu}},
\author[B]{\fnms{Cl\'ement L.} \snm{Canonne}\ead[label=e2]{clement.canonne@sydney.edu.au}},
\author[A]{\fnms{Ziteng} \snm{Sun}\ead[label=e3,mark]{zs335@cornell.edu}},
\and
\author[C]{\fnms{Himanshu} \snm{Tyagi}\ead[label=e4]{htyagi@iisc.ac.in}}
\address[A]{Cornell University, \printead{e1,e3}}
\address[B]{University of Sydney, \printead{e2}}
\address[C]{Indian Institute of Science, Bangalore, \printead{e4}}
\end{aug}
\begin{abstract}
We consider distributed parameter estimation using interactive protocols 
subject to \emph{local information constraints} such as bandwidth 
limitations, local differential privacy,
and restricted measurements.
We provide a unified framework enabling us to derive a variety of (tight) minimax lower bounds for different parametric families of distributions, both continuous and discrete, under any $\lp[\prm]$ loss. Our lower bound framework is versatile and yields ``plug-and-play'' bounds that are widely applicable to a large range of estimation problems, and, for the prototypical case of the Gaussian family, circumvents limitations of previous techniques. In particular, our approach recovers bounds obtained using data processing inequalities and Cram\'er--Rao bounds, two other alternative approaches for proving lower bounds in our setting of interest.
Further, for the families considered, we complement our lower bounds with matching upper bounds.
\end{abstract}

\ifnum\submission=1
\begin{keyword}[class=MSC2020]
\kwd[Primary ]{62F10} %
\kwd[; secondary ]{94A17} %
\kwd{68P27} %
\kwd{68P30} %
\end{keyword}

\begin{keyword}
\kwd{parametric estimation}
\kwd{minimax lower bounds}
\end{keyword}
\fi

\end{frontmatter}

\clearpage\pagenumbering{gobble} 
\section{Introduction}
\label{sec:intro}
We consider the problem of parameter estimation under {\em local information constraints},
where the estimation algorithm \jdnew{has access to} only limited information
about each sample. These constraints can be of various types, including communication
 constraints, where each sample must be described using
 a few (\eg{} constant number of) bits; (local) privacy constraints, where
each sample is obtained from a different user 
and the users seek
 to reveal as little as possible about their specific data;
 as well as many others, \eg{} noisy communication
 channels, or limited types of data access such as linear measurements.
Such problems have received a lot of attention in recent years,
motivated by applications such as data analytics in distributed
systems and federated learning.

Our main focus is on information-theoretic lower bounds for the minimax error rates
(or, equivalently, the sample complexity) of these problems. Several recent works have provided different bounds
that apply to specific constraints or work for specific parametric estimation problems,
sometimes without allowing for interactive protocols.
\newerest{
Indeed, handling interactive protocols is technically challenging, and several results in prior work exhibit flaws in their analysis. In particular, even the most basic Gaussian mean estimation problem using interactive communication remains, quite surprisingly, open.}

We present general, ``plug-and-play'' lower bounds for parametric
estimation under information constraints that can be used for any local information
constraint and allows for \emph{interactive} protocols. %
Our abstract bound requires very simple (and natural) assumptions to hold for the underlying
parametric family; in particular, we do not require technical ``regularity'' conditions
that are common in asymptotic statistics. \amargin{This last sentence is very vague. Do we have many examples in mind for such regularity conditions?}

We apply our general bound to \jdnew{canonical} problems of high-dimensional mean estimation
and distribution estimation, under privacy and communication constraints,
for the entire family of $\lp[\prm]$ loss functions for $\prm\ge1$. In addition, we provide
complementary schemes that show that our lower bounds are tight for most settings of interest.

We present more details about our main results in the next section.

\subsection{Our results}
  \label{ssec:results}
\jdnew{Our main contribution is a general approach to establish lower bounds in distributed information-constrained parameter estimation. The setup is described in detail in~\cref{sec:setting} and is illustrated in~\cref{fig:model}. In short, independent samples $X^\ns=(X_1, \dots, X_\ns)$ are generated from an unknown distribution $\p$ from a parametric family $\cP_\Theta=\{\p_\theta, \theta\in \Theta\}$ of distributions. Only limited information $Y_i$ about datum $X_i$ is available to the algorithm. The goal is to estimate the underlying parameter $\theta$ associated with $\p$. Furthermore, we consider interactive estimation, wherein $Y_i$ can depend on $Y_1,\ldots, Y_{i-1}$.}
Our general lower bound, which we develop in~\cref{sec:general-bound},
takes the following form: Consider a collection of
distributions $\{\p_z\}_{z\in\bool^\zdims}\subseteq \cP_\Theta$ contained
in the parametric family. This collection represents a ``difficult subproblem''
that underlies the parametric estimation problem
being considered; such constructions
are often used when deriving information-theoretic lower bounds.
Note that each coordinate of $z$ represents, in essence, a different ``direction''
of uncertainty for the parameter space. 
The difficulty of the estimation problem can be related to the difficulty
of determining a randomly chosen $z$ \jdnew{(or most of the coordinates of $z$)}, denoted $Z$, by observing samples
from $\p_z$. 
Once $Z=z$ is fixed, $\ns$
independent samples $X^\ns=(X_1, \dots, X_\ns)$ are generated from $\p_z$ and
the limited information $Y_i$ about $X_i$ is passed to an estimator.

Our most general result, stated as~\cref{lemma:per:coordinate}, is an upper bound for the average discrepancy, an average distance quantity related to average
probability of error in determining coordinates of $Z$
by observing the limited information $Y^\ns=(Y_1, \dots, Y_\ns)$. 
Our bounding term reflects the underlying information
constraints using a quantity that captures how ``aligned'' we can make our
information about the sample to the uncertainty in different coordinates of 
$Z$. 
Importantly, our results 
hold under minimal assumptions. In 
particular, in contrast to many previous works, our results do not require 
any 
``bounded ratio'' assumption on the 
collection $\{\p_z\}_{z\in\bool^\zdims}$, which would ask that the density 
function change by at most a constant factor if we modify one coordinate 
of 
$z$.
When we impose additional structure for $\p_z$ -- such
as orthogonality of the random changes in density when we modify
different coordinates of $z$ and, more stringently, independence
and subgaussianity of these changes -- we get concrete bounds
which \jdnew{are readily applicable to} different problems.
These plug-and-play bounds are stated as consequences of our main result in~\cref{thm:avg:coordinate}.

We demonstrate the versatility of the framework by showing that it \newest{readily yields} tight (and in some cases nearly tight) bounds for parameter estimation (both in the sparse and dense cases) for several \newest{fundamental} families of continuous and discrete distributions, several families of information constraints such as communication and local differential privacy (LDP), and for the family of $\lp[\prm]$ loss functions for $\prm\ge1$, all when interactive
protocols are allowed. To complement our lower bounds, we provide algorithms (protocols) which attain the stated rates, thus establishing optimality of our results.\footnote{Up to a logarithmic factor in the case of $\lp[\infty]$ loss, or, for some of our bounds, with a mild restriction on $\ns$ being large enough.}%
We discuss these results in~\cref{sec:applications}; see~\cref{theorem:mean:estimation:bernoulli,theorem:mean:estimation:gaussian,theorem:mean:estimation:discrete} for the corresponding statements.

\jdnew{In terms of the applications, our contributions are two-fold.} 

\jdnew{We obtain several results from a diverse set of prior works in a unified fashion as simple corollaries of our main result: As discussed
further in~\cref{ssec:previous}, the lower bounds for mean estimation
for product Bernoulli under $\lp[2]$ loss and those for estimation of discrete distributions
under $\lp[1]$ and $\lp[2]$ losses, for both communication and LDP
constraints, were known from previous work. However, our approach
allows us to easily recover those results and extend them to arbitrary
$\lp[\prm]$ losses, with interaction allowed, in a unified fashion.}

\jdnew{Our bounds also yield new lower bounds for some canonical problems. The prototypical example where this happens is for mean estimation for high-dimensional Gaussian distributions under information constraints. As discussed in the next section, while some prior work claimed lower bounds for this problem, their arguments appear to be flawed~--~at a high level, due to the ``bounded ratio'' assumption their techniques rely on, which Gaussian distributions do not satisfy, and which our framework does not require. To the best of our knowledge our work is the first to obtain those lower bounds for interactive mean estimation of high-dimensional Gaussian distributions under communication or local privacy constraints.}

We elaborate on prior work in the next section.

\subsection{Previous and related work}
  \label{ssec:previous} There is a significant amount of work in the
literature dedicated to parameter estimation under various
{constraints} and settings. Here, we restrict
our discussion
to works that are most
relevant to the current paper, with a focus on the interactive setting (either
the sequential or blackboard model; see~\cref{sec:setting} for definitions).

The work arguably closest to ours is the recent work~\cite{IIUIC},
which focuses on density estimation and 
goodness-of-fit testing of discrete distributions, under the $\lp[1]$
metric, for sequentially interactive protocols under general local
information constraints (including, as special cases, local privacy
and communication constraints, as in the present paper).
This work can be seen as a significant generalization of
the techniques of~\cite{IIUIC}, allowing us to obtain lower bounds for
estimation in a variety of settings, notably high-dimensional
parameter estimation.

Among other works on high-dimensional mean estimation under
communication constraints, \cite{GargMN14, BGMNW:16}
consider communication-constrained 
Gaussian mean estimation in  the blackboard communication model, under $\lp[2]$
loss. 
\jdnew{The protocols for the upper bounds in these works
do not require interactivity and are complemented with  
 lower bounds which show that the bounds are tight up to constant factors 
 in the dense case and up to logarithmic factors in the sparse case.} 
 \newest{However, the proof of the lower bound in~\cite{BGMNW:16} seems 
 to present a gap (specifically, in the truncation argument 
 of~\cite[Theorem~4.3]{BGMNW:16}), as confirmed in personal 
 communication with the authors.} \newerest{Correcting the issue in the truncation argument would lead to a result  significantly weaker than the claimed lower bound, %
 and it is 
 unclear  whether  this  can  be fixed using the techniques from that paper.}

In this 
 work, we present interactive protocols for the sparse case which improve 
 over the noninteractive protocols and \newest{strenghten the upper 
 bounds by a logarithmic factor in the interactive case} (we elaborate on this 
 in~\cref{rk:interactive:gap}). \newest{Further,  using our general 
 framework, we establish a nearly-matching lower bound for the problem, recovering the 
 rate lower bound originally claimed in~\cite{BGMNW:16} up to a logarithmic factor.}
In a slightly different setting,~\cite{Shamir:14}, among other things, considers
 the mean estimation problem for product Bernoulli distributions when the mean vector
 is $1$-sparse, under $\lp[2]$ loss. The lower bound in~\cite{Shamir:14}, too, allows sequentially
 interactive protocols.
In the blackboard
communication model,~\cite{HOW:18} and~\cite{HMOW-ISIT:18} obtained several tight
bounds for mean estimation and density estimation under $\lp[2]$ and
$\lp[1]$ loss, respectively.

Turning to local privacy,~\cite{JosephKMW19} provide upper bounds (as
well as some partial lower bounds) for one-dimensional Gaussian mean
estimation under LDP under the $\lp[2]$ loss, in the sequentially
interactive model. 
Recent works of~\cite{BHO:19} and~\cite{BCO:20} obtain lower bounds
for mean estimation in the blackboard communication model and under LDP,
respectively, for both Gaussian and product Bernoulli distributions;
 as well as density estimation for discrete distributions.
Their approach is based
 on the classic Cram\'er--Rao bound
and, as such, is tied inherently to the use of
 the $\lp[2]$ loss.  In a recent independent work,~\cite{SZ:21:ISIT} extended these methods to obtain lower bounds under general $\lp[\prm]$ loss under communication constraints, which are tight for Gaussian mean estimation under noninteractive protocols. 
\cite{DJW:17}, by developing a locally private counterpart of some
of the well-known
information-theoretic tools
for establishing statistical lower bounds
(namely, Le Cam, Fano, and
Assouad), establish tight or nearly tight bounds for several mean
estimation problems in the LDP setting. 

More recently, drawing on machinery from the communication complexity
literature,~\cite{DR:19} develop a methodology for proving lower
bounds under LDP constraints in the blackboard communication model.
They obtain lower bounds for mean estimation of product Bernoulli 
distributions
under general $\lp[\prm]$ losses which match ours (in the high-privacy regime, \ie small $\priv$). Similar to the results under 
communication constraints~\cite{GargMN14, BGMNW:16}, their approach 
relies 
heavily on the assumption that the distributions on each coordinate are 
independent, which fails 
to generalize 
to discrete distributions. Moreover, their 
bounds are tailored
to the LDP constraints and do not seem to extend to arbitrary information constraints. \newest{Finally, while~\cite{DR:19} also claims tight
bounds for mean estimation of Gaussian and sparse Gaussian
distributions under the $\lp[2]$ loss, their argument invokes the analogous (flawed) result from~\cite{BGMNW:16} and as such it is unclear whether the stated lower bound can be shown using their techniques.}

Finally, we mention that very recently, following the appearance of~\cite{IIUIC}, an updated version
of~\cite{HOW:18} appeared online as~\cite{HOW:18:v3}, which has
similar results as ours for the high-dimensional mean estimation problem under communication constraints. Both our work and~\cite{HOW:18:v3} build upon the framework presented for the discrete setting in~\cite{IIUIC}.

\subsection{Notation}
  \label{sec:preliminaries}
Hereafter, we write $\log$ and $\ln$ for the binary and natural
logarithms, respectively. For distributions
$\p_1,\p_2$ over $\cX$, denote
their Kullback--Leibler divergence (in nats) by
$
\kldiv{\p_1}{\p_2}
$, 
and their Hellinger distance by
\begin{equation*}
\hellinger{\p_1}{\p_2} \eqdef \sqrt{ \frac{1}{2}\int \Paren{ \sqrt{\dv{\p_1}{\lambda}} - \sqrt{\dv{\p_2}{\lambda}} }^2 \dd{\lambda} }\,,
\end{equation*} 
where we assume $\p_1,\p_2 \ll \lambda$ for some underlying measure 
$\lambda$ on $\cX$.
Further, we denote the Shannon entropy of a random variable $X$ by $H(X)$
and the mutual information between $X$ and $Y$ by $\mutualinfo{X}{Y}$; 
we will sometimes write $H(\p)$ for the entropy of a random variable with 
distribution $\p$. We refer the reader to \cite{CoverThomas:06} for details 
on these notions and their properties, which will be used throughout. Given 
two functions $f,g$, we write $f \lesssim g$ if there exists an absolute 
constant $C>0$ such that $f(x) \leq C g(x)$ for all $x$; and $f\asymp g$ if 
$f \lesssim g$ and $f \gtrsim g$ both hold. Finally, we will extensively use 
the standard asymptotic notation $\bigO{f}$, $\bigOmega{f}$, 
$\bigTheta{f}$.

\subsection{Organization} In \cref{sec:setting}, we formalize our setting of 
interactive inference 
under local information constraints. The general lower bound framework 
and result are 
presented in 
\cref{sec:general-bound}. In \cref{ss:plug-play}, we provide 
implications of the general result under additional structures  and specific 
information constraints. We then compare our 
approach with existing techniques in \cref{sec:relation}. Finally, we use 
our 
framework to derive lower bounds for a wide range of applications in 
\cref{sec:applications}.

\section{The setup}
\label{sec:setting}
We consider standard parametric estimation problems.
For some $\Theta\subseteq\R^\dims$, let $\cP_\Theta=\{\p_\theta, \theta\in \Theta\}$ be a family of distributions over some measurable space $(\cX, \saX)$, namely each $\p_\theta$ is a distribution over $(\cX, \saX)$. 
Suppose $\ns$ independent samples $X^\ns=(X_1,\ldots, X_{\ns})$
from an unknown $\p_\theta\in \cP_\Theta$ are obtained.
The goal in parametric estimation is to design estimators $\hat{\theta}:\cX^\ns \to \Theta$, and form estimates $\hat{\theta}(X^\ns)$ of $\theta$
using independent samples $X^\ns$ from $\p_\theta$.
We illustrate our results using two specific distribution families: discrete probability mass functions (pmfs) and
high-dimensional product distributions with unknown mean vectors. We will describe the precise minimax setting in detail later in this section.

We are interested in an information-constrained setting, where we do not
have direct access to the samples $X^\ns$ from
$\p_\theta$. Instead, we can only obtain limited information about each
\newest{datapoint $X_i$}. 
Following~\cite{AcharyaCT:IT1}, we model
these information constraints by specifying an allowed set
of \emph{channels} $\cW$ with input alphabet $\cX$ and some output space $\cY$.\footnote{Formally, a channel is a Markov kernel $W\colon \saY\times \cX\to [0,1]$, which we assume to be absolutely continuous with respect to some underlying measure $\mu$ on $(\cY,\saY)$. When clear from the context, we will drop the reference to the $\sigma$-algebras $\saX$ and $\saY$; in particular, in the case of finite $\cX$ or $\cY$. } 
Each sample $X_i$ is passed through a channel from $\cW$, chosen appropriately,
and its output $Y_i$ is the observation we get.
This setting is quite general and captures as special cases the
popular communication and privacy constraints, as we will describe momentarily.

We now formally describe the setting, which is illustrated in~\cref{fig:model}. $\ns$ \iid samples $X_1, \ldots, X_{\ns}$ from an unknown distribution $\p_\theta\in \cP_\Theta$ are observed by players (users) where player $t$
observes $X_t$. Player $t\in [\ns]$ selects a channel $W_t \in \cW$
and sends the message $Y_t$ to a referee, where $Y_t$ is drawn from the probability measure $W_t(\cdot\mid X_t, Y_1,\dots,Y_{t-1})$.
The referee observes $Y^\ns\eqdef (Y_1, \ldots, Y_{\ns})$ and seeks to estimate
the parameter $\theta$.

The freedom allowed in the choice of $W_t$ at the players gives
rise to various communication protocols. We focus on \emph{interactive protocols}, where
channels are chosen by one player at a time, and they can use all
previous messages to make this choice. We describe this class of
protocols below, where we further allow each player $t$ to have a different set of constraints $\cW_t$ (\eg a different communication budget), and $W_t$ must be in $\cW_t$. For simplicity of exposition, and as these already encapsulate most of the difficulties, we focus here on the case of \emph{sequentially} interactive protocols, which has been widely considered in the literature and captures many settings of interest. However, we emphasize that our results extend to the more general class of fully interactive protocols, which we define and address in the Supplement (Appendix A).
\begin{definition}[Sequentially Interactive Protocols]
  \label{def:protocol}
Let $X_1, \dots, X_{\ns}$ be \iid samples from $\p_\theta$, $\theta\in \Theta$.
A \emph{sequentially interactive protocol $\Pi$ using $\cW^\ns=(\cW_1,\dots,\cW_\ns)$} involves
mutually independent random variables $U,U_1,\dots,U_\ns$ (independent 
of the input $X_1, \dots, X_{\ns}$) and
mappings $g_t\colon (U,U_t)
\mapsto W_t\in \cW_t$ for selecting
the channel in round $t\in[\ns]$. In round $t$, player $t$
uses the channel $W_t$ to produce the message (output) $Y_t$
according to the probability distribution $W_t(\cdot\mid X_t,Y_1,\dots,Y_{t-1})$.  The messages
$Y^\ns=(Y_1, \ldots, Y_{\ns})$ received by the referee and the
public randomness $U$ (available to all players) constitute the \emph{transcript} of the protocol $\Pi$; the private randomness $U_1,\dots,U_\ns$ (where $U_t$ is local to player $t$) is not part of the transcript.
\end{definition}
In other words, the channel used by player $t$ is a Markov kernel 
\[
    W_t\colon \saY_t\times \cX\times \cY^{t-1}\to[0,1]\,,
\]
with $\cY_t \subseteq \cY$, and $\saY_t$ being a suitable $\sigma$-algebra on $\cY_t$; and, for player $t\in[\ns]$, the family $\cW_{t}$ captures the possible channels allowed to the player. In what follows, to simplify notation we will equivalently see the channel at player $t$ as a (randomized) mapping $W\colon \cX\times \cY^{t-1} \to \cY$, which on input $x\in\cX$ and the previous $t-1$ messages $y^{t-1}\in\cY^{t-1}$ outputs some $y\in\cY$.  

For concreteness, we now instantiate this definition for the two aforementioned types of information constraints, communication and (local) privacy. 

\paragraph*{Communication constraints} Let $\cY \eqdef \{0,1\}^\ast = \bigcup_{m=0}^\infty \{0,1\}^m$. For $\numbits \geq 1$ and $t\geq 1$, let 
\begin{equation}
    \label{eq:comm}
  \cWcomm[\numbits]\eqdef \{W\colon \cX\times \cY^\ast\to\{0,1\}^\numbits\}
\end{equation}
 be the family of channels with input alphabet $\cX$ and output alphabet the set of all $\numbits$-bit strings. This captures the constraint where the message from each player
can be at most $\numbits$ bits long, and corresponds to the choice $\cW^\ns=(\cWcomm[\numbits],\dots,\cWcomm[\numbits])$. Note that allowing a different communication budget to each player can be done by setting $\cW^\ns=(\cWcomm[\numbits_1],\dots,\cWcomm[\numbits_\ns])$.
  
\paragraph*{Local differential privacy constraints} For $\priv>0$ and $t\geq 1$, a channel $W\colon \cX\times \cY^{t-1}\to\cY$ is \emph{$\priv$-locally differentially private (LDP)}~\cite{EvfimievskiGS:03,DMNS:06,KLNRS:11} if 
\begin{equation}
    \label{eq:ldp}
\sup_{S\in \saY}\sup_{y^{t-1}\in\cY^{t-1}}\frac{W(S\mid x_1, y^{t-1})}{W(S\mid x_2, y^{t-1})} \leq
  e^{\priv}, \quad\forall x_1,x_2\in\cX.\,
\end{equation}
We denote by $\cWpriv[\priv]$ the set of all $\priv$-LDP channels.
For sequentially interactive protocols, the $\priv$-LDP condition is captured by setting $\cW^\ns=(\cWpriv[\priv],\dots,\cWpriv[\priv])$. As before, one can allow different privacy parameters for each player by setting
$\cW^\ns=(\cWpriv[\priv_1],\dots,\cWpriv[\priv_\ns])$.
\smallskip

Finally, we formalize the interactive parametric estimation problem
for the family $\cP_\Theta=\{\p_\theta, \theta\in \Theta\}$.
We consider the problem of estimating $\theta$ under
$\lp[\prm]$ loss. For $\prm\in[1,\infty)$, the $\lp[\prm]$ distance between $u,v\in\R^\dims$ is
\begin{equation}
\lp[\prm](u,v) = \norm{u-v}_\prm = \Paren{\sum_{i=1}^\dims 
\abs{u_i-v_i}^\prm}^{1/\prm}\,.
\nonumber
\end{equation}
This definition extends in a natural way to $\prm=\infty$ by taking the limit.\footnote{\label{ft:holder:logd}Moreover, as a consequence of H\"older's inequality, we
have $\lp[\infty](u,v)\leq \lp[\prm](u,v) \leq \dims^{1/\prm}\lp[\infty](u,v)$ for all $\prm \geq 1$ and $u,v\in\R^\dims$, which implies that for $\prm \eqdef \log \dims$
we have
$\lp[\infty](u,v)\leq \lp[\prm](u,v) \leq 2\lp[\infty](u,v)$. That is,
$\lp[\log \dims](u,v)$ provides a factor-2 approximation of the $\lp[\infty]$ loss. This further extends to $s$-sparse vectors, with a factor $s^{1/\prm}$ instead of $\dims^{1/\prm}$.}
\begin{definition}[Sequentially Interactive Estimates]\label{def:estimate}
Fix $\dims\in \N$ and $\prm\in [0, \infty]$.
Given a family $\cP_\Theta$ of distributions on $\cX$,
with $\Theta\subset \R^\dims$,
an \emph{estimate} for $\cP_\Theta$
consists of a sequentially interactive protocol $\Pi$ with transcript
$(Y^\ns, U)$ and estimator $\hat{\theta}\colon(Y^\ns,
U)\mapsto \hat{\theta}(Y^\ns, U)\in \Theta$.  The referee observes the
transcript $(Y^\ns, U)$ and forms the estimate $\hat{\theta}(Y^\ns,
U)$ of the unknown $\theta$. Further, for $\ns\in \N$ and $\dst\in (0,1)$, $(\Pi, \hat{\theta})$ constitutes an {\em $(\ns, \dst)$-estimator} for $\cP_\Theta$
using $\cW$
under $\lp[\prm]$ loss if
for every $\theta\in \Theta$ the transcript $(Y^\ns, U)$ of $\Pi$ satisfies
\[
\bE{\p_\theta^\ns}{\lp[\prm](\theta,\hat{\theta}(Y^\ns, U))^\prm}^{1/\prm}\leq \dst.
\]
Note that the expectation is over the input $X^\ns\sim \p_\theta^\ns$ for
the protocol $\Pi$ and the randomness of $\Pi$. 
\end{definition}
\noindent We illustrate the setting of sequentially interactive protocols in~\cref{fig:model}.
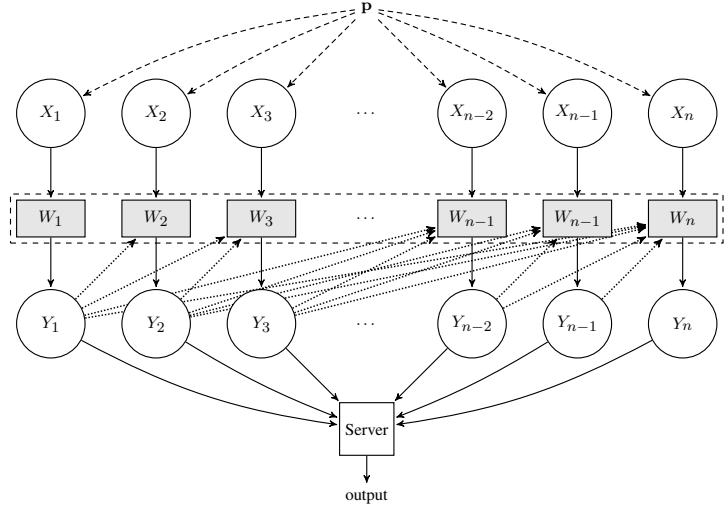
\begin{figure}[ht!]\centering
\scalebox{0.7}{
\begin{tikzpicture}[->,>=stealth',shorten >=1pt,auto,node distance=20mm, semithick]
  \node[circle,draw,minimum size=13mm] (A) {$X_1$};
  \node[circle,draw,minimum size=13mm] (B) [right of=A] {$X_2$};
  \node[circle,draw,minimum size=13mm] (BB) [right of=B] {$X_3$};
  \node (C) [right of=BB] {$\dots$};
  \node[circle,draw,minimum size=13mm] (DD) [right of=C] {$X_{\ns-2}$};
  \node[circle,draw,minimum size=13mm] (D) [right of=DD] {$X_{\ns-1}$};
  \node[circle,draw,minimum size=13mm] (E) [right of=D] {$X_\ns$};
  
  \node[rectangle,draw,minimum width=13mm,minimum height=7mm,fill=gray!20!white] (WA) [below of=A] {$W_1$};
  \node[rectangle,draw,minimum width=13mm,minimum height=7mm,fill=gray!20!white] (WB) [below of=B] {$W_2$};
  \node[rectangle,draw,minimum width=13mm,minimum height=7mm,fill=gray!20!white] (WBB) [below of=BB] {$W_3$};
  \node[rectangle,minimum width=13mm,minimum height=7mm,fill=none] (WC) [below of=C] {$\dots$};
  \node[rectangle,draw,minimum width=13mm,minimum height=7mm,fill=gray!20!white] (WDD) [below of=DD] {$W_{\ns-2}$};
  \node[rectangle,draw,minimum width=13mm,minimum height=7mm,fill=gray!20!white] (WD) [below of=D] {$W_{\ns-1}$};
  \node[rectangle,draw,minimum width=13mm,minimum height=7mm,fill=gray!20!white] (WE) [below of=E] {$W_\ns$};
  
  \node[draw,dashed,fit=(WA) (WB) (WBB) (WC) (WDD) (WD) (WE)] {};
  \node[circle,draw,minimum size=13mm,fill=white] (YA) [below of=WA] {$Y_1$};
  \node[circle,draw,minimum size=13mm,fill=white] (YB) [below of=WB] {$Y_2$};
  \node[circle,draw,minimum size=13mm,fill=white] (YBB) [below of=WBB] {$Y_3$};
  \node[circle,minimum size=13mm] (YC) [below of=WC] {$\dots$};
  \node[circle,draw,minimum size=13mm,fill=white] (YDD) [below of=WDD] {$Y_{\ns-2}$};
  \node[circle,draw,minimum size=13mm,fill=white] (YD) [below of=WD] {$Y_{\ns-1}$};
  \node[circle,draw,minimum size=13mm,fill=white] (YE) [below of=WE] {$Y_\ns$};
  
  \begin{scope}[on background layer]
  \draw[->] (YA) edge[densely dotted, thick] (WB);
  \draw[->] (YA) edge[densely dotted, thick] (WBB);
  \draw[->] (YA) edge[densely dotted, thick] (WDD);
  \draw[->] (YA) edge[densely dotted, thick] (WE);
  \draw[->] (YB) edge[densely dotted, thick] (WBB);
  \draw[->] (YB) edge[densely dotted, thick] (WDD);
  \draw[->] (YB) edge[densely dotted, thick] (WD);
  \draw[->] (YB) edge[densely dotted, thick] (WE);
  \draw[->] (YBB) edge[densely dotted, thick] (WDD);
  \draw[->] (YBB) edge[densely dotted, thick] (WD);
  \draw[->] (YBB) edge[densely dotted, thick] (WE);
  \draw[->] (YDD) edge[densely dotted, thick] (WD);
  \draw[->] (YDD) edge[densely dotted, thick] (WE);
  \draw[->] (YD) edge[densely dotted, thick] (WE);
  \end{scope}
  
  \node (P) [above of=C] {$\p$};
  \node[rectangle,draw, minimum size=10mm] (R) [below of=YC] {Referee};
  \node (out) [below of=R,node distance=13mm] {output};

  \draw[->] (P) edge[densely dashed,bend right=10] (A)(A) edge (WA)(WA) edge (YA)(YA) edge[bend right=10] (R);
  \draw[->] (P) edge[densely dashed,bend right=5] (B)(B) edge (WB)(WB) edge (YB)(YB) edge[bend right=5] (R);
  \draw[->] (P) edge[densely dashed] (BB)(BB) edge (WBB)(WBB) edge (YBB)(YBB) edge (R);
  \draw[->] (P) edge[densely dashed] (DD)(DD) edge (WDD)(WDD) edge (YDD)(YDD) edge (R);
  \draw[->] (P) edge[densely dashed,bend left=5] (D)(D) edge (WD)(WD) edge (YD)(YD) edge[bend left=5] (R);
  \draw[->] (P) edge[densely dashed,bend left=10] (E)(E) edge (WE)(WE) edge (YE)(YE) edge[bend left=10] (R);
  \draw[->] (R) edge (out);
\end{tikzpicture}
}
\caption{The information-constrained distributed model. In
the interactive setting, $W_t$ can depend on the previous
messages $Y_1, \ldots, Y_{t-1}$ (dotted, upwards arrows).} 
\label{fig:model}
\end{figure}

\begin{remark}
We note that interactive protocols are more general than \emph{simultaneous
message passing} (SMP) protocols, where the channels are chosen by the
players simultaneously, and in particular without the knowledge of
the messages of the other players.
SMP protocols  can themselves be classified into public- and private-coin protocols 
based on whether or not the common random variable $U$ is available to the players; see \cite{AcharyaCT:IT1} for definitions.
\end{remark}
\hnote{We show mention that $g_t$ can be randomized using private randomness. The remark
about public- and private-coin protocols is confusing since even private coin protocols
can have local randomness shared with the center. Can't they?}
\cnote{I have modified the main definition (\cref{def:protocol}) to incorporate private randomness. I think our definition of private randomness precludes sharing with the center: see Definition I.1 in~\cite{AcharyaCT:IT1}, ``The central referee $\mathcal{R}$ does not have access to the realization [...].''}

\section{Main result: The information contraction bound}
\label{sec:general-bound}
Our main result is a unified framework to bound the information revealed 
about the unknown $\theta$ by the transcript of the messages obtained via the constraints defined by the channel family $\cW$. 
The framework is versatile and provides tight bounds for several families of continuous and discrete distributions, several families of information constraints such as communication and local differential privacy, and for the family of $\ell_{\prm}$ loss functions for $\prm\ge 1$. 

Our approach at a high-level proceeds as below: We first consider the ``pertubation space'' $\cZ\eqdef\bool^\zdims$, for some suitable $\zdims$. We associate with each $z\in\cZ$ a parameter $\theta_z\in\Theta$, and refer to $\p_{\theta_z}$ simply as $\p_z$. These distributions are designed in a way that the distance between ${\theta_z}$ and ${\theta_{z'}}$ is large when the Hamming distance between $z$ and $z'$ is large. With this, the difficulty of estimating $\theta$ will be captured in the difficulty of estimating the associated $z$. This will make our approach compatible with the standard Assouad's method for deriving lower bounds ($cf.$~\cite{Yu:97}).

Then, we let $Z=(Z_1, \ldots, Z_\zdims)$ be a random variable
over $\cZ$. Under some assumptions on the distribution of $Z$, we will bound the
information between the individual $Z_i$s and the transcript $(Y^\ns,
U)$ induced by a family of channels $\cW$.  
Combining the two steps above provides us with the desired lower bounds. \medskip

Formally, let $\cZ\eqdef\bool^\zdims$ for some $\zdims$ and
$\{\p_z\}_{z\in\cZ}$ (where $\p_z=\p_{\theta_z}$) be a collection of
distributions over $\cX$, indexed by $z\in\cZ$.  For $z\in\cZ$,
denote by
$z^{\oplus i}\in\cZ$ the vector obtained by flipping the sign of the
$i$th coordinate of $z$. To bound the information that can be obtained
about the underlying $z$ from the observations, we make the following
assumptions:
\begin{assumption}[Densities Exist]
  \label{assn:decomposition-by-coordinates}
For every $z\in\cZ$ and $i\in[\zdims]$ it holds that $\p_{z^{\oplus i}} \ll \p_{z}$, and
there exist measurable functions $\phi_{z,i}\colon\cX\to\R$
such that
\[
 \dv{\p_{z^{\oplus i}}}{\p_z}=1+\phi_{z,i}.
 \]
\end{assumption}
\jdnew{The functions $\phi_{z,i}$ capture the change in density when the coordinate $i$ is flipped.}
In our applications below, we will have discrete distributions or continuous densities, and the Radon--Nikodym derivatives above can be replaced with the corresponding ratios between the pmfs and pdfs, respectively.

\new{
\begin{assumption}[Orthogonality]
  \label{assn:orthonormal}
 	There exists some $\kappavar\geq 0$ such that, for all $z\in\cZ$ and distinct $i,j\in[\zdims]$,
$\bE{\p_{z}}{\phi_{z,i}\phi_{z,j}}=0$ and $\shortexpect_{\p_z}[\phi_{z,i}^2] \leq \kappavar$.
\end{assumption}
}
Note that from~\cref{assn:decomposition-by-coordinates} we
have that $\bE{\p_{z}}{\phi_{z,i}}=0$ for each $i$. In conjunction
with~\cref{assn:orthonormal} this implies that for any fixed
$z\in\cZ$, the family $(1,\phi_{z,1},\dots,\phi_{z,\zdims})$ is
\new{orthogonal and uniformly bounded} in $L^2(\cX,\p_z)$.
Taken together,~\cref{assn:decomposition-by-coordinates} and~\cref{assn:orthonormal}
roughly say that 
the densities can be decomposed into uncorrelated ``perturbations''
across coordinates of $\cZ$.
In later sections, we will show that for several families, such as
discrete distributions, product Bernoulli distributions, and spherical
Gaussians, well-known constructions for lower bounds satisfy these
assumptions.

Our first bound given in~\eqref{eqn:main-bound} only requires~\cref{assn:decomposition-by-coordinates}; by imposing the additional structure of~\cref{assn:orthonormal}, we obtain the more specialized bound given in~\eqref{eqn:var-bound}.
Interestingly,~\eqref{eqn:main-bound}
can be strengthened further when the following
subgaussianity assumption holds.
\begin{assumption}[Subgaussianity]
  \label{assn:subgaussianity} There exists some $\sigma\geq 0$ such
that, for all $z\in\cZ$, the random vector $\phi_z(X)\eqdef
(\phi_{z,i}(X))_{i\in[\zdims]}\in\R^\zdims$ is $\sigma^2$-subgaussian
for $X\sim\p_z$.\footnote{Recall that a random variable $Y$ is
$\sigma^2$-subgaussian if $\bEE{Y}=0$ and \smash{$\shortexpect[e^{\lambda Y}]\leq
e^{\sigma^2\lambda^2/2}$} for all $\lambda\in\R$; and that a
vector-valued random variable $Y$ is $\sigma^2$-subgaussian if its
projection \smash{$\dotprod{Y}{v}$} is \smash{$\sigma^2$}-subgaussian for every unit
vector $v$.}
\end{assumption}

\begin{remark}
  We note that the dimension of the \emph{perturbation} space $\cZ$,
  denoted $\zdims$, need not be the same as the dimension of
  the \emph{parameter} space $\Theta$, which is $\dims$. However, they
  will typically be related, and for many applications in the present
  paper we will have $\zdims=\dims$ or
  $\zdims=\dims/2$.
\end{remark}

Let $Z=(Z_1,\ldots,Z_k)$ be a random variable over $\cZ$ such that  
$\bPr{Z_i=1}=\sparam$ 
for all $i\in[\zdims]$ and the $Z_i$s are all independent;
we denote this distribution by $\rademacher{\sparam}^{\otimes\zdims}$.
Our
main result is an upper bound on the average amount of information
that can be obtained about a coordinate of $Z$ from the transcript
$(Y^\ns, U)$ of a sequentially interactive protocol, as a
function of the information constraint channels and
$\phi_{Z,i}$s.  This result only
requires~\cref{assn:decomposition-by-coordinates}. We then specialize
it to derive~\cref{thm:avg:coordinate}, by
invoking~\cref{assn:orthonormal,assn:subgaussianity}.
\begin{theorem}[Information contraction bound: The technical form]
  \label{lemma:per:coordinate} 
  Fix $\sparam\in(0,1/2]$. Let $\Pi$ be a sequentially interactive
protocol using $\cW^\ns$, and let $Z$ be a random variable on $\cZ$ with distribution $\rademacher{\sparam}^{\otimes\zdims}$. Let $(Y^\ns,U)$ be the transcript of $\Pi$ when the input
$X_1, \ldots, X_\ns$ is i.i.d.\ with common distribution $\p_Z$\zcolor{, 
with density function $\p_Z^{Y^\ns}$}.
Then, under~\cref{assn:decomposition-by-coordinates},
\begin{align}
\label{eqn:main-bound}
&\Paren{\frac{1}{\zdims}\sum_{i=1}^\zdims\totalvardist{\p_{+i}^{Y^\ns}}{\p_{-i}^{Y^\ns}}}^2
\le    
\frac{\new{7}}{\zdims}\sum_{t=1}^\ns\max_{z\in\cZ}\max_{W\in\cW_t} \sum_{i=1}^\zdims \int_{\cY} \frac{\bE{\p_{z}}{\phi_{z,i}(X)W(y\mid X)}^2}{\bE{\p_{z}}{W(y\mid X)}} \dd{\mu} \,, 
\end{align}
where $\p_{+i}^{Y^\ns}\eqdef \bEEC{\p_Z^{Y^\ns}}{Z_i= + 1}$, 
$\p_{-i}^{Y^\ns}\eqdef \bEEC{\p_Z^{Y^\ns}}{Z_i= - 1}$.
\end{theorem}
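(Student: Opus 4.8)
The plan is to run an Assouad-type argument: interpret the left-hand side of~\eqref{eqn:main-bound} as (the square of) the average success in recovering a coordinate of $Z$ from the transcript, and control this quantity one round at a time, using \cref{assn:decomposition-by-coordinates} to turn single-coordinate density perturbations into the ``channel alignment'' integrals on the right. Throughout I would condition on the public randomness $U$, so that each channel $W_t$ is a deterministic function of $(Y^{t-1},U_t)$; since the right-hand side does not depend on $U$ and total variation is jointly convex, it suffices to prove the bound for each fixed $U$ and then average. A routine Jensen step gives $\Paren{\tfrac1\zdims\sum_{i}\totalvardist{\p_{+i}^{Y^\ns}}{\p_{-i}^{Y^\ns}}}^2\le \tfrac1\zdims\sum_i\totalvardist{\p_{+i}^{Y^\ns}}{\p_{-i}^{Y^\ns}}^2$, so it is enough to bound the average of the squared total variations.

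Next I would pass from total variation to a $\chi^2$-divergence against the \emph{marginal} transcript $\p^{Y^\ns}\eqdef\sparam\,\p_{+i}^{Y^\ns}+(1-\sparam)\p_{-i}^{Y^\ns}$. With $r_i\eqdef \dv{\p_{+i}^{Y^\ns}}{\p^{Y^\ns}}$ one has $\p_{+i}^{Y^\ns}-\p_{-i}^{Y^\ns}=\tfrac{r_i-1}{1-\sparam}\,\p^{Y^\ns}$, so by Cauchy--Schwarz and $\sparam\le 1/2$,
\[
\totalvardist{\p_{+i}^{Y^\ns}}{\p_{-i}^{Y^\ns}}^2\le \tfrac1{4(1-\sparam)^2}\bE{\p^{Y^\ns}}{(r_i-1)^2}\le \chi^2\Paren{\p_{+i}^{Y^\ns}\,\big\|\,\p^{Y^\ns}}\,.
\]
I would then decompose this $\chi^2$ over the $\ns$ rounds using the Doob martingale $r_i^{(t)}(y^t)\eqdef\dv{\p_{+i}^{Y^t}}{\p^{Y^t}}(y^t)=\Pr[Z_i=1\mid Y^t=y^t]/\sparam$, whose increments are orthogonal under $\p^{Y^\ns}$:
\[
\chi^2\Paren{\p_{+i}^{Y^\ns}\,\big\|\,\p^{Y^\ns}}=\sum_{t=1}^\ns \bE{\p^{Y^{t-1}}}{\, r_i^{(t-1)}(Y^{t-1})^2\cdot \chi^2\Paren{\p_{+i}^{Y_t|Y^{t-1}}\,\big\|\,\p^{Y_t|Y^{t-1}}}}\,.
\]
Summing over $i$ and exchanging the order of summation, everything reduces to a per-round estimate: for each fixed prefix $y^{t-1}$ (which fixes the channel $W=W_t(y^{t-1})\in\cW_t$ and the posterior of $Z$), show that $\sum_i r_i^{(t-1)}(y^{t-1})^2\,\chi^2\Paren{\p_{+i}^{Y_t|Y^{t-1}}\,\big\|\,\p^{Y_t|Y^{t-1}}}$ is at most $7\alpha^2\max_{z\in\cZ}\max_{W'\in\cW_t}\sum_i\int_\cY\frac{\bE{\p_z}{\phi_{z,i}(X)W'(y|X)}^2}{\bE{\p_z}{W'(y|X)}}\dd{\mu}$.

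For this per-round estimate I would use \cref{assn:decomposition-by-coordinates} as a linearization tool. Conditioned on $Y^{t-1}=y^{t-1}$, the laws $\p_{\pm i}^{Y_t|Y^{t-1}}$ are the pushforwards through $W$ of the posterior mixtures $\mu_{\pm i}\eqdef\bEEC{\p_Z}{Z_i=\pm1,\,Y^{t-1}=y^{t-1}}$; pairing each $z$ with $z^{\oplus i}$ and substituting $\dv{\p_{z^{\oplus i}}}{\p_z}=1+\alpha_{z,i}\phi_{z,i}$ expresses the signed measure $\mu_{+i}-\mu_{-i}$ as a combination, of controlled total weight, of the elementary perturbations $\alpha_{z,i}\phi_{z,i}\,\p_z$ (with $|\alpha_{z,i}|\le\alpha$), plus a term recording how the posterior of $Z_i$ moves with $Z_{-i}$. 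Applying Cauchy--Schwarz to the output integral $\int_\cY(\,\cdot\,)^2/\p^{Y_t|Y^{t-1}}(y)\,\dd{\mu}$ then produces precisely the alignment integrals $\int_\cY \bE{\p_z}{\phi_{z,i}(X)W(y|X)}^2/\bE{\p_z}{W(y|X)}\,\dd{\mu}$; the average over the posterior of $Z$ is discarded in favour of its worst case over $z\in\cZ$, and the adaptively chosen channel in favour of $\max_{W'\in\cW_t}$ --- crucially, \emph{keeping the sum over $i$ inside the maximum over channels}, which is exactly the improvement over a pair-by-pair Assouad bound (that would instead yield $\sum_i\max_{W'}$). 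Reassembling the rounds, averaging over $U$, and tracking constants then gives~\eqref{eqn:main-bound}, with $7$ absorbing the Cauchy--Schwarz and $\sparam$-bookkeeping losses.

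I expect the per-round estimate under interaction to be the main obstacle, for two reasons. First, the posterior law of $Z$ given a transcript prefix need not factor across coordinates, so the perturbation directions seen by the adaptively chosen channel are genuine mixtures rather than the orthonormal $\phi_{z,i}$, and one must argue that the cross-coordinate term is benign --- morally, any correlation it carries between $Z_i$ and $Z_{-i}$ has already been accounted for in earlier rounds. Second, one has to keep $\sum_i$ inside $\max_{W'\in\cW_t}$, which prevents conditioning the mixture over the off-coordinates $Z_{-i}$ away, while simultaneously controlling the martingale weights $r_i^{(t-1)}(Y^{t-1})^2$: their expectation under $\p^{Y^{t-1}}$ is $1$, but their second moment is not a priori bounded, so some care (and the particular structure of the $\rademacher{\sparam}$ prior with $\sparam\le1/2$) is needed to make the round contributions add up to the stated sum rather than compound across rounds. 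Reconciling these two requirements is the heart of the proof.
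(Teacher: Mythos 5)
Your plan diverges from the paper's proof and, as written, has a genuine gap --- and it is exactly the one you flag in your last paragraph. The $\chi^2$/Doob-martingale decomposition is a correct identity, but each per-round term carries the weight $r_i^{(t-1)}(Y^{t-1})^2$, whose expectation under the transcript marginal is $1+\chi^2\bigl(\p_{+i}^{Y^{t-1}}\,\big\Vert\,\p^{Y^{t-1}}\bigr)$, i.e.\ it compounds the information already extracted about $Z_i$ in earlier rounds. To arrive at the stated right-hand side you would need these weights to be uniformly bounded, which in general requires exactly the kind of bounded-likelihood-ratio assumption the theorem is designed to avoid (\cref{assn:decomposition-by-coordinates} bounds $\abs{\alpha_{z,i}}$, not the ratios of posterior transcript densities). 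Similarly, your per-round estimate works with the posterior mixtures $\mu_{\pm i}$, which after conditioning on $Y^{t-1}$ do not factor across coordinates; the assertion that the cross-coordinate term ``has already been accounted for in earlier rounds'' is the crux and is not supplied. So the proposal is a plausible programme rather than a proof.

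The paper takes a different route that sidesteps both obstacles. After Cauchy--Schwarz and $\mathrm{TV}^2\le 2\,\hellinger{\cdot}{\cdot}^2$, it uses \emph{joint convexity of the squared Hellinger distance} to replace the partial mixtures $\p_{\pm i}^{Y^\ns}$ by pairs $(\p_z^{Y^\ns},\p_{z^{\oplus i}}^{Y^\ns})$ for \emph{fixed} $z$, so that only product input distributions ever appear and no posterior of $Z$ has to be analyzed. The per-round decomposition is then supplied by the Hellinger cut-paste/one-off bound of Jayram (\cref{lem:hellinger:oneoff}), which is genuinely subadditive across rounds with absolute constant $c_{\rm H}<7$ --- this is where the constant $7$ in the statement comes from, not from Cauchy--Schwarz bookkeeping --- and in particular involves no compounding weights. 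Adaptivity of the channels is handled by marginalizing out the rounds after $t$ and taking a supremum over message prefixes $y^{t-1}$ (using that $W^{y^{t-1}}\in\cW_t$), which is what keeps $\sum_i$ inside $\max_{W\in\cW_t}$; finally the elementary bound $(\sqrt a-\sqrt b)^2\le (a-b)^2/a$ together with \cref{assn:decomposition-by-coordinates} produces the alignment integrals. If you wish to salvage your $\chi^2$ route, the missing ingredient is a non-compounding per-round inequality playing the role of the cut-paste lemma; without it the argument does not close.
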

\begin{remark}[Average discrepancy and mutual information] The  
\emph{average discrepancy}, as defined in 
	the left hand side of \cref{eqn:main-bound}, characterizes the average 
	information the transcript carries about each coordinate of $Z$, which 
	has to be large for any valid parameter estimation protocol as we show in 
	\cref{sec:assouad}.
	It is not hard to show that, 
	borrowing the notation of~\cref{lemma:per:coordinate}, the 
	following 
	inequalities hold:
	\begin{equation}
	\tau\cdot \Paren{\frac{1}{\zdims}\sum_{i=1}^\zdims\totalvardist{\p_{+i}
			^{Y^\ns}}{\p_{-i}^{Y^\ns}}}^2 
	\leq \frac{1}{\zdims} \sum_{i=1}^{\zdims} \mutualinfo{Z_i}{Y^\ns}
	\leq \frac{1}{\zdims} \mutualinfo{Z}{Y^\ns},
	\end{equation}
	which suggests that getting a good upper bound on average mutual 
	information is harder than getting an upper bound on average discrepancy.
\end{remark}
\begin{proofof}{Main result:~\cref{lemma:per:coordinate}}
Consider $Z=(Z_1, \dots, Z_\ab)\in \{-1,1\}^\ab$ where $Z_1, \dots, Z_\ab$ are i.i.d.\ with  $\bPr{Z_i=1}=\sparam$.  
For a fixed $i\in [\zdims]$, let
\begin{align*}
  \p_{+i}^{Y^\ns}\eqdef  \bE{Z}{\p_Z^{Y^\ns}\mid Z_i=+1}  &= 
  \sum_{z:z_i=+1} \Big(\prod_{j\neq i} 
  \sparam^{\frac{1+z_j}{2}}(1-\sparam)^{\frac{1-z_j}{2}}\Big) \p_z^{Y^\ns}  
  \\
  \p_{-i}^{Y^\ns} \eqdef  \bE{Z}{\p_Z^{Y^\ns}\mid Z_i=-1}&= \sum_{z:z_i=-1} \Big(\prod_{j\neq i} \sparam^{\frac{1+z_j}{2}}(1-\sparam)^{\frac{1-z_j}{2}}\Big) \p_z^{Y^\ns} ,
\end{align*}
the partial mixtures of message distributions conditioned on $Z_i$. 
\newest{We will rely on the following lemma, which relates the desired 
average discrepancy between the $\p_{+i}^{Y^\ns}$ and $\p_{-i}^{Y^\ns}$'s to 
the sum of $\ns$ ``local'' discrepancy measures (in the form of Hellinger 
distances between local messages). Each local measure can then be easily 
bounded in terms of the density $\p_z$ and the channel $W$ to get the 
desired bound.
\begin{lemma}\label{lem:main-lemma}
	With the notation of~\cref{lemma:per:coordinate}, we have
	\begin{equation} \label{eqn:tv2hel}
		\Paren{\frac{1}{\zdims}\sum_{i=1}^\zdims\totalvardist{\p_{+i}^{Y^\ns}}{\p_{-i}^{Y^\ns}}}^2
		\le    
		\frac{14}{\zdims}\sum_{t=1}^\ns\max_{z\in\cZ}\max_{W\in\cW_t}
		 \sum_{i=1}^\zdims \hellinger{\p_z^W}{\p_{z^{\oplus i}}^W}^2,
	\end{equation}
	where $\p_z^W$ denotes the distribution of $Y \sim W(\cdot \mid X)$ 
	when $X \sim \p_z$.
\end{lemma}
The proof of the lemma is rather involved and constitutes the core of the argument. We defer it to the end of the section and show first how it implies~\cref{lemma:per:coordinate}. For all $z$ and $W$, we have
\begin{align}
	\hellinger{\p_z^W}{\p_{z^{\oplus i}}^W}^2 & = \frac12 \int_{y \in 
	\cY}\Paren{ 
	\sqrt{\bE{\p_{z}}{W(y \mid X)}} - \sqrt{\bE{\p_{z^{\oplus 
	i}}}{W(y \mid X)}} }^2 \dd{\mu} \nonumber \\
	&= \frac12\int_{\cY} \Paren{ 
	\frac{\bE{\p_{z}}{W(y \mid X)} - \bE{\p_{z^{\oplus i}}}{W(y \mid X)} 
	}{\sqrt{\bE{\p_{z}}{W(y \mid X)}} + \sqrt{\bE{\p_{z^{\oplus i}}}{W(y \mid 
	X)}}} }^2 \dd{\mu} \notag\\
	&\leq \frac12 \int_{\cY} \frac{ 
	(\bE{\p_{z}}{W(y \mid X)} - \bE{\p_{z^{\oplus i}}}{W(y \mid X)})^2 
	}{\bE{\p_{z}}{W(y \mid X)} } \dd{\mu} \label{eq:chi2bound}.
\end{align}
Moreover, under~\cref{assn:decomposition-by-coordinates}; for 
any $W\in\cW_t$ and $y\in\cY$,
\begin{align*}
\bE{\p_{z^{\oplus i}}}{W(y\mid X)}    
&= \bE{\p_{z}}
{\dv{\p_{z^{\oplus i}}}{\p_{z}}(X)\cdot W(y\mid X)}
= \bE{\p_{z}}
{\left(1+\phi_{z,i}(X)\right)
	\cdot
	W(y\mid X)
}\,. 
\end{align*}
Plugging this back into~\eqref{eq:chi2bound}, we get 
\[
	\hellinger{\p_z^W}{\p_{z^{\oplus i}}^W}^2 \le \frac12
	\int_{\cY} \frac{\bE{\p_{z}}{\phi_{z,i}(X)W(y\mid 
	X)}^2}{\bE{\p_{z}}{W(y\mid X)}} \dd{\mu}. 
\]
Combining this with \cref{lem:main-lemma} concludes the proof of~\cref{lemma:per:coordinate}.}
\end{proofof}

\begin{proofof}{\cref{lem:main-lemma}}
Our first step is to use the Cauchy--Schwarz inequality, followed by an 
inequality relating total variation and Hellinger distances:
\begin{align}
  \frac{1}{\zdims}\Paren{\sum_{i=1}^\zdims \totalvardist{\p_{+i}^{Y^\ns}}{\p_{-i}^{Y^\ns}}}^2
  &\leq \sum_{i=1}^\zdims \totalvardist{\p_{+i}^{Y^\ns}}{\p_{-i}^{Y^\ns}}^2 \notag\\
  &\leq 2\sum_{i=1}^\zdims \hellinger{\p_{+i}^{Y^\ns}}{\p_{-i}^{Y^\ns}}^2 \notag\\
  &\leq 2\sum_{i=1}^\zdims \bE{Z}{\hellinger{\p_Z^{Y^\ns}}{\p_{Z^{\oplus i}}^{Y^\ns}}^2\mid Z_i=+1} \notag\\
  &= 2\sum_{i=1}^\zdims \bE{Z}{\hellinger{\p_Z^{Y^\ns}}{\p_{Z^{\oplus i}}^{Y^\ns}}^2 } \label{eq:symmetry}\,,
\end{align}
where the last inequality uses joint convexity of squared Hellinger distance, and the final
identity
is due to independence of each coordinate of $Z$ 
and symmetry of Hellinger whereby $\bE{Z}{\hellinger{\p_Z^{Y^\ns}}{\p_{Z^{\oplus i}}^{Y^\ns}}^2\mid Z_i=+1}=\bE{Z}{\hellinger{\p_Z^{Y^\ns}}{\p_{Z^{\oplus i}}^{Y^\ns}}^2\mid Z_i=-1}$.

In order to bound the resulting terms of the sum, we will rely on the so-called \emph{cut-paste} property of Hellinger distance~\cite{BarYossefJKS04}. Before doing so, we will require an additional piece of notation: for fixed $z \in \cZ$, $i \in [\zdims], t \in [\ns]$, let 	
			$\p^{Y^\ns}_{t\gets z^{\oplus i}}$ denote the message distribution where player $t$ gets a 
					sample from $\p_{z^{\oplus i}}$ and all other players get samples from $\p_{z}$. That is, for all $y^\ns \in \cY^\ns$, the density of $\p^{Y^\ns}_{t\gets z^{\oplus i}}$ with respect to the underlying product measure $\mu^{\otimes\ns}$ is given by
				\begin{equation}\label{eq:ptiy}
					\dv{\p^{Y^\ns}_{t\gets z^{\oplus i}}}{\mu^{\otimes\ns}}\/(y^\ns)= \bE{X_t\sim\p_{z^{\oplus i}}}{W^{y^{t-1}}(y_t \mid X_t)} \cdot \prod_{j \neq t} \bE{X_j\sim\p_{z}}{W^{y^{j-1}}(y_j \mid X_j)}.
				\end{equation}
The following lemma, due to~\cite{Jayram09}, allows us to relate $\hellinger{\p_z^{Y^\ns}}{\p_{z^{\oplus i}}^{Y^\ns}}$, the distance between message distributions when all players get observations from $\p_z$, or all from $\p_{z^{\oplus i}}$, to the distances $\hellinger{\p_z^{Y^\ns}}{\p^{Y^\ns}_{t\gets z^{\oplus i}}}$ where only \emph{one} of the $\ns$ players gets a sample from $\p_{z^{\oplus i}}$.
\begin{lemma}[{\cite[Theorem~7]{Jayram09}}] \label{lem:hellinger:oneoff}
	There exists $c_{\rm H} > 0$ such that for all $z \in \cZ$ and $i \in [\zdims]$,
	\[
	\hellinger{\p^{Y^\ns}_{z}}{\p^{Y^\ns}_{z^{\oplus i}}}^2 \leq c_{\rm H}\sum_{t=1}^\ns \hellinger{\p^{Y^\ns}_{z}}{\p^{Y^\ns}_{t \gets z^{\oplus i}}}^2.
	\]
	Moreover, one can take $c_{\rm H} = 2\prod_{t=1}^\infty 
	\frac{1}{1-2^{-t}} < 7$.
\end{lemma}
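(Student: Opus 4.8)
The plan is to strip the statement of its statistical content and reduce it to a purely abstract ``cut-and-paste'' inequality for chains of conditional distributions — which is exactly~\cite[Theorem~7]{Jayram09} — and then recall how its proof produces the constant $c_{\rm H}$. Write $P\eqdef\p^{Y^\ns}_z$, $Q\eqdef\p^{Y^\ns}_{z^{\oplus i}}$, and $R_t\eqdef\p^{Y^\ns}_{t\gets z^{\oplus i}}$. The first step is to record that, by the sequential structure of $\Pi$, all three transcript laws factorize as chains: setting $p_t(y_t\mid y^{t-1})\eqdef\bE{X_t\sim\p_z}{W^{y^{t-1}}(y_t\mid X_t)}$ and $q_t(y_t\mid y^{t-1})\eqdef\bE{X_t\sim\p_{z^{\oplus i}}}{W^{y^{t-1}}(y_t\mid X_t)}$, we have $\dv{P}{\mu^{\otimes\ns}}(y^\ns)=\prod_{t}p_t(y_t\mid y^{t-1})$, $\dv{Q}{\mu^{\otimes\ns}}(y^\ns)=\prod_{t}q_t(y_t\mid y^{t-1})$, and, comparing with~\eqref{eq:ptiy}, $\dv{R_t}{\mu^{\otimes\ns}}(y^\ns)=q_t(y_t\mid y^{t-1})\prod_{j\neq t}p_j(y_j\mid y^{j-1})$. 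Thus $P$ and $Q$ differ only in which conditional, $p_t$ or $q_t$, is used at each round, and $R_t$ is $P$ with a single round switched to $q_t$; the labels $z,i$ play no further role.

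Next I would expand the Hellinger affinity $\mathrm{BC}(P,Q)\eqdef1-\hellinger{P}{Q}^2=\int\prod_{t}\sqrt{p_t q_t}\,\dd{\mu^{\otimes\ns}}$ by integrating out the coordinates $y_\ns,y_{\ns-1},\dots$ one at a time (each bare factor $p_t$ or $q_t$ integrates to $1$), which telescopes into
\[
\hellinger{P}{Q}^2=\sum_{t=1}^{\ns}\int\Paren{\prod_{s<t}\sqrt{p_s(y_s\mid y^{s-1})\,q_s(y_s\mid y^{s-1})}}\,\delta_t(y^{t-1})\,\dd{\mu^{\otimes(t-1)}}\,,
\]
where $\delta_t(y^{t-1})\eqdef\hellinger{p_t(\cdot\mid y^{t-1})}{q_t(\cdot\mid y^{t-1})}^2$ is the per-round squared Hellinger distance; the same computation gives $\hellinger{P}{R_t}^2=\bE{Y^{t-1}\sim P}{\delta_t(Y^{t-1})}$. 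Hence both sides of the lemma are built from the same per-round quantities $\delta_t$, but integrated against different measures on the prefix $y^{t-1}$: against the sub-probability ``affinity measure'' $\prod_{s<t}\sqrt{p_s q_s}$ of the two prefix laws on the left, and against the genuine prefix law of $P$ on the right. Reconciling these two prefix measures is the crux.

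I expect this reconciliation to be the main obstacle, and the honest route is an induction on the number of rounds $\ns$. Peeling the last round, $\hellinger{P}{Q}^2=\hellinger{P'}{Q'}^2+\int\Paren{\prod_{s<\ns}\sqrt{p_s q_s}}\,\delta_\ns\,\dd{\mu^{\otimes(\ns-1)}}$, where $P',Q'$ are the $Y^{\ns-1}$-marginals of $P,Q$, whose conditionals are just the first $\ns-1$ rounds of $\Pi$. The inductive hypothesis bounds $\hellinger{P'}{Q'}^2\le c_{\ns-1}\sum_{t<\ns}\hellinger{P'}{R'_t}^2$, where $R'_t$ is the $(\ns-1)$-round one-off — which coincides with the $Y^{\ns-1}$-marginal of $R_t$ — so the data-processing inequality for Hellinger distance gives $\hellinger{P'}{R'_t}^2\le\hellinger{P}{R_t}^2$. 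It then remains to bound the leftover cross term $\int\Paren{\prod_{s<\ns}\sqrt{p_s q_s}}\,\delta_\ns$ by a bounded multiple of $\bE{Y^{\ns-1}\sim P}{\delta_\ns(Y^{\ns-1})}=\hellinger{P}{R_\ns}^2$ plus a small, $2^{-\ns}$-sized, fraction of $\hellinger{P'}{Q'}^2$, which is then absorbed into the left-hand side; iterating down to the trivial base case $\ns=1$ then produces the stated constant $c_{\rm H}=2\prod_{t\ge1}(1-2^{-t})^{-1}<7$, the product coming from the per-step losses and the leading $2$ from the cross-term estimate. This last weighted split is the one genuinely delicate point — the naive alternative of writing the left-hand side as a telescoping chain of one-round hybrid distances and applying the triangle inequality for Hellinger distance loses a factor $\ns$ after Cauchy--Schwarz and is useless — and for it I would simply invoke the corresponding bound in~\cite[Theorem~7]{Jayram09}.
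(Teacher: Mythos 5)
You are not missing anything relative to the paper here, because the paper itself gives no proof of this lemma: it is imported verbatim from \cite[Theorem~7]{Jayram09}, and your argument ultimately rests on the very same citation, so your route is essentially the paper's. The scaffolding you add is correct and is in fact the part worth checking: for a sequentially interactive protocol the transcript laws do factor as chains $\prod_t p_t(y_t\mid y^{t-1})$ with $p_t(y_t\mid y^{t-1})=\bE{X_t\sim\p_z}{W^{y^{t-1}}(y_t\mid X_t)}$, your telescoping identity for $\hellinger{\p^{Y^\ns}_{z}}{\p^{Y^\ns}_{z^{\oplus i}}}^2$ in terms of the per-round quantities $\delta_t$ checks out, and so does the exact identity $\hellinger{\p^{Y^\ns}_{z}}{\p^{Y^\ns}_{t\gets z^{\oplus i}}}^2=\bE{\p_z^{Y^{t-1}}}{\delta_t(Y^{t-1})}$, which together confirm that Jayram's abstract cut-and-paste statement applies to the transcripts in this setting (this matches the paper's use of~\eqref{eq:ptiy}). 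Do be aware, though, that the step you yourself flag as the crux --- the weighted cross-term estimate in the induction, and with it the bookkeeping that yields $c_{\rm H}=2\prod_{t\geq 1}(1-2^{-t})^{-1}<7$ --- is asserted by appeal to \cite{Jayram09} rather than proved, so what you have is a correct reduction plus the same citation the paper uses, not a self-contained proof; given that the lemma is explicitly credited to Jayram, that is acceptable.
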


Combining~\cref{eq:symmetry,lem:hellinger:oneoff}, we get
\begin{align}
  \frac{1}{\zdims}\Paren{\sum_{i=1}^\zdims \totalvardist{\p_{+i}^{Y^\ns}}{\p_{-i}^{Y^\ns}}}^2
  &\leq 14\sum_{i=1}^\zdims \sum_{t=1}^\ns  \bE{Z}{  \hellinger{\p^{Y^\ns}_{Z}}{\p^{Y^\ns}_{t \gets Z^{\oplus i}}}^2  } \notag \\
  &= 14 \sum_{t=1}^\ns \bE{Z}{ \sum_{i=1}^\zdims \hellinger{\p^{Y^\ns}_{Z}}{\p^{Y^\ns}_{t \gets Z^{\oplus i}}}^2 }.  \label{eq:decomposition}
\end{align}
In view of bounding the RHS of~\eqref{eq:decomposition} term by term, fix $j\in[\ns]$ and $z\in\cZ$. Recalling the expression of $\p^{Y^\ns}_{t\gets z^{\oplus i}}$ from~\eqref{eq:ptiy}, unrolling the definition of Hellinger distance, and recalling~\eqref{eq:ptiy}, we have
\cmargin{Reminder: Hellinger is $\frac{1}{\sqrt{2}}\normtwo{\sqrt{\p}-\sqrt{\q}}$.}
\begin{align*}
  2&\sum_{i=1}^\zdims \hellinger{\p^{Y^\ns}_{z}}{\p^{Y^\ns}_{t \gets z^{\oplus i}}}^2 \\
  &= \sum_{i=1}^\zdims \int_{\cY^\ns} \Paren{ \sqrt{ \dv{\p^{Y^\ns}_{z\vphantom{t\gets z^{\oplus i}}}}{\mu^{\otimes\ns}}  } - \sqrt{\dv{\p^{Y^\ns}_{t\gets z^{\oplus i}}}{\mu^{\otimes\ns}}} }^2 \dd{\mu^{\otimes \ns}} \\
  &= \sum_{i=1}^\zdims \int_{\cY^\ns} \prod_{j \neq t} \bE{\p_{z}}{W^{y^{j-1}}(y_j \mid X)} \underbrace{\Paren{ \sqrt{\bE{\p_{z}}{W^{y^{t-1}}(y_t \mid X)}} - \sqrt{\bE{\p_{z^{\oplus i}}}{W^{y^{t-1}}(y_t \mid X)}} }^2}_{\eqdef f_{i,t}(y^{t-1},y_t)} \dd{\mu^{\otimes \ns}} \\
  &= \sum_{i=1}^\zdims \int_{\cY^{t-1}} \prod_{j < t}  \bE{\p_{z}}{W^{y^{j-1}}(y_j \mid X)} \int_{\cY}f_{i,t}(y^{t-1},y_t) \int_{\cY^{\ns-t}} \prod_{j > t}  \bE{\p_{z}}{W^{y^{j-1}}(y_j \mid X)} \dd{\mu^{\otimes (t-1)}}\dd{\mu}\dd{\mu^{\otimes (\ns-t)}} \\
  &= \sum_{i=1}^\zdims \int_{\cY^{t-1}} \prod_{j < t}  
  \bE{\p_{z}}{W^{y^{j-1}}(y_j \mid X)} \int_{\cY}f_{i,t}(y^{t-1},y_t) 
  \Big(\int_{\cY^{\ns-t}} \prod_{j > t}  \bE{\p_{z}}{W^{y^{j-1}}(y_j \mid X)} 
  \dd{\mu^{\otimes (\ns-t)}} \Big) \dd{\mu^{\otimes (t-1)}}\dd{\mu} \\
  &= \sum_{i=1}^\zdims \int_{\cY^{t-1}} \prod_{j < t}  \bE{\p_{z}}{W^{y^{j-1}}(y_j \mid X)} \int_{\cY}f_{i,t}(y^{t-1},y_t) \dd{\mu} \dd{\mu^{\otimes (t-1)}}\\
  &= \int_{\cY^{t-1}} \prod_{j < t}  \bE{\p_{z}}{W^{y^{j-1}}(y_j \mid X)} \sum_{i=1}^\zdims \int_{\cY}f_{i,t}(y^{t-1},y_t) \dd{\mu} \dd{\mu^{\otimes (t-1)}}\,,
\end{align*}
where the second-to-last identity uses the observation that, for any fixed $y^{t}\in\cY^t$,
\[
  \int_{\cY^{\ns-t}} \prod_{j > t}  \bE{\p_{z}}{W^{y^{j-1}}(y_j \mid X)} \dd{\mu^{\otimes (\ns-t)}}  = 1,
\]
which in turn follows upon taking marginal integrals for each coordinate. 
We then get from the pointwise inequality $\sum_{i=1}^\zdims 
\int_{\cY^{t-1}}f_{i,t}(y^{t-1},y_t) \dd{\mu} \leq \sup_{y'\in \cY^{t-1}} 
\sum_{i=1}^\zdims \int_{\cY} f_{i,t}(y',y_t) \dd{\mu}$ that 
\begin{align}
  2\sum_{i=1}^\zdims \hellinger{\p^{Y^\ns}_{z}}{\p^{Y^\ns}_{t \gets z^{\oplus i}}}^2 
  &\leq \int_{\cY^{t-1}} \prod_{j < t}  \bE{\p_{z}}{W^{y^{j-1}}(y_j \mid X)} \sup_{y'\in \cY^{t-1}} \sum_{i=1}^\zdims \Paren{ \int_{\cY}f_{i,t}(y',y_t) \dd{\mu} } \dd{\mu^{\otimes (t-1)}} \notag\\
  &= \Paren{\sup_{y'\in \cY^{t-1}} \sum_{i=1}^\zdims \int_{\cY}f_{i,t}(y',y_t) \dd{\mu} }  \int_{\cY^{t-1}} \prod_{j < t}  \bE{\p_{z}}{W^{y^{j-1}}(y_j \mid X)} \dd{\mu^{\otimes (t-1)}} \notag\\
  &= \sup_{y'\in \cY^{t-1}} \sum_{i=1}^\zdims  \int_{\cY} \Paren{ \sqrt{\bE{\p_{z}}{W^{y'}(y \mid X)}} - \sqrt{\bE{\p_{z^{\oplus i}}}{W^{y'}(y \mid X)}} }^2 \dd{\mu} \notag\\
  &\leq \sup_{W\in\cW_t} \sum_{i=1}^\zdims \int_{\cY} \Paren{ 
  \sqrt{\bE{\p_{z}}{W(y \mid X)}} - \sqrt{\bE{\p_{z^{\oplus i}}}{W(y \mid X)}} 
  }^2 \dd{\mu} \notag\\
  & = \zcolor{2 \cdot \sup_{W\in\cW_t}  \sum_{i=1}^\zdims 
  \hellinger{\p_z^W}{\p_{z^{\oplus i}}^W}^2.}
\end{align}
the second identity follows upon taking marginal integrals, and by 
replacing $f_{i,t}$ by its definition; and the second inequality using that 
$\setOfSuchThat{W^{y'} }{ y'\in \cY^{t-1} } \subseteq \cW_t$, so that we are 
taking a supremum over a larger set. 

\newest{Plugging this back into~\eqref{eq:decomposition} and upper 
bounding 
the inner expectation by a maximum concludes the 
proof of the lemma.}
\end{proofof}

\section{Implications of the general bound}\label{ss:plug-play}
In stating the general bound above, we have strived to present the general form of our result, suited to applications 
beyond those discussed in the current paper. We now instantiate it to give simple
``plug-and-play'' bounds which can be applied readily to several
inference problems and information constraints.
\begin{theorem}%
  \label{thm:avg:coordinate} 
  Fix $\sparam\in(0,1/2]$. Let $\Pi$ be a sequentially interactive
protocol using $\cW^\ns$, and let $Z$ be a random variable on $\cZ$ with distribution $\rademacher{\sparam}^{\otimes\zdims}$. Let $(Y^\ns,U)$ be the transcript of $\Pi$ when
the input $X_1, \ldots, X_\ns$ is i.i.d.\ with common distribution
$\p_Z$.  Then,
under~\cref{assn:decomposition-by-coordinates,assn:orthonormal},
we have
\begin{align}
\Paren{\frac{1}{\zdims}\sum_{i=1}^\zdims\totalvardist{\p_{+i}^{Y^\ns}}{\p_{-i}^{Y^\ns}}}^2
\le \frac{\new{7}}{\zdims} \kappavar \sum_{t=1}^\ns \max_{z\in\cZ}\max_{W\in\cW_t}\int_{\cY} \frac{\Var_{\p_{z}}[W(y\mid X)]}{\bE{\p_{z}}{W(y\mid X)}} \dd{\mu}.\label{eqn:var-bound}
\end{align}
Moreover, if~\cref{assn:subgaussianity} holds as well, we have
\begin{align}
\Paren{\frac{1}{\zdims}\sum_{i=1}^\zdims\totalvardist{\p_{+i}^{Y^\ns}}{\p_{-i}^{Y^\ns}}}^2 \le \frac{\new{14}}{\zdims}\sigma^2 \sum_{t=1}^\ns \max_{z\in\cZ}\max_{W\in\cW_t}\mutualinfo{\p_z}{W},\label{eqn:subgaussian-bound}
\end{align}
where $\mutualinfo{\p_z}{W}$ denotes the mutual information $\mutualinfo{X}{Y}$ between the input $X\sim \p_z$
and the output $Y$ of the channel $W$ with $X$ as input.
\end{theorem}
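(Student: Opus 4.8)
The plan is to obtain both displays directly from the technical bound~\eqref{eqn:main-bound} of~\cref{lemma:per:coordinate}, which already does all the work specific to the interactive model; what remains is to control, for each fixed $t\in[\ns]$, $z\in\cZ$, and $W\in\cW_t$, the quantity
\[
  S(z,W)\eqdef \sum_{i=1}^\zdims \int_{\cY} \frac{\bE{\p_z}{\phi_{z,i}(X)W(y\mid X)}^2}{\bE{\p_z}{W(y\mid X)}}\,\dd{\mu}\,,
\]
after which taking $\max_{z}\max_{W}$ and summing over $t$ as in~\eqref{eqn:main-bound} is immediate.

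For~\eqref{eqn:var-bound} I would fix $z$, $W$, and $y\in\cY$ and regard $x\mapsto W(y\mid x)$ as an element of $L^2(\cX,\p_z)$. By~\cref{assn:orthonormal}, together with the observation recorded just after it that $\bE{\p_z}{\phi_{z,i}}=0$, the family $(1,\phi_{z,1},\dots,\phi_{z,\zdims})$ is orthonormal in $L^2(\cX,\p_z)$; Bessel's inequality applied to $W(y\mid\cdot)$ then gives $\bE{\p_z}{W(y\mid X)}^2+\sum_{i=1}^\zdims\bE{\p_z}{\phi_{z,i}(X)W(y\mid X)}^2\le \bE{\p_z}{W(y\mid X)^2}$, that is, $\sum_{i=1}^\zdims\bE{\p_z}{\phi_{z,i}(X)W(y\mid X)}^2\le \Var_{\p_z}[W(y\mid X)]$. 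Dividing by $\bE{\p_z}{W(y\mid X)}$, integrating in $y$, and substituting into~\eqref{eqn:main-bound} yields~\eqref{eqn:var-bound}, with the constant $7$ unchanged.

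For~\eqref{eqn:subgaussian-bound} I would fix $z$ and $W$, let $Y$ be the output of $W$ on input $X\sim\p_z$, write $q(y)=\bE{\p_z}{W(y\mid X)}$ for its density with respect to $\mu$, and let $\p^{y}$ denote the posterior of $X$ given $Y=y$, whose density with respect to $\p_z$ is $x\mapsto W(y\mid x)/q(y)$. Then $\bE{\p^{y}}{\phi_{z,i}(X)}=\bE{\p_z}{\phi_{z,i}(X)W(y\mid X)}/q(y)$, so $S(z,W)=\int_{\cY}\norm{\bE{\p^{y}}{\phi_z(X)}}_2^2\,q(y)\,\dd{\mu}$. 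The key step is a vector-valued Gibbs/Donsker--Varadhan inequality: for every $\lambda\in\R^\zdims$,
\[
  \dotprod{\lambda}{\bE{\p^{y}}{\phi_z(X)}}\le \kldiv{\p^{y}}{\p_z}+\ln\bE{\p_z}{e^{\dotprod{\lambda}{\phi_z(X)}}}\,,
\]
where, by~\cref{assn:subgaussianity}, the coordinates of $\phi_z(X)$ are independent and $\sigma^2$-subgaussian under $\p_z$, so that the log-moment-generating term factorizes and is at most $\sigma^2\norm{\lambda}_2^2/2$. Choosing $\lambda=\bE{\p^{y}}{\phi_z(X)}/\sigma^2$ and rearranging gives $\norm{\bE{\p^{y}}{\phi_z(X)}}_2^2\le 2\sigma^2\,\kldiv{\p^{y}}{\p_z}$; integrating against $q$ and using $\mutualinfo{X}{Y}=\int_{\cY}\kldiv{\p^{y}}{\p_z}\,q(y)\,\dd{\mu}$, together with the nats-to-bits conversion (so that $I(\p_z;W)$ is the mutual information in bits, consistently with the paper's conventions), yields $S(z,W)\le 2\ln2\cdot\sigma^2\,I(\p_z;W)$. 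Substituting into~\eqref{eqn:main-bound} and maximizing over $z$ and $W$ gives~\eqref{eqn:subgaussian-bound}, with constant $7\cdot 2\ln2=14\ln2$.

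The two variational inequalities are essentially routine; the step I expect to need the most care is the second, where the optimization must genuinely exploit the \emph{independence} of the coordinates of $\phi_z(X)$ -- so that the cumulant term splits as a sum and produces $\norm{\lambda}_2^2$ rather than a dimension-dependent or $\ell_1$-type quantity that would ruin the bound -- and where the posterior/marginalization manipulations and the nats-versus-bits bookkeeping must be tracked carefully to land exactly on the constant $14\ln2$.
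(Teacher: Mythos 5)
Your proposal is correct, and for the first bound it is exactly the paper's argument: starting from~\eqref{eqn:main-bound}, fixing $z$, $W$, $y$, and applying Bessel's inequality to $W(y\mid\cdot)$ against the orthonormal system $(1,\phi_{z,1},\dots,\phi_{z,\zdims})$ in $L^2(\cX,\p_z)$, which yields $\sum_{i=1}^\zdims \bE{\p_z}{\phi_{z,i}(X)W(y\mid X)}^2\le \Var_{\p_z}[W(y\mid X)]$ and hence~\eqref{eqn:var-bound} with the constant $7$ intact. For the second bound your route differs slightly in organization from the paper's: the paper proves a standalone measure-change lemma (\cref{l:basic_mc}) by applying the Gibbs/Donsker--Varadhan inequality \emph{coordinate-wise} with a scalar parameter, and then recovers the squared $\ell_2$ norm by summing over coordinates and controlling $\sum_i \kldiv{Q'_i}{P'_i}$ via tensorization of KL (chain rule plus data processing), which is where the independence in~\cref{assn:subgaussianity} enters; you instead apply a single \emph{vector-valued} Donsker--Varadhan inequality to the tilted (posterior) measure and optimize over $\lambda\in\R^\zdims$, using independence only to make the cumulant split into $\sigma^2\normtwo{\lambda}^2/2$. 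Both land on $\normtwo{\bE{\p^{y}}{\phi_z(X)}}^2\le 2\sigma^2\kldiv{\p^{y}}{\p_z}$ (in nats) and, after integrating against the output marginal and converting to bits, on the same constant $14\ln 2$; your bookkeeping there is consistent with the paper's convention that $I(\p_z;W)$ is measured in bits. Two small remarks: with the paper's definition of a $\sigma^2$-subgaussian \emph{vector} (every unit-direction projection is $\sigma^2$-subgaussian), your cumulant bound actually holds without invoking independence at all, so your argument is if anything marginally more economical than the paper's, which genuinely uses independence in the KL tensorization step; and on the null set where $\bE{\p_z}{W(y\mid X)}=0$ the posterior is undefined but the corresponding integrand vanishes, a point both you and the paper leave implicit.
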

As an illustrative and important corollary, we now derive the implications of this theorem for communication and privacy constraints. For both constraints our tight (or nearly tight) bounds in~\cref{sec:applications} follow directly from these corollaries.
\begin{corollary}[Local privacy constraints]
\label{cor:ldp}
For $\cW=\cWpriv[\priv]$ and any family of distributions $\{\p_z, z\in \bool^\zdims\}$
satisfying~\cref{assn:decomposition-by-coordinates,assn:orthonormal},
with the notation of~\cref{thm:avg:coordinate}, we have
\begin{align}
  \label{eq:cor-ldp}
 \Paren{\frac{1}{\zdims}\sum_{i=1}^\zdims\totalvardist{\p_{+i}^{Y^\ns}}{\p_{-i}^{Y^\ns}}}^2 
 \le \frac{\new{7}}{\zdims} \ns\kappavar\Paren{(e^\priv-1)^2 \land e^\priv}.
\end{align}
Moreover, if~\cref{assn:subgaussianity} holds as well, we have
\begin{align}
  \label{eqn:cor-ldp:subgaussian}
 \Paren{\frac{1}{\zdims}\sum_{i=1}^\zdims\totalvardist{\p_{+i}^{Y^\ns}}{\p_{-i}^{Y^\ns}}}^2 
 \le \frac{14}{\zdims}\ns\sigma^2\priv.
\end{align}
\end{corollary}
\begin{proof}
For any $W\in\cWpriv[\priv]$, the $\priv$-LDP condition from~\cref{eq:ldp} can be seen to imply that, for every $y\in\cY$,
\[
        W(y\mid x_1) - W(y\mid x_2) \leq (e^{\priv}-1) W(y\mid x_3), \qquad \forall x_1,x_2,x_3\in\cX\,.
\]
By taking expectation over $x_3$ then
again either over $x_1$ or $x_2$  (all distributed according to $\p_z$), this yields
\[
        \abs{W(y\mid x) - \bE{\p_z}{W(y\mid X)}} \leq
        (e^{\priv}-1) \bE{\p_z}{W(y\mid X)}, \qquad \forall x \in\cX\,.
\]
Squaring and taking the expectation on both sides, we obtain
\[
\Var_{\p_z}[W(y\mid X)] \leq (e^{\priv}-1)^2\, \bE{\p_z}{W(y\mid X)}^2.
\]
Dividing by $\bE{\p_z}{W(y\mid X)}$, summing over $y\in\cY$, and using
$\int_{\cY}\bE{\p_z}{W(y\mid X)} \dd{\mu}=1$ gives 
\[
\int_{\cY} \frac{\Var_{\p_z}[W(y\mid X)]}{\bE{\p_z}{W(y\mid X)}}\dd{\mu} \leq (e^\priv-1)^2 \int_{\cY}\bE{\p_z}{W(y\mid X)} \dd{\mu}=(e^\priv-1)^2,
\]
thus establishing~\eqref{eq:cor-ldp}. For the bound of 
$e^\priv$, observe 
that, for all $y \in \cY$, 
\[
	\Var_{\p_z}[W(y\mid X)] \le 
	\bE{\p_z}{W(y\mid X)^2} 
	\le e^\priv \min_{x \in \cX}   W(y\mid x)  \bE{\p_z}{W(y\mid X)}.
\]
Hence
\[
\int_{\cY} \frac{\Var_{\p_z}[W(y\mid X)]}{\bE{\p_z}{W(y\mid X)}}\dd{\mu} 
\leq e^\priv\int_{\cY}\min_{x \in \cX} W(y\mid x) \dd{\mu} \le e^\priv 
\cdot 
\min_{x \in \cX} \int_{\cY} W(y\mid x) \dd{\mu} = e^\priv. \qedhere
\]
\new{The bound~\eqref{eqn:cor-ldp:subgaussian} 
(under~\cref{assn:subgaussianity}) will follow 
from~\eqref{eqn:subgaussian-bound}, and the relation between differential privacy and KL divergence. Indeed, the mutual information  $\mutualinfo{\p_z}{W}$ can be rewritten as the expected (over $X\sim \p_Z$) KL divergence between the distribution $\p^{W,X}\eqdef W(\cdot\mid X)$ over $\cY$ induced by the channel $W$ on input $X$, and the distribution $\p_Z^W\eqdef\bE{X'\sim\p_z}{W(\cdot\mid X')}$  over $\cY$ induced by the input distribution $\p_z$ and the channel $W$:
\[
	\mutualinfo{\p_z}{W} = \bE{X\sim\p_z}{ \kldiv{\p^{W,X}}{\p_z^W} }
	= \bE{X\sim\p_z}{ \bE{Y\sim\p^{W,X}}{\ln\frac{W(Y\mid X)}{\bE{X'\sim\p_z}{W(Y\mid X')}}} } \,;
\]
but the $\priv$-LDP condition from~\cref{eq:ldp} guarantees that the log-likelihood ratio in the inner expectation is (almost surely) at most $\priv$, so that $\mutualinfo{\p_z}{W} \leq \priv$ for every $z$ and $W\in\cWpriv[\priv]$. This yields~\eqref{eqn:cor-ldp:subgaussian}.
}
\end{proof}

\begin{corollary}[Communication constraints]
\label{cor:simple-numbits}
For any family of channels $\cW$ with finite output space $\cY$ and any family of distributions $\{\p_z, z\in \bool^\zdims\}$
satisfying~\cref{assn:decomposition-by-coordinates,assn:orthonormal},
with the notation of~\cref{thm:avg:coordinate}, we have
\begin{align}
  \label{eqn:cor-numbits}
 \Paren{\frac{1}{\zdims}\sum_{i=1}^\zdims\totalvardist{\p_{+i}^{Y^\ns}}{\p_{-i}^{Y^\ns}}}^2 
 \le \frac{7}{\zdims} \ns \kappavar  \abs{\cY}.
\end{align}
Moreover, if~\cref{assn:subgaussianity} holds as well, we have
\begin{align}
  \label{eqn:cor-numbits:subgaussian}
 \Paren{\frac{1}{\zdims}\sum_{i=1}^\zdims\totalvardist{\p_{+i}^{Y^\ns}}{\p_{-i}^{Y^\ns}}}^2 
 \le \frac{\new{14}}{\zdims}\ns\sigma^2\log\abs{\cY}.
\end{align}
\end{corollary}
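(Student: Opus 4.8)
The plan is to obtain both inequalities as immediate consequences of \cref{thm:avg:coordinate}, by bounding the per-round maximum terms appearing on the right-hand sides of~\eqref{eqn:var-bound} and~\eqref{eqn:subgaussian-bound} using only the finiteness of $\cY$.

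For the first bound, I would start from~\eqref{eqn:var-bound} and bound, for each fixed $z\in\cZ$ and $W\in\cW_t$, the integral $\int_{\cY} \frac{\Var_{\p_z}[W(y\mid X)]}{\bE{\p_z}{W(y\mid X)}} \dd{\mu}$. Since all channels in $\cW^\ns$ share the finite output alphabet $\cY$, the dominating measure $\mu$ is the counting measure on $\cY$, and each $W(y\mid x)$ takes a value in $[0,1]$. Hence $W(y\mid x)^2\le W(y\mid x)$ pointwise, which gives $\Var_{\p_z}[W(y\mid X)] \le \bE{\p_z}{W(y\mid X)^2} \le \bE{\p_z}{W(y\mid X)}$, so the integrand is at most $1$ for every $y\in\cY$; summing over $\cY$ bounds the whole quantity by $\abs{\cY}$, uniformly in $z$ and $W$. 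Substituting into~\eqref{eqn:var-bound}, the maximum over $z,W$ is at most $\abs{\cY}$ for each $t\in[\ns]$, and the $\ns$ identical terms sum to $\frac{7}{\zdims}\ns\alpha^2\abs{\cY}$, which is~\eqref{eqn:cor-numbits}. In particular, for the $\numbits$-bit communication constraint~\eqref{eq:comm} one has $\abs{\cY}=2^\numbits$.

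For the second bound, I would instead invoke~\eqref{eqn:subgaussian-bound}, which additionally uses \cref{assn:subgaussianity}. There it suffices to bound $I(\p_z;W)=\mutualinfo{X}{Y}$, where $Y$ is the output of $W$ on input $X\sim\p_z$: since $Y$ takes values in the finite set $\cY$, we have $\mutualinfo{X}{Y}\le H(Y)\le\log\abs{\cY}$, again uniformly over $z$ and $W\in\cW_t$. Hence each of the $\ns$ per-round maxima is at most $\log\abs{\cY}$, and~\eqref{eqn:subgaussian-bound} becomes $\frac{14\ln 2}{\zdims}\ns\alpha^2\sigma^2\log\abs{\cY}$, as claimed in~\eqref{eqn:cor-numbits:subgaussian}.

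I do not expect any genuine obstacle here: once \cref{thm:avg:coordinate} is in hand, the corollary rests on two elementary observations — the pointwise inequality $W(y\mid x)^2\le W(y\mid x)$ for a probability mass, and $H(Y)\le\log\abs{\cY}$ for a random variable supported on $\cY$. The only points to handle with a little care are to note that all channels in $\cW^\ns$ indeed have output space $\cY$ (so that $\abs{\cY}$ is the right parameter) and the nats-versus-bits bookkeeping responsible for the $\ln 2$ factor in the subgaussian version.
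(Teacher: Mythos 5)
Your proposal is correct and follows essentially the same route as the paper: the first bound comes from $W(y\mid x)\in[0,1]$ giving $\Var_{\p_z}[W(y\mid X)]\le \bE{\p_z}{W(y\mid X)}$ pointwise in $y$ and summing over the $\abs{\cY}$ outputs, and the second from $I(\p_z;W)\le H(Y)\le\log\abs{\cY}$ plugged into~\eqref{eqn:subgaussian-bound}. No gaps.
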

\begin{proof}
In view of~\eqref{eqn:var-bound}, to
establish~\eqref{eqn:cor-numbits}, it suffices to show that
$\frac{\Var_{\p_z}[W(y\mid X)]}{\bE{\p_z}{W(y\mid X)}} \leq 1$ for
every $y\in\cY$. Since $W(y\mid
x)\in(0,1]$ for all $x\in\cX$ and $y\in\cY$, so that
\[
\Var_{\p_z}[W(y\mid X)] \leq \bE{\p_z}{W(y\mid
X)^2} \leq \bE{\p_z}{W(y\mid X)}.
\]
The second bound
(under~\cref{assn:subgaussianity}) will follow
from~\eqref{eqn:subgaussian-bound}. Indeed, recalling that
the entropy of the output of a channel is bounded below by the mutual information between
input and the output,
we have $\mutualinfo{\p_z}{W}\leq H(\p^W_z)$, where $\p_z^W\eqdef\bE{\p_z}{W(\cdot\mid X)}$ is the distribution over $\cY$ induced by the input distribution $\p_z$ and the channel $W$. Using the fact that the  
entropy of a distribution over $\cY$ is at most $\log |\cY|$
in~\eqref{eqn:subgaussian-bound}
gives~\eqref{eqn:cor-numbits:subgaussian}.
\end{proof}
\begin{proofof}{\cref{thm:avg:coordinate}}
Our starting point is~\cref{eqn:main-bound} which holds under~\cref{assn:decomposition-by-coordinates}. We will bound the right-hand-side of~\cref{eqn:main-bound} under assumptions of orthogonality and subgaussianity to prove the two bounds in~\cref{thm:avg:coordinate}.

First, under orthogonality (\cref{assn:orthonormal}), we apply Bessel's inequality to~\cref{eqn:main-bound}. For a fixed $z\in\cZ$, \new{write $\psi_{z,i} = \frac{\phi_{z,i}}{\sqrt{\bE{\p_{z}}{\phi_{z,i}^2}}}$}, and complete $(1,\psi_{z,1},\dots,\psi_{z,\zdims})$ to get an orthonormal basis $\mathcal{B}$ for $L^2(\cX,\p_z)$. Fix any $W\in\cW$ and $y\in\cY$, and, for brevity,
define $a\colon\cX\to\R$ as $a(x)=W(y\mid x)$.  Then, we have
\begin{align*}
    \sum_{i=1}^\zdims \bE{}{\phi_{z,i}(X)a(X)}^2
    &\leq \kappavar\sum_{i=1}^\zdims \bE{}{\psi_{z,i}(X)a(X)}^2 
    = \kappavar\sum_{i=1}^\zdims \dotprod{a}{\psi_{z,i}}^2
= \kappavar\sum_{i=1}^\zdims \dotprod{a-\bEE{a}}{\psi_{z,i}}^2 \\
    &\leq \kappavar\sum_{\psi\in\mathcal{B}} \dotprod{a-\bEE{a}}{\psi}^2= \kappavar\Var[ a(X) ],
\end{align*}
where for the second identity we used the assumption that $\dotprod{\bEE{a}}{\psi_{z,i}}=0$ for all $i\in[\zdims]$ (since $1$ and $\psi_{z,i}$ are orthogonal). This establishes~\cref{eqn:var-bound}.

Turning to~\cref{eqn:subgaussian-bound}, suppose that~\cref{assn:subgaussianity} holds. Fix $z\in\cZ$, and consider any $W\in\cW$ and $y\in\cY$.
Upon applying 
\ifnum\issupplement=1
	\cref{l:basic_mc} 
\else
	Lemma~4 of the Supplement (See Supplement (Appendix~B) for the precise statement and proof) 
\fi
to the \new{$\sigma^2$-subgaussian random 
vector 
$\phi_z(X)$} and with $a(x)$ set to $W(y\mid x)\in [0,1]$, 
we get that
\begin{align*}
    \sum_{i=1}^\zdims \bE{\p_z}{\phi_{z,i}(X)W(y\mid X)}^2
    &= \normtwo{\bE{\p_z}{\phi_z(X)W(y\mid X)}}^2
\\
&\leq 2\sigma^2\bE{\p_z}{W(y\mid X)}
\cdot \bE{\p_z}{W(y\mid X)\log\frac{W(y\mid X)}{\bE{\p_z}{W(y\mid X)}}}
\end{align*}
Integrating over $y\in\cY$, this gives
\begin{align*}
\int_{\cY} \frac{\sum_{i=1}^{\zdims} \bE{\p_{z}}{\phi_{z,i}(X)W(y\mid X)}^2}{\bE{\p_{z}}{W(y\mid X)}} \dd{\mu}
&\leq 2\sigma^2\cdot \int_{\cY} \bE{\p_{z}}{W(y\mid X)\log\frac{W(y\mid X)}{\bE{\p_{z}}{W(y\mid X)}}} \dd{\mu}
\\
&= 2\sigma^2 \mutualinfo{\p_z}{W},
\end{align*}
 which yields the claimed bound.
\end{proofof}

\section{Relation to other lower bound methods} \label{sec:relation}
We now discuss how our techniques compare with other existing approaches for proving lower bounds under information constraints. Specifically, we clarify the relationship between our technique and the approach using strong data processing inequalities (SDPI) as well as that based on van Trees inequality
(a generalization of the Cram\'er--Rao bound).

\subsection{Strong data processing inequalities}
We note first that the bound in~\cref{eqn:subgaussian-bound} can be interpreted as a
strong data processing inequality. Indeed, the average discrepancy on the left-side of inequality can be viewed as the average information $Y^n$ reveals about each bit of $Z$. Here the information is measured in terms of total variation distance. The information quantity on the right-side denotes the information between the input $X^\ns$ and the output $Y^\ns$ of the channels. Since the Markov relation
$Z^\ns \text{ --- } X^\ns\text{ --- }Y^\ns$ holds, the inequality is thus a strong data processing inequality with strong data processing constant roughly \new{$\sigma^2/\zdims$}. Such strong data
processing inequalities were used to derive lower bounds for
statistical estimation under communication constraints
in~\cite{ZDJW:13,BGMNW:16,XR:18}. We note that our approach recovers
these bounds, and further applies to arbitrary constraints captured by
$\cW$. 

\subsection{Connection to the van Trees inequality}
The average information bound in~\eqref{eqn:main-bound}, in fact,
allows us to recover bounds similar to the 
van Trees inequality-based bounds
developed in~\cite{BHO:19} and~\cite{BCO:20}.

For $\Theta\subset\R^\zdims$
and a parametric family\footnote{
We assume that each distribution $\p_\theta$
has a density with respect to a common measure $\nu$,
and, with a slight abuse of notation,
denote the density of $\p_\theta$ also by $\p_\theta(X)$.}
$\cP_\Theta=\{\p_\theta, \theta\in \Theta\}$,
recall
that the Fisher information matrix $J(\theta)$
is a $\zdims \times \zdims$ matrix
 given by, under some mild regularity conditions,
\[
J(\theta)_{i,j}= - \bE{\p_\theta}{
\frac {\partial^2 \log \p_\theta}{\partial \theta_i \partial \theta_j} (X)},
\quad i,j\in [\zdims].
\]
In particular,  the diagonal
entries equal
\[
J(\theta)_{i,i}=
\bE{\p_\theta}{ \left(
\frac 1 {\p_{\theta}(X)}\cdot\frac {\partial \p_\theta}{\partial \theta_i } (X)\right)^2},\quad i\in[\zdims].
\]
For our application, given a channel $W\in \cW$, we consider the family 
$\cP_\Theta^W \eqdef \{\p_\theta^W, \theta\in \Theta\}$ of distributions induced
on the output of the channel $W$ when the input distributions
are from $\cP_\Theta$. We denote the Fisher information matrix for this family by
$J^W(\theta)$, which we compute next 
under a refined version of our~\cref{assn:decomposition-by-coordinates} described below.

Let $\theta$ be a point in the interior of $\Theta$ and $\p_\theta$ be differentiable at $\theta$. We set $\theta_z\eqdef\theta + \frac{\gamma}{2} z$, $z\in \bool^\zdims$, and make the following assumption about the structure
of the parametric family of distribution: For all $z\in \bool^\zdims$ and $i\in [\zdims]$,
\begin{align}
 \dv{\p_{z^{\oplus i}}}{\p_z}=1+\gamma \xi_{z,i}^\gamma + \gamma^2\psi_{z,i}^\gamma,
 \label{eqn:refined_assumption}
 \end{align}
where $\bE{\p_z}{\xi_{z,i}^\gamma(X)^2}$ and $\bE{\p_z}{\psi_{z,i}^\gamma(X)^2}$
are assumed to be uniformly bounded for $\gamma$ sufficiently small;
for concreteness, we assume
$\bE{\p_z}{\psi_{z,i}^\gamma(X)^2}\leq c^2$ for a constant $c$, for all $\gamma$ sufficiently small.
Let $\xi_{z,i}(x)\eqdef \lim_{\gamma\to \zero}
\xi_{z,i}^\gamma(x)$, for all $x$.

In applications, we expect the dependence of $\xi_{z,i}^\gamma$ on $\gamma$ to be ``mild,''
and, in essence,  
the assumption above provides a linear expansion of the term $\alpha_{z,i}\phi_{z,i}$ from~\cref{assn:decomposition-by-coordinates} as a function of the perturbation parameter $\gamma$.
Assuming that the densities are differentiable as a function of $\theta$,
for the distribution $\p_\theta^W$ of the output of a channel $W$ with input $X\sim \p_\theta$,
 we get
\begin{align*}
\frac{\partial \p_\theta^{W}(y)}{\partial \theta_i}
&= z_i\lim_{\gamma\to 0} \frac{\p_{\theta_z}^{W}(y) - \p_{\theta_{z^{\oplus i}}}^{W}(y)}
{\gamma}
\\
&= z_i \lim_{\gamma\to 0}\bE{\p_z}{(\xi_{z,i}^\gamma(X)+ \gamma\psi_{z,i}^\gamma(X))W(y\mid X)}
\\
&=z_i \bE{\p_\theta}{\xi_{z,i}W(y\mid X)},
\end{align*}
where we used~\cref{eqn:refined_assumption},
the fact that $\lim_{\gamma\to 0}\theta_z = \theta$,
the fact that
$\bE{\p_z}{\psi_{z,i}^\gamma(X)W(y\mid X)}\leq c\sqrt{\bE{\p_z}{W(y\mid X)^2}}\leq c$, and
the dominated convergence theorem. Thus, we get
\begin{align}
\Tr(J^W(\theta)) =\sum_{i=1}^\zdims \int_{\cY} \frac{\bE{\p_{\theta}}{\xi_{z,i}(X)W(y\mid X)}^2}{\bE{\p_{\theta}}{W(y\mid X)}} \dd{\mu}.
\label{e:trace_Fisher}
\end{align}

Our information contraction bound will be seen later (\cref{sec:applications}) to yield lower bounds for expected estimation error.
For concreteness, we give a preview of a version here.
We assume for simplicity that $\cW_t=\cW$ for all $t$ and consider the $\lp[2]$ loss function
for the dense ($\sparam = 1/2$) case.  
By following the proof of~\cref{lem:mean:estimation} below, given an
$(\ns, \dst)$-estimator $\hat{\theta}=\hat{\theta}(Y^\ns,U)$ of $\cP_\Theta$ using $\cW^\ns$
under $\lp[2]$ loss, we can find an estimator
$\hat{Z}=\hat{Z}(Y^\ns, U)$ such that
\begin{align*}
\gamma^2\sum_{i=1}^{\zdims}\bPr{\hat{Z}_i \neq Z_i} =  \bEE{\norm{\theta_{\nohat{Z}}-\theta_{\hat{Z}}}_2^2}\leq 4\dst^2,
\end{align*}
whereby
\begin{align*}
\frac{1}{\zdims}\sum_{i=1}^\zdims\totalvardist{\p_{+i}^{Y^\ns}}{\p_{-i}^{Y^\ns}} 
&\geq 1-\frac{2}{\zdims}\sum_{i=1}^\zdims\bPr{\hat{Z}_i \neq Z_i} 
\geq 1- \frac {8\dst^2} {\zdims\gamma^2}.
\end{align*}
Upon setting $\gamma\eqdef 4\dst/\sqrt{\zdims}$,
we get that the left-side of~\cref{eqn:main-bound} is bounded
below by $1/4$. 
For the same $\gamma$ and under~\cref{eqn:refined_assumption},
the right-side evaluates to
\begin{align*} 
\lefteqn{\frac{4\gamma^2\ns}{\zdims}  \max_{z\in\cZ}\max_{W\in\cW} \sum_{i=1}^\zdims \int_{\cY} \frac{\bE{\p_{z}}{(\xi_{z,i}^\gamma(X)+\gamma\psi_{z,i}^\gamma(X))W(y\mid X)}^2}{\bE{\p_{z}}{W(y\mid X)}} \dd{\mu}}
\\
&\leq {\frac{8\gamma^2\ns}{\zdims}  \max_{z\in\cZ}\max_{W\in\cW} \sum_{i=1}^\zdims \int_{\cY} \frac{\bE{\p_{z}}{\xi_{z,i}^\gamma(X)W(y\mid X)}^2
+\gamma^2\bE{\p_{z}}{\psi_{z,i}^\gamma(X)W(y\mid X)}^2
}{\bE{\p_{z}}{W(y\mid X)}} \dd{\mu}}
\\
&\leq \frac{128 \dst^2 \ns}{\zdims^2}\left(
\max_{z\in\cZ}\max_{W\in\cW} \sum_{i=1}^\zdims
\int_{\cY} \frac{\bE{\p_{z}}{\xi_{z,i}^\gamma(X)W(y\mid X)}^2}{\bE{\p_{z}}{W(y\mid X)}} \dd{\mu}
+  c^2 \dst^2\right),
\end{align*}
where we used $(a+b)^2\leq 2(a^2+b^2)$ and
\[
\int_{\cY}
\frac{\bE{\p_z}{\psi_{z,i}^\gamma(X)W(y\mid X)}^2}{\bE{\p_z}{W(y\mid X)}}
\dd{\mu}
\leq
\int_{\cY}\bE{\p_z}{\psi_{z,i}^\gamma(X)^2W(y\mid X)}\dd{\mu}=\bE{\p_z}{\psi_{z,i}^\gamma(X)^2}
\leq c^2.
\]
Therefore,~\cref{eqn:main-bound} yields
\[
\dst^2\geq 
\frac{\zdims^2}{
256\cdot
\ns\,\left(\max_{z\in\cZ}\max_{W\in\cW}\sum_{i=1}^\zdims  \int_{\cY} \frac{\bE{\p_{z}}{\xi_{z,i}^\gamma(X)W(y\mid X)}^2}{\bE{\p_{z}}{W(y\mid X)}} \dd{\mu} + 
 c^2\right)}.
\]
This bound is, in effect, the same as the van Trees inequality with $\Tr(J^W(\theta))$ replaced by
\[
  g(\gamma)\eqdef \sum_{i=1}^\zdims  \int_{\cY} \frac{\bE{\p_{z}}{\phi_{z,i}(X)W(y\mid X)}^2}{\bE{\p_{z}}{W(y\mid X)}} \dd{\mu}.
\]
In fact, in view of~\cref{e:trace_Fisher}, $\Tr(J^W(\theta))=\lim_{\gamma\to 0}g(\gamma) =: g(0)$. 
Thus, our general lower bound will recover van Trees inequality-based bounds when~\cref{eqn:refined_assumption} holds and 
$g(\gamma)\approx g(0)$. We note that~\cref{eqn:refined_assumption} holds for all the families considered in this paper (see~\cref{eqn:prod-bernoulli_assn} for product Bernoulli,~\cref{eqn:gaussian_assn}
for Gaussian, and~\cref{eqn:discrete_assn} for discrete distributions).
We close this discussion by noting that results in~\cref{ss:plug-play}
are obtained by  deriving bounds for $g(\gamma)$ which apply for all $\gamma$ and, therefore,
also for $g(0)=\Tr(J^W(\theta))$.
\hnote{Do we need to comment on how $g(\gamma)$ and $g(0)$ compare for our
families of interest, when $\gamma\approx \dst/\sqrt{\zdims}$? That will complete
the comparison, but it will take some calculations.}
\cnote{I think it's fine not to; if a reviewer asks for it, then we'll cross that bridge.}

\section{An Assouad-type bound}
\label{sec:assouad}
\newcommand{\cst}{{\color{red}\clubsuit}}

In the previous section we provided an upper bound on 
 $\frac{1}{\zdims}\sum_{i=1}^\zdims\totalvardist{\p_{+i}^{Y^\ns}}{\p_{-i}^{Y^\ns}}$.
  We now prove a lower bound for this quantity in terms of the parameter 
 estimation task we set out to solve. This is an ``Assouad's lemma-type'' 
 bound, which when combined with~\cref{thm:avg:coordinate} will establish 
 the bounds for $\ns$; and, reorganizing, the minimax rate lower bounds. 

To state the result, we require the following assumption, which \new{relates} the $\lp[\prm]$ distance between the parameters $\theta_z$s \new{to} the corresponding distance between $z$s. 
\begin{assumption}[Additive loss]
  \label{assn:additive-loss}
Fix $\prm\in[1,\infty)$. For every  
$z,z^\prime\in\cZ\subset\bool^\zdims$,
\begin{align*}
\lp[\prm](\theta_z,\theta_{z^\prime}) = 4\dst \Paren{\frac{\dist{z}{z^\prime}}{\sparam\zdims}}^{1/p},
\end{align*}
where $\dist{z}{z^\prime} \eqdef \sum_{i=1}^\zdims\indic{z_i\neq z_i^\prime}$ denotes the Hamming distance.\footnote{\new{Note that this assumption can be relaxed to \smash{$\lp[\prm](\theta_z,\theta_{z^\prime}) \asymp \dst \Paren{\frac{\dist{z}{z^\prime}}{\sparam\zdims}}^{1/p}$} (equality up to constants), as the upper bound is only required for the proof of~\cref{lem:mean:estimation}; for the sake of exposition, and as it suffices for all our applications, we here state the simpler version.}}
\end{assumption}

\begin{lemma}[Assouad-type bound]
	\label{lem:mean:estimation}
	Let $\prm\ge 1$ and assume that $\{\p_z, z\in \cZ\}$, 
	$\sparam\in[0,1/2]$ satisfy~\cref{assn:additive-loss}. Let $Z$ be a 
	random variable on $\cZ=\bool^\zdims$ with distribution 
	$\rademacher{\sparam}^{\otimes\zdims}$.
	Suppose that $(\Pi, \hat{\theta})$ constitutes an $(\ns, \dst)$-estimator 
	of $\cP_\Theta$
	using $\cW^\ns$
	under $\lp[\prm]$ loss (see~\cref{def:estimate}) and $\bP{Z}{\p_Z 
	\in 
	\cP_\Theta} \ge 1 -  \sparam/4$.
	Then the 
	transcript $(Y^\ns, U)$ of $\Pi$ satisfies
	\begin{align*}
	\frac{1}{\zdims}\sum_{i=1}^\zdims\totalvardist{\p_{+i}^{Y^\ns}}{\p_{-i}^{Y^\ns}}
	 \geq \frac{1}{4},
	\end{align*}
	where $\p_{+i}^{Y^\ns}\eqdef \bEEC{\p_Z^{Y^\ns}}{Z_i= + 1}$, 
	$\p_{-i}^{Y^\ns}\eqdef \bEEC{\p_Z^{Y^\ns}}{Z_i= - 1}$.
\end{lemma}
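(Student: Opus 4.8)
The plan is to run an Assouad-type reduction: convert the hypothetical $(\ns,\dst)$-estimator $\hat\theta$ into $\zdims$ binary tests, one per coordinate of $Z$, and show that these tests must be informative on average, which is precisely what forces the average discrepancy to be bounded below.

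First I would turn $\hat\theta$ into a $\cZ$-valued estimate: let $\hat{Z}=\hat{Z}(Y^\ns,U)$ be a minimizer over $z'\in\cZ$ of $\lp[\prm](\hat\theta(Y^\ns,U),\theta_{z'})$ (ties broken arbitrarily; the minimum is attained since $\cZ$ is finite, and $z\mapsto\theta_z$ is injective by~\cref{assn:additive-loss}). By near-optimality of $\hat Z$ and the triangle inequality, $\lp[\prm](\theta_{\hat Z},\theta_z)\le\lp[\prm](\theta_{\hat Z},\hat\theta)+\lp[\prm](\hat\theta,\theta_z)\le2\,\lp[\prm](\hat\theta,\theta_z)$ for every $z\in\cZ$, so raising to the power $\prm$ and combining with~\cref{assn:additive-loss} gives the deterministic pointwise bound
\begin{equation*}
\dist{\hat Z}{z}\;\le\;\frac{\sparam\zdims}{(4\dst)^\prm}\,\lp[\prm](\theta_{\hat Z},\theta_z)^\prm\;\le\;\frac{\sparam\zdims}{(2\dst)^\prm}\,\lp[\prm](\hat\theta,\theta_z)^\prm.
\end{equation*}

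Next I would take expectations over $Z\sim\rademacher{\sparam}^{\otimes\zdims}$, the samples, and the protocol's randomness, splitting on whether $\p_Z\in\cP_\Theta$. On that event $\p_Z=\p_{\theta_Z}$, so the $(\ns,\dst)$-estimator property applied with $\theta=\theta_Z$ gives $\bE{\p_Z^\ns}{\lp[\prm](\hat\theta,\theta_Z)^\prm}\le\dst^\prm$, and this part contributes at most $\sparam\zdims/2^\prm\le\sparam\zdims/2$ (using $\prm\ge1$); off the event, the trivial bound $\dist{\hat Z}{Z}\le\zdims$ together with $\bP{Z}{\p_Z\notin\cP_\Theta}\le\sparam/4$ contributes at most $\sparam\zdims/4$. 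Hence $\bEE{\dist{\hat Z}{Z}}\le\tfrac34\sparam\zdims$. Finally, for each $i$, using the indicator of $\{\hat Z_i=+1\}$ as a test and the variational characterization of total variation distance,
\begin{equation*}
\totalvardist{\p_{+i}^{Y^\ns}}{\p_{-i}^{Y^\ns}}\;\ge\;1-\bPr{\hat Z_i=-1\mid Z_i=+1}-\bPr{\hat Z_i=+1\mid Z_i=-1};
\end{equation*}
summing over $i$ and using $\bPr{Z_i=+1}=\sparam$ and $\bPr{Z_i=-1}=1-\sparam\ge\sparam$, the two error sums combine (the disagreement events partition $\{\hat Z_i\ne Z_i\}$, and $\tfrac1{1-\sparam}\le\tfrac1\sparam$) into $\tfrac1\sparam\bEE{\dist{\hat Z}{Z}}\le\tfrac34\zdims$, so $\sum_i\totalvardist{\p_{+i}^{Y^\ns}}{\p_{-i}^{Y^\ns}}\ge\zdims-\tfrac34\zdims=\zdims/4$, which is the claim after dividing by $\zdims$. (The public randomness $U$ is carried along, or fixed to an arbitrary value, throughout every step.)

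The delicate point is the accounting in the last step under the \emph{biased} prior $\rademacher{\sparam}^{\otimes\zdims}$ when $\sparam\ll1/2$: a $\bigO{\sparam\zdims}$ bound on $\bEE{\dist{\hat Z}{Z}}$ does \emph{not} imply a small per-coordinate conditional error $\bPr{\hat Z_i=-1\mid Z_i=+1}$ (this can be $\bigTheta{1}$ in the sparse case, so the textbook form of Assouad's lemma does not apply directly). The resolution is to bound the \emph{sum} of the true-positive rates, $\sum_i\bPr{\hat Z_i=+1\mid Z_i=+1}=\zdims-\tfrac1\sparam\sum_i\bPr{\hat Z_i=-1,\,Z_i=+1}$, so that the $1/\sparam$ blow-up coming from conditioning on the rare event $\{Z_i=+1\}$ is exactly offset by the $\sparam$ already sitting in the Hamming-error budget; tracking the constants then explains why the target $1/4$ dictates the factor $4$ in~\cref{assn:additive-loss}, the $1-\sparam/4$ slack in the well-specification hypothesis, and the (very mild) use of $\prm\ge1$ via $2^\prm\ge2$.
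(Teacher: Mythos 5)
Your proposal is correct and is essentially the paper's own argument: the same nearest-point decoding $\hat Z$, the same triangle-inequality step, the same use of~\cref{assn:additive-loss} to convert $\lp[\prm]$ error into Hamming error, the same split on the event $\{\p_Z\in\cP_\Theta\}$ exploiting the $\sparam/4$ slack, and the same per-coordinate testing-to-total-variation step using $\sparam\le 1/2$. The only (cosmetic, if anything slightly cleaner) deviation is that you pass to Hamming distance pointwise \emph{before} taking expectations, so that off the event you can use the trivial bound $\dist{\hat Z}{Z}\le\zdims$ instead of the paper's bound via the diameter of $\{\theta_z\}_{z\in\cZ}$.
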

\begin{proof}
	Given an $(\ns, \dst)$-estimator $(\Pi, \hat{\theta})$, define an estimate 
	$\hat{Z}$ for $Z$ as 
	\[
	\hat{Z} \eqdef \underset{z\in\cZ}{\arg\!\min} 
	\norm{\theta_z-\hat{\theta}(Y^\ns, U)}_\prm.
	\]
	By the triangle inequality,
	\[
	\norm{\theta_{\nohat{Z}}-\theta_{\hat{Z}}}_\prm\leq \norm{
		\theta_{\nohat{Z}}-\hat{\theta}(Y^\ns, U)}_\prm+
	\norm{\theta_{\hat{Z}}-\hat{\theta}(Y^\ns, U)}_\prm
	\leq 2\norm{\hat{\theta}(Y^\ns, U)-\theta_Z}_\prm.
	\]
	
	Since $(\Pi, \hat{\theta})$ is an $(\ns,\dst)$-estimator
	under $\lp[\prm]$ loss for $\cP_\Theta$,
	\begin{align}
	\bE{Z}{\bE{\p_Z}{\norm{\theta_{\nohat{Z}}-\theta_{\hat{Z}}}_\prm^\prm}}
	& \leq
	2^\prm\dst^\prm \bPr{\p_Z \in \cP_\Theta} + \max_{z \neq z'}  
	\nonumber 
	\norm{\theta_{z}-\theta_{z'}}_\prm^\prm \bPr{\p_Z \notin \cP_\Theta}  
	\\
	& \le  2^\prm\dst^\prm + 4^\prm\dst^\prm  \frac{1}{\sparam} \cdot 
	\frac{\sparam}{4} \label{eqn:total-error}\\
	& \le \frac{3}{4} 4^\prm\dst^\prm,
	\label{eqn:theta-Z}
	\end{align}
	\noindent where \cref{eqn:total-error} follows from 
	\cref{assn:additive-loss} and $\bPr{\p_Z \in 
		\cP_\Theta} \ge 1 -  \sparam/4$. Next, for $\prm \in [1, \infty)$, 
		by~\cref{assn:additive-loss}, 
	$\norm{\theta_{\nohat{Z}}-\theta_{\hat{Z}}}_\prm^\prm \geq 
	\frac{4^\prm\dst^\prm}{\sparam\zdims}\sum_{i=1}^\zdims 
	\indic{Z_i\neq \hat{Z}_i}$. Combining with~\cref{eqn:theta-Z} this shows 
	that 
	$\frac{1}{\sparam\zdims}\sum_{i=1}^\zdims \probaOf{Z_i\neq \hat{Z}_i} 
	\leq \frac{3}{4} \,.$\cmargin{The $\infty$ norm 
	argument does not play well with sparsity at all. I suggest we just do the 
	naive thing and use $\prm = \log\zdims$ to derive our $\lp[\infty]$ 
	lower bounds.}
	
	Furthermore, since the Markov relation $Z_i-(Y^\ns,U)-\hat{Z}_i$ holds 
	for all $i$, we can lower bound $\probaOf{Z_i\neq \hat{Z}_i}$ using the 
	standard relation between total variation distance and hypothesis testing 
	as follows, using that $\tau \leq 1/2$ in the second inequality:
	\begin{align*}
	\probaOf{Z_i\neq \hat{Z}_i} 
	&\geq \sparam \probaCond{ \hat{Z}_i = -1 }{Z_i = 1} + (1-\sparam) 
	\probaCond{ \hat{Z}_i = 1 }{Z_i = -1} \\
	&\geq \sparam \Paren{\probaCond{ \hat{Z}_i = -1 }{Z_i = 1} + 
	\probaCond{ \hat{Z}_i = 1 }{Z_i = -1} } \\
	&\geq \sparam \Paren{1- \totalvardist{\p_{+i}^{Y^\ns}}{\p_{-i}^{Y^\ns}} 
	}\,.
	\end{align*}
	Summing over $1\leq i\leq \zdims$ and combining it with the previous 
	bound, we obtain
	\[
	\frac{3}{4} \geq \frac{1}{\sparam\zdims}\sum_{i=1}^\zdims 
	\probaOf{Z_i\neq \hat{Z}_i} \geq 1 - 
	\frac{1}{\zdims}\sum_{i=1}^\zdims\totalvardist{\p_{+i}^{Y^\ns}}{\p_{-i}^{Y^\ns}}
	\]
	and reorganizing proves the result.
\end{proof}

\section{Applications}
\label{sec:applications}
We now consider three distribution families: product Bernoulli distributions and Gaussian distributions with identity covariance matrix (and $s$-sparse mean vectors), and discrete distributions (multinomials), to illustrate the generality and efficacy of our bounds. We describe these three families below, before addressing each of them in their respective subsection. 
\begin{description}
	\item [Sparse Product Bernoulli distributions ($\mathcal{B}_{\dims,s}$).]  Let $1\leq s\leq \dims$, $\Theta=\setOfSuchThat{\theta\in[-1,1]^\dims}{\norm{\theta}_0 \leq s}$, and $\cX=\{-1,1\}^\dims$. Let $\cP_\Theta\eqdef \mathcal{B}_{\dims,s}$ be the family of $\dims$-dimensional $s$-sparse product Bernoulli distributions over $\cX$. Namely, for $\theta=(\theta_1,\ldots,\theta_\dims)\in\Theta$, the distribution $\p_\theta$ is equal to $\otimes_{j=1}^\dims\rademacher{\frac{1}{2}(\theta_j+1)}$: a distribution on $\bool^\dims$ such that the marginal distributions are independent, and for which the mean of the $j$th marginal is $\theta_j$.
\item [\new{Sparse} Gaussian distributions ($\mathcal{G}_{\dims,s}$).] 
Let $1\leq s\leq \dims$, $\Theta=\setOfSuchThat{\theta\in[-1,1]^\dims}{\norm{\theta}_0 \leq s}$, and $\cX=\R^\dims$.
Let $\cP_\Theta\eqdef \mathcal{G}_{\dims,s}$ be the family of $\dims$-dimensional spherical Gaussian distributions with bounded $s$-sparse mean. That is, for $\theta\in\Theta$, $\p_\theta=\gaussian{\theta}{\II}$ with mean $\theta$ and covariance matrix $\II$. We note that this general formulation assumes $\norminf{\theta}\leq 1$ (from the choice of $\Theta$).\footnote{This assumption that the mean is bounded, in our case in $\lp[\infty](\textbf{0},1)$, is standard, and necessary in order to obtain finite upper bounds: indeed, a packing argument shows that if the mean is assumed to be in a ball of radius $R$, then a $\log^{\Omega(1)} R$ dependence in the sample complexity is necessary in both the communication-constrained and LDP settings. Our choice of radius $1$ is arbitrary, and our upper bounds can be generalized to any $R\geq 1$.}
\item [Discrete distributions ($\distribs{\dims}$).] Let $\Theta=\setOfSuchThat{ \theta\in[0,1]^{\dims} }{ \sum_{i=1}^{\dims}\theta_i = 1 }\subseteq \R^{\dims}$ and $\cX=\{1,2,\dots,\dims\}$.
Let $\cP_\Theta \eqdef \distribs{\dims}$, where $\distribs{\dims}$ is the standard $(\dims-1)$-simplex of all probability mass functions over $\cX$. Namely, 
the distribution $\p_\theta$ is a distribution on $\cX$, where, for $j\in[\dims]$, the probability assigned to the element $j$ is $\p_\theta(j) = \theta_{j}$.
For a unified presentation, we view $\theta$ as the mean vector of the
``categorical distribution,'' namely the distribution of vector $(\indic{X=x}, x\in \cX)$
for $X$ with distribution $\p_\theta$. 
\end{description}
\noindent We now define our measure of interest, the minimax error rate
of mean estimation, and then instantiate it for $\mathcal{B}_{\dims,s}$, $\mathcal{G}_{\dims,s}$, and $\distribs{\dims}$. 
\begin{definition}[Minimax rate of mean estimation]
\label{def:error-rate-mean}
Let $\cP$ be a family of distributions parameterized by $\Theta\subseteq\R^\dims$. For $\prm\in[1,\infty]$, $\ns\in\N$, and a family of channels $\cW$, the minimax error rate of mean estimation for $\cP$ using $\cW^\ns$ under $\lp[\prm]$ loss, denoted
$\cE_{\prm}(\cP, \cW, \ns)$, is the least $\dst\in (0,1]$
such that there exists an $(\ns, \dst)$-estimator
for $\cP$ using $\cW$ under $\lp[\prm]$ loss (see~\cref{def:estimate}).
\end{definition}

We obtain lower bounds on the minimax rate of mean estimation for the different families
above by specializing our general bound. 

We remark that our methodology is not specific to $\lp[\prm]$ losses, and can be used for arbitrary additive losses such as (squared) Hellinger or, indeed, for any loss function for which an analogue of~\cref{lem:mean:estimation} can be derived. Also, as another remark, we note that our lower bounds for $\lp[\prm]$ losses are not implied by those for, say, $\lp[1]$ or $\lp[2]$ losses. Indeed, while by H\"older's inequality a rate lower bound for,~\eg{} mean estimation under the $\lp[2]$ loss implies a similar lower bound under $\lp[\prm]$ loss for any $\prm \geq 2$, those impose a strong restriction on the range of $\ns$, typically $\ns \geq \dims^c$ for some $c>0$. This in turn severely restricts the applicability of the result, which only holds asymptotically for vanishing rate. Our lower bounds under $\lp[\prm]$ loss do not suffer such restrictions.  \todonote{Check this paragraph.}

\cmargin{The table was underwhelming and confusing (hard to present things in an interesting way, and this actually made our results look bad since we could only put a subset, and not the most interesting across the board. Removed.}

\subsection{Product Bernoulli family}
\label{sec:bernoulli-product}
For consistency with mean estimation problems, we denote the parameter $\theta$ associated with $\p\in \mathcal{B}_{\dims,s}$ by $\mu(\p)$ and the corresponding estimator by $\hat{\mu}$
instead of $\hat{\theta}$. %

 We will establish the following sample complexity bounds for estimating $\mathcal{B}_{\dims,s}$ under privacy and communication constraints.\footnote{For simplicity, we focus on the case where the communication (resp., privacy) parameters are the same for all players; but our lower bounds immediately generalize to different $\numbits_t$'s (resp, $\priv_t$) for each player; see~\cref{rk:differentconstraints}.} We focus in this section on establishing the lower bounds, and defer the upper bounds to 
 \ifnum\issupplement=1
	\cref{app:ub:bernoulli}.
\else
	the Supplement (Appendix~C.1); 
\fi
\begin{theorem}
\label{theorem:mean:estimation:bernoulli}
Fix $\prm\in[1,\infty)$. For $4\log \dims\leq s\leq \dims$, 
\new{$\priv\in(0,\infty)$}, and $\numbits \geq 1$, 
\begin{equation} \label{eqn:bern-est:ldp}
 \sqrt{ \frac{\dims s^{2/\prm}}{\ns\new{(\priv^2\land\priv)}} \newchange{\lor\frac{s^{2/\prm}\log\frac{2\dims}{s}}{\ns} } } \land 1 
 \lesssim \cE_{\prm}(\mathcal{B}_{\dims,s} , \cWpriv[\priv], \ns) \lesssim 
 \sqrt{ \frac{\dims s^{2/\prm}}{\ns\new{(\priv^2\land \newchange{\priv})}} 
 \newchange{\lor\frac{s^{2/\prm}\log\frac{2\dims}{s}}{\ns} }  }
\end{equation}
and
\begin{equation} \label{eqn:bern-est:comm}
 \sqrt{\frac{\dims s^{2/\prm}}{\ns\numbits}\lor\frac{s^{2/\prm}\log\frac{2\dims}{s}}{\ns}} \land 1 \lesssim \cE_{\prm}(\mathcal{B}_{\dims,s} , \cWcomm[\numbits], \ns) \lesssim \sqrt{\frac{\dims s^{2/\prm}}{\ns\numbits}\lor\frac{s^{2/\prm}\log\frac{2\dims}{s}}{\ns}}
\end{equation}
 For  $\prm=\infty$, we have the upper bounds
\begin{align*}
\cE_{\infty}(\mathcal{B}_{\dims,s} , \cWpriv[\priv], \ns) = \bigO{\sqrt{\frac{\dims\log s}{\ns\priv^2}}} \quad \text{and} \quad 	\cE_{\infty}(\mathcal{B}_{\dims,s} , \cWcomm[\numbits], \ns)=\bigO{\sqrt{\frac{\dims\log s}{\ns\numbits}\lor \frac{\log \dims}{\ns}} },
\end{align*}
while the lower bounds given in~\cref{eqn:bern-est:ldp,eqn:bern-est:comm} hold for $\prm=\infty$, too.\footnote{That is, the upper and lower bounds only differ by a $\log s$ factor for $\prm=\infty$.}
\end{theorem}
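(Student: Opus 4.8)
\textbf{Proof plan for \cref{theorem:mean:estimation:bernoulli}.}

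The plan is to establish the lower bounds (for all $\prm\in[1,\infty]$) by applying the machinery of \cref{sec:general-bound}--\cref{sec:assouad} to a carefully chosen perturbation family, and then to separately verify the upper bounds via explicit protocols. For the lower bounds, I would take $\zdims = \dims$ in the dense case $s=\dims$, and more generally use a perturbation family supported on an $s$-sparse pattern: fix a set $S\subseteq[\dims]$ of size $s$ and, for $z\in\bool^S$, set $\theta_z = \gamma z$ on coordinates in $S$ and $0$ elsewhere, so that $\p_z = \bigotimes_{j\in S}\rademacher{\frac{1}{2}(1+\gamma z_j)}\otimes\bigotimes_{j\notin S}\rademacher{\frac12}$. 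A direct computation of the likelihood ratio $\dv{\p_{z^{\oplus i}}}{\p_z}$ shows it equals $1 + \alpha_{z,i}\phi_{z,i}$ with $\phi_{z,i}(x) = -z_i x_i$ (up to normalization $\sqrt{1-\gamma^2}$) and $|\alpha_{z,i}|\asymp\gamma$; one checks \cref{assn:decomposition-by-coordinates}, \cref{assn:orthonormal} (the $x_i$ are independent and zero-mean under $\p_z$ after centering), and \cref{assn:subgaussianity} with $\sigma=O(1)$ since bounded coordinates are subgaussian. One also verifies \cref{assn:additive-loss}: for $s$-sparse vectors $\lp[\prm](\theta_z,\theta_{z'})^\prm = \gamma^\prm\dist{z}{z'}\cdot 2^\prm$ (constants absorbed), which after rescaling $\gamma$ gives the required form with the $(\dist{z}{z'}/(\sparam\zdims))^{1/\prm}$ factor, where $\zdims = s$.

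Next I would invoke \cref{lem:mean:estimation} with $\sparam = 1/2$ in the dense regime and $\sparam$ small in the sparse regime: any $(\ns,\dst)$-estimator with $\dst$ below the claimed rate forces $\frac1\zdims\sum_i\totalvardist{\p_{+i}^{Y^\ns}}{\p_{-i}^{Y^\ns}}\ge 1/4$. On the other hand, \cref{cor:ldp} bounds this quantity above by $\frac{7}{\zdims}\ns\alpha^2((e^\priv-1)^2\land e^\priv) \asymp \frac{\ns\gamma^2\priv^2}{s}$ for $\priv\le 1$, and \cref{cor:simple-numbits} (together with the subgaussian bound \cref{eqn:cor-numbits:subgaussian}) gives $\asymp \frac{\ns\gamma^2 \numbits}{s}$ in the communication case. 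Setting these $\asymp 1/16$ and solving for the threshold $\dst$ (recalling $\gamma\asymp \dst/s^{1/\prm}$ from the loss normalization, which is where the $s^{2/\prm}$ enters) yields the LDP lower bound $\sqrt{\dims s^{2/\prm}/(\ns\priv^2)}$ after also running the argument in the dense case $\zdims=\dims$ and taking the stronger of the two; the extra additive term $s^{2/\prm}\log(2\dims/s)/\ns$ in the communication bound comes from a separate, information-unconstrained support-identification argument (standard: locating the $s$ nonzero coordinates among $\dims$ costs $\log\binom{\dims}{s}\asymp s\log(2\dims/s)$ bits of ``effective samples''), or equivalently by choosing $\sparam$ appropriately in the sparse Assouad bound. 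The $\prm=\infty$ lower bound follows from the $\prm = \log s$ bound via the Hölder-type observation in \cref{ft:holder:logd} restricted to $s$-sparse vectors, losing only a constant.

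For the upper bounds I would give (and defer details to \cref{app:ub:bernoulli}) explicit protocols: in the LDP case, randomized-response / Laplace-type mechanisms on each coordinate with the sample budget split across the $\dims$ coordinates (or, in the sparse case, a two-phase scheme that first privately identifies a superset of the support and then estimates those coordinates), achieving $\lp[2]$ error $\sqrt{\dims/(\ns\priv^2)}$ per the standard analysis and hence $\lp[\prm]$ error $s^{1/\prm}$ times the per-coordinate error after accounting for sparsity; the $\log s$ (resp.\ $\log\dims$) factor for $\prm=\infty$ is a union bound over the $s$ (resp.\ $\dims$) coordinate estimates. In the communication case I would use a coordinate-sampling scheme where each of the $\ns$ players sends $\numbits$ (sign, index) messages about a random block of coordinates, plus the interactive improvement alluded to in \cref{rk:interactive:gap} to remove the logarithmic gap in the sparse regime. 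The main obstacle is the sparse regime: getting the lower bound to include \emph{both} the $\dims s^{2/\prm}/(\ns\numbits)$ term (from the $\zdims=\dims$ perturbation) \emph{and} the $s^{2/\prm}\log(2\dims/s)/\ns$ term (from support localization) with matching upper bounds — this requires choosing the perturbation dimension $\zdims$ and the sparsity parameter $\sparam$ in \cref{lem:mean:estimation} cleverly so that Assouad's bound sees the support-identification cost, and designing an interactive protocol on the upper-bound side that is genuinely better than the naive noninteractive one; the $\lp[\infty]$ endpoint and the constraint $s\ge 4\log\dims$ are comparatively routine bookkeeping to make the two terms and the union bounds line up.
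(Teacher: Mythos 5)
There is a genuine gap in your lower-bound construction for the sparse regime. The perturbation family you describe — a \emph{fixed} support $S\subseteq[\dims]$ of size $s$ with $z\in\bool^S$, so $\zdims=s$ and $\sparam=1/2$ — can only yield, via \cref{lem:mean:estimation} together with \cref{cor:ldp} (resp.\ \cref{cor:simple-numbits}), the bound $\zdims\lesssim \ns\gamma^2\priv^2$ with $\gamma\asymp\dst/s^{1/\prm}$, i.e.\ $\dst\gtrsim\sqrt{s^{1+2/\prm}/(\ns\priv^2)}$. This is weaker than the claimed $\sqrt{\dims s^{2/\prm}/(\ns\priv^2)}$ by a factor $\sqrt{\dims/s}$: with a known support there is simply no mechanism for the ambient dimension $\dims$ to enter. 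Your proposed fix — "also running the argument in the dense case $\zdims=\dims$ and taking the stronger of the two" — does not repair this, because the dense construction with $\sparam=1/2$ produces mean vectors with roughly $\dims/2$ nonzero coordinates; these lie outside $\mathcal{B}_{\dims,s}$ when $s\ll\dims$, so that argument does not lower bound the minimax risk over the sparse family. The paper's proof resolves exactly this tension by keeping $\zdims=\dims$ (perturbing \emph{all} coordinates, with $\mu_z=\frac12\gamma(z+\one[\dims])$ so that $z_i=-1$ gives a zero coordinate) but taking a \emph{biased} prior $\sparam=s/(2\dims)$, so that $\mu_Z$ is $s$-sparse with probability at least $1-\sparam/4$; this is precisely why \cref{lem:mean:estimation} is stated with the relaxed condition $\bPr{\p_Z\in\cP_\Theta}\ge 1-\sparam/4$, and why the hypothesis $s\ge 4\log\dims$ (needed for the binomial concentration in \cref{prop:numone}) appears — it is not "routine bookkeeping" but an artifact of the random-sparsity prior your construction avoids and therefore cannot reproduce. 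With that choice, \cref{assn:additive-loss} forces $\gamma\asymp\dst/s^{1/\prm}$ while Assouad runs over all $\dims$ coordinates, giving $\dims\lesssim\ns\gamma^2\priv^2$ (resp.\ $\ns\gamma^2\numbits$) and hence the stated rates.

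Two smaller points. For the additive $s^{2/\prm}\log(2\dims/s)/\ns$ term in the communication bound, the paper does not fold a support-identification cost into the information-constrained Assouad bound as you suggest; it simply notes that for $\numbits\ge\dims$ the constraint is vacuous and invokes the known unconstrained minimax rate — your route is more elaborate than necessary but not wrong in spirit. Your $\prm=\infty$ reduction (plug $\prm=\log s$) and your upper-bound sketch (two-phase interactive protocols: iterative support pruning via randomized response or block bit-sending, then estimation on the surviving set, with a union bound for $\lp[\infty]$) do match the paper's approach in \cref{app:ub:bernoulli}.
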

\begin{remark}\label{rk:interactive:gap}
  It is worth noting that previous work had shown, in the simpler \emph{noninteractive} model, a rate lower bound scaling as $\sqrt{\dims s/(\ns\numbits) \log(2\dims/s)}$  for the specific case of $\lp[2]$ loss (see, for instance,~\cite[Theorem 7]{HOW:18:v3} for the sparse Gaussian case, which implies the Bernoulli one). An analogous phenomenon was observed for local privacy (\eg~\cite{DJW:17}). Thus, by removing this logarithmic factor from the upper bound, our result establishes the first (to the best of our knowledge) separation between interactive and noninteractive protocols for sparse mean estimation under communication or local privacy constraints. A similar observation holds for~\cref{theorem:mean:estimation:gaussian} below, for sparse Gaussian mean estimation under local privacy.
\end{remark}

\begin{proof}
Fix $\prm\in[1,\infty)$. Let $\zdims=\dims$, $\cZ=\bool^\dims$, and 
$\sparam=\frac{s}{2\dims}$; and suppose that, for some $\dst \in 
(0,1/8]$, 
there exists an $(\ns,\dst)$-estimator for $\mathcal{B}_{\dims,s}$ under 
$\lp[\prm]$ loss.
We fix a parameter $\gamma\in(0,1/2]$, which will be chosen as a function 
of $\dst,\dims,\prm$ later. Consider the set of $2^\dims$ product 
Bernoulli distributions $\{\p_z\}_{z\in\cZ}$, where $\mu(\p_z)=\mu_z 
\eqdef \frac{1}{2}\gamma (z+\one[\dims])$
\new{(so the sparsity of the mean vector is equal to the number of positive coordinates of $z$)}. 
We have, for $z\in\cZ$, 
\begin{equation}
\nonumber
  \p_z(x) = \frac{1}{2^\dims}\prod_{i=1}^\dims \Paren{1+\frac{1}{2}\gamma (z_i+1)x_i}, \qquad x\in\cX.
\end{equation} 
It follows for $z\in\cZ$ and $i\in[\dims]$ that 
 \begin{align}
    \p_{z^{\oplus i}}(x) =  \frac{1+\frac{1}{2}\gamma(1-z_i)x_i}{1+\frac{1}{2}\gamma(1+z_i)x_i}
    \p_{z}(x)
      = \Paren{1-\gamma\frac{z_ix_i}{1+\frac{1}{2}\gamma(1+z_i)x_i}}\p_{z}(x)
= \Paren{1+\phi_{z,i}(x)}\p_{z}(x) \label{eqn:prod-bernoulli_assn}
 \end{align}
 where $\phi_{z,i}(x)\eqdef -\frac{\gamma z_ix_i }{1+\frac{1}{2}\gamma(1+z_i)x_i}$.
We can verify that, for $i\neq j$,
 \[
    \bE{\p_z}{\phi_{z,i}(X)} = 0,\quad \bE{\p_z}{\phi_{z,i}(X)^2} = \frac{\gamma^2}{1-\frac{1}{2}\gamma^2(1+z_i)}, \text{ and } \bE{\p_z}{\phi_{z,i}(X)\phi_{z,j}(X)} = 0, 
 \]
 so that~\cref{assn:decomposition-by-coordinates,assn:orthonormal} are satisfied for $\kappavar\eqdef 2\gamma^2$.
Moreover, using, \eg Hoeffding's lemma ($cf.$~\cite{Boucheron:13}),
for $\gamma < 1$, the random vector $\phi_{z}(X) = (\phi_{z,i}(X))_{i\in[\dims]}$ is \new{$\frac{\gamma^2}{(1-\gamma^2)^2}$}-subgaussian. \new{Thus,}~\cref{assn:subgaussianity} holds as well, and we can invoke both parts of~\cref{thm:avg:coordinate}.
 
 	Let $\numone{z} \eqdef \abs{\{ i \in [\dims] \mid z_i = 1\}}$, so that 
 	$\norm{\mu_z}_0 = \sum_{i=1}^\dims \frac{1}{2}(1+z_i) = \numone{z} 
 	$. 
 	The next 
 	claim, which follows from standard bounds for binomial random 
 	variables, 
 	states 
 	that when $Z 
 	\sim 
 	\rademacher{\sparam}^{\otimes\dims}$,  $\mu_Z$ is $s$-sparse with 
 	high probability.
 	\begin{fact} \label{prop:numone}
 		Let $Z 
 		\sim 
 		\rademacher{\sparam}^{\otimes\dims}$, where $\sparam\dims \ge  4\log 
 		\dims$. Then
 		$
 		\bPr{\numone{Z} \leq  2\sparam\dims} \geq 1- \tau/4.
 		$
 	\end{fact}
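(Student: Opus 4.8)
The plan is to control the upper tail of the binomial random variable $\numone{Z}=\sum_{i=1}^\dims \indic{Z_i=1}$, which has distribution $\mathrm{Bin}(\dims,\sparam)$ with mean $\mu\eqdef\sparam\dims\ge 4\log\dims$. I want to show $\bPr{\numone{Z}>2\sparam\dims}\le \sparam/4$, i.e.\ that $\numone{Z}$ does not exceed twice its mean except with small probability. The natural tool is a multiplicative Chernoff bound for the binomial upper tail: for any $\delta>0$,
\[
\bPr{\numone{Z}\ge (1+\delta)\mu}\le \exp\Paren{-\frac{\delta^2}{2+\delta}\mu}.
\]
Taking $\delta=1$ gives $\bPr{\numone{Z}\ge 2\mu}\le e^{-\mu/3}$.

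It then remains to verify that $e^{-\mu/3}\le \sparam/4$ under the hypothesis $\mu=\sparam\dims\ge 4\log\dims$. Since $\sparam = \mu/\dims \ge (4\log\dims)/\dims$, it suffices to show $e^{-\mu/3}\le \frac{\log\dims}{\dims}$ whenever $\mu\ge 4\log\dims$; and because $e^{-\mu/3}$ is decreasing in $\mu$, the worst case is $\mu=4\log\dims$, where $e^{-\mu/3}=e^{-(4/3)\log\dims}=\dims^{-4/3}$. Thus it is enough to check $\dims^{-4/3}\le \frac{\log\dims}{4\dims}$, i.e.\ $\dims^{1/3}\log\dims\ge 1/4$, which holds for all integers $\dims\ge 1$ (and for $\dims=1$ the hypothesis $\sparam\dims\ge 4\log\dims=0$ is vacuous, or one simply notes $s\ge 4\log\dims$ forces $\dims\ge 2$ anyway). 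Hence $\bPr{\numone{Z}\le 2\sparam\dims}\ge 1-\sparam/4 \ge 1-\tau/4$, using $\sparam=\tau$ in the notation of the fact.

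The only mild subtlety — and the single ``obstacle'' worth flagging — is matching constants: the crude bound $e^{-\mu/3}$ from $\delta=1$ must be shown to beat $\sparam/4$, and since $\sparam$ can itself be as small as $\Theta((\log\dims)/\dims)$, one genuinely needs the assumption $\sparam\dims\ge 4\log\dims$ (a smaller constant than $4$ would not suffice with the $\delta=1$ Chernoff bound, though it would with a sharper large-deviation estimate). Everything else is a routine substitution. One can alternatively invoke a stated binomial tail bound (e.g.\ from~\cite{Boucheron:13}) verbatim in place of re-deriving the Chernoff inequality; I would cite it rather than reprove it.
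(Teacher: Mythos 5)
Your Chernoff-bound argument is exactly the route the paper intends: the paper gives no proof of this fact at all, asserting only that it ``follows from standard bounds for binomial random variables,'' and your multiplicative Chernoff bound with $\delta=1$ together with the numerical check under $\sparam\dims\ge 4\log\dims$ is precisely that standard argument. The only blemishes are cosmetic --- the inequality $\dims^{-4/3}\le \log\dims/(4\dims)$ rearranges to $\dims^{1/3}\log\dims\ge 4$ rather than $1/4$ (and you only need the weaker $\dims^{-4/3}\le\log\dims/\dims$ anyway), and the very small-$\dims$ cases you wave at are vacuous since no probability $\sparam\le 1/2$ can satisfy the hypothesis there --- so your proof is correct in the regime where the fact is used.
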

 	\zmargin{Do we need a proof for this?}
 	\cmargin{Naaay.}
 	\noindent Hence the construction satisfies $\bP{Z}{\p_Z \in 
 		\mathcal{B}_{\dims, s}} \le 1 - \tau/4$, as required in 
 	\cref{lem:mean:estimation}.

We now choose $\gamma =  \gamma(\prm) \eqdef 
\frac{4\dst}{(s/2)^{1/\prm}} \in (0,1/2]$,
which implies that~\cref{assn:additive-loss} holds since
\[
    \lp[\prm](\mu(\p_z),\mu(\p_{z'})) = \gamma\dist{z}{z'}^{1/\prm} = 4\dst\mleft(\frac{\dist{z}{z'}}{\sparam\dims}\mright)^{1/\prm}.
\]
Therefore, we can apply~\cref{lem:mean:estimation} as well.
\new{For $\cWpriv[\priv]$, we prove the two parts of the lower bound separately, depending on whether $\priv \leq 1$. First,} upon combining 
the bounds obtained by~\cref{cor:ldp,lem:mean:estimation} \new{(specifically, for the former,~\eqref{eq:cor-ldp})}, we 
get%
\[
\dims\leq 112\ns\kappavar(e^\priv-1)^2,
\]
whereby, upon recalling that \new{$\kappavar=2\gamma^2$},
and using the value of $\gamma=\gamma(\prm)$ above,
it follows that
\[
\frac{1}{\new{3584}}\cdot\frac{\dims (s/2)^{\frac 2 
\prm}}{\ns(e^\priv-1)^2}\leq \dst^2.
\]
Thus, $\cE_{\prm}(\mathcal{B}_{\dims,s} , \cWpriv[\priv], \ns) = 
\bigOmega{\sqrt{ \frac{\dims s^{2/\prm}}{\ns\priv^2} }}$ for $\priv \in 
(0,1]$. \new{For the second part of the bound, which dominates for $\priv> 1$, observe that~\cref{assn:subgaussianity} holds with \smash{$\sigma^2\eqdef \frac{\gamma^2}{(1-\gamma^2)^2} \leq 2\gamma^2$}; allowing us to apply the second part of~\cref{cor:ldp},~\eqref{eqn:cor-ldp:subgaussian}, which as before combined with~\cref{lem:mean:estimation} yields
\[
\dims\leq 224\ns\sigma^2\priv \leq 448\ns\gamma^2\priv,
\]
and again from the setting of $\gamma$ we get $\cE_{\prm}(\mathcal{B}_{\dims,s} , \cWpriv[\priv], \ns) = 
\bigOmega{\sqrt{ \frac{\dims s^{2/\prm}}{\ns\priv} }}$.
}

Similarly, for $\cWcomm[\numbits]$, again since~\cref{assn:subgaussianity} holds with \new{$\sigma^2 \leq 2\gamma^2$},
upon combining the bounds obtained by~\cref{cor:simple-numbits,lem:mean:estimation}, we get
\[
\frac{\dims s^{\frac{2}{\prm}}}{\new{28672}\ns\numbits} \le \dst^2,
\]
which gives
$\cE_{\prm}(\mathcal{B}_{\dims,s} , \cWcomm[\numbits], \ns) = 
\Omega\big(\sqrt{ \frac{\dims s^{2/\prm}}{\ns\numbits} } \land 1\big)$. 
Finally, note that for $\numbits \geq \frac{\dims}{\log(2\dims/s)}$ 
\newchange{(and similarly for $\priv \geq \frac{\dims}{\log(2\dims/s)}$)}, 
the lower bound follows from the minimax rate in the unconstrained 
setting, which can be seen to be 
$\Omega\big(\sqrt{{s^{2/\prm}\log(2\dims/s)}/{\ns}} 
\big)$~\cite{Wainwright09,Wu20}. This completes the proof.

This handles the case $\prm\in[1,\infty)$. For $\prm=\infty$, the lower 
bounds immediately follow from plugging $\prm = \log s$ in the previous 
expressions, as discussed in~\cref{ft:holder:logd}. 
\end{proof}
\begin{remark}
  \label{rk:differentconstraints}
Although we stated for simplicity the lower bounds of~\cref{theorem:mean:estimation:bernoulli} in the case where all $\ns$ players have a similar local constraints (\ie, same privacy parameter $\priv$, or same bandwidth constraint $\numbits$), it is immediate to check from the application of~\cref{thm:avg:coordinate} that the result extends to different constraints for each player; replacing \new{$\ns(\priv^2\land\priv)$} and $\ns\numbits$ in the statement by \new{$\sum_{t=1}^\ns \priv_t^2\land\priv_t$} and $\sum_{t=1}^\ns \numbits_t$, respectively. A similar remark applies to~\cref{theorem:mean:estimation:gaussian,theorem:mean:estimation:discrete} as well.
\end{remark}

\subsection{Gaussian family}
\label{sec:gaussian}
Similar to the previous section, we denote the mean by $\mu$ instead of
$\theta$, denote the estimator by $\hat{\mu}$, and consider the minimax 
error rate 
$\cE_{\prm}(\mathcal{G}_{\dims,s}, \cW, \ns)$
of  mean estimation for
$\cP_\Theta=\mathcal{G}_{\dims,s}$
using $\cW$ under $\lp[\prm]$ loss.

We derive a lower bound for $\cE_{\prm}(\mathcal{G}_{\dims,s}, \cW, \ns)$
under local privacy (captured by $\cW=\cWpriv[\priv]$) and communication
(captured by $\cW=\cWcomm[\numbits]$) 
constraints.\footnote{As in the Bernoulli case, we here focus for simplicity 
on the case where the communication (resp., privacy) parameters are the 
same for all players, but our lower bounds easily extend.} Recall that for 
product Bernoulli
mean  estimation
we had optimal bounds for both privacy and communication constraints
for all finite $\prm$. For Gaussians, we will obtain tight bounds for
privacy constraints for $\priv \in (0, 1]$. 
However, for communication
constraints \new{and privacy constraints when $\priv \ge 1$},  our bounds for 
Gaussian distributions are 
tight only in specific regimes of $\ns$ \new{up to logarithmic factors}. We 
state our general
result and provide some remarks before providing the
proofs.

We defer the estimation schemes and their analysis (\ie{} upper
bounds) to 
\ifnum\issupplement=1
	\cref{app:ub:gaussian};
\else
	the Supplement (Appendix~C.2); 
\fi
they follow from a simple 
reduction
from the Gaussian estimation problem to the product Bernoulli one,
which enables us to invoke the protocols for the latter task in both
the communication-constrained and locally private settings.
\begin{theorem}
	\label{theorem:mean:estimation:gaussian}
	Fix $\prm\in[1,\infty)$. \new{For $4\log \dims\leq 
	s\leq \dims$, under LDP constraints, 
	when $\priv\in(0,1]$,  }
	\begin{equation} \label{eqn:gauss-est:ldp}
	\sqrt{ \frac{\dims s^{2/\prm}}{\ns\priv^2} } \land 1 \lesssim 
	\cE_{\prm}(\mathcal{G}_{\dims,s} , \cWpriv[\priv], \ns) \lesssim \sqrt{ 
	\frac{\dims s^{2/\prm}}{\ns\priv^2} }  
	\end{equation}
	and \new{when $\priv > 1$,
		\begin{equation} \label{eqn:gauss-est:ldp-largerho}
		\sqrt{ \frac{\dims s^{2/\prm}}{\ns\priv \log{(\ns\dims)}}\newchange{\lor\frac{s^{2/\prm}\log\frac{2\dims}{s}}{\ns} }  } \land 1 
		\lesssim 
		\cE_{\prm}(\mathcal{G}_{\dims,s} , \cWpriv[\priv], \ns) \lesssim \sqrt{ 
			\frac{\dims s^{2/\prm}}{\ns \newchange{\priv}}   
			\newchange{\lor\frac{s^{2/\prm}\log\frac{2\dims}{s}}{\ns} }  }   
		\end{equation}
}
	\new{Under communication constraints, 
	\begin{equation} \label{eqn:gauss-est:comm}
	\sqrt{\frac{\dims\sprs^{2/\prm}}{\ns\numbits 
	\log(\dims \ns)}\lor\frac{s^{2/\prm}\log\frac{2\dims}{s}}{\ns}}
	\land 1 \lesssim \cE_{\prm}(\mathcal{G}_{\dims,s} , 
	\cWcomm[\numbits], \ns) \lesssim \sqrt{\frac{\dims 
			s^{2/\prm}}{\ns\numbits}\lor\frac{s^{2/\prm}\log\frac{2\dims}{s}}{\ns}}
	\end{equation}
}
	For  $\prm=\infty$, we have the upper bounds
	\begin{align*}
	\cE_{\infty}(\mathcal{G}_{\dims,s} , \cWpriv[\priv], \ns) = 
	\bigO{\sqrt{\frac{\dims\log s}{\ns\priv^2}}} \quad \text{and} \quad 	
	\cE_{\infty}(\mathcal{G}_{\dims,s} , \cWcomm[\numbits], 
	\ns)=\bigO{\sqrt{\frac{\dims\log s}{\ns\numbits}\lor \frac{\log 
	\dims}{\ns}} },
	\end{align*}
	while the lower bounds given 
	in~\cref{eqn:gauss-est:ldp,eqn:gauss-est:ldp-largerho,eqn:gauss-est:comm} hold for 
	$\prm=\infty$, too.\footnote{That is, the upper and lower bounds only 
	differ by a $\log s$ factor for $\prm=\infty$ in the privacy case.}
\end{theorem}
\newest{We emphasize that, as discussed in~\cref{ssec:results,ssec:previous}, to the best of our knowledge~\cref{theorem:mean:estimation:gaussian} provides the first lower bounds for interactive Gaussian mean estimation under communication and privacy constraints.}
%
\begin{proof}[Proof of~\cref{theorem:mean:estimation:gaussian}]
	Let $\varphi$ denote the probability density function of the standard
	Gaussian distribution $\gaussian{\zero}{\II}$.  
	Fix $\prm\in[1,\infty)$. Let $\zdims=\dims$, $\cZ=\bool^\dims$, and 
	$\sparam=\frac{s}{2\dims}$; and suppose that, for some $\dst \in 
	(0,1/8]$, 
	there exists an $(\ns,\dst)$-estimator for $\mathcal{G}_{\dims,s}$ under 
	$\lp[\prm]$ loss.
	We fix a parameter $\gamma\eqdef \gamma(\prm) \eqdef 
	\frac{4\dst}{(s/2)^{1/\prm}}\in(0,1/2]$, and consider the set of
	distributions $\{\p_z\}_{z\in\cZ}$ of all $2^\dims$ spherical Gaussian
	distributions with mean  
	$\mu_z \eqdef \gamma (z+\one[\dims])$, where $z\in\cZ$. Again, note 
	that $\norm{\mu_z}_0 = \sum_{i=1}^\dims \indic{z_i=1} = 
	\numone{z}$, and~\cref{prop:numone} applies here too. Then by the 
	definition of
	Gaussian density, for $z\in\cZ$,  
	\begin{equation}
	\label{eq:gaussian-pdf}
	\p_z(x) = e^{-\gamma^2\normtwo{\mu_z}^2/2}\cdot 
	e^{\gamma\dotprod{x}{z+\one[\dims]}}\cdot \varphi(x).
	\end{equation}	
	Therefore, for $z\in\cZ$ and $i\in[\dims]$, we have
	\begin{equation}
	\label{eqn:gaussian_assn}
	\p_{z^{\oplus i}}(x) 
	=  e^{-2\gamma x_iz_i}e^{2\gamma^2z_i}\cdot\p_{z}(x)
	= \Paren{1+\phi_{z,i}(x)}\cdot\p_{z}(x),
	\end{equation}
	where 
	$\phi_{z,i}(x)\eqdef 1-e^{-2\gamma 
			x_iz_i}e^{2\gamma^2z_i}$. By using the 
	Gaussian moment-generating function, for $i\neq j$, 
	\[
	\bE{\p_z}{\phi_{z,i}(X)} = 0,\quad \bE{\p_z}{\phi_{z,i}(X)^2} = 
	\new{e^{4\gamma^2}-1}, \text{ 
		and } \bE{\p_z}{\phi_{z,i}(X)\phi_{z,j}(X)} = 0,
	\]
	so that~\cref{assn:decomposition-by-coordinates,assn:orthonormal}
	are satisfied \new{for $\kappavar\eqdef e^{4\gamma^2}-1$}. By our 
	choice of $\gamma$ and the assumption on $\dst$, 
	one can check that~\cref{assn:additive-loss} holds:  
	\[
	\lp[\prm](\mu(\p_z),\mu(\p_{z'})) 
	= 4\dst\mleft(\frac{\dist{z}{z'}}{\sparam\dims}\mright)^{1/\prm}.
	\]
	Moreover, similar to the product of Bernoulli case, using 
	\cref{prop:numone}, we can show that $\bP{Z}{\p_Z \in 
		\mathcal{G}_{\dims, s}} \le 1 - \tau/4$. This allows us to 
	apply~\cref{lem:mean:estimation}.
	
	\subsubsection{Privacy constraints \new{for $\priv \in (0,1)$}} For 
	$\cWpriv[\priv]$, upon 
	combining the
	bounds obtained by~\cref{cor:ldp,lem:mean:estimation}, we
	get
	\[
	\dims\leq
	112\ns\alpha^2(e^\priv -1)^2,
	\]
	whereby,
	upon noting that 
	$\alpha^2=e^{4\gamma^2}-1 \leq 8\gamma^2$ holds since $\gamma 
	\leq
	1/2$, and using the value of $\gamma=\gamma(\prm)$
	above, it follows that
	\[
	\dst^2\ge\frac{\dims (s/2)^{\frac 2 \prm}}{14336\cdot  \ns 
		(e^\priv 
		-1)^2}.
	\]
	Thus, $\cE_{\prm}(\mathcal{G}_{\dims,s}, \cWpriv[\priv], \ns)
	= \bigOmega{\sqrt{\frac{\dims s^{2/\prm}}{\ns\priv^2}} \land 1}$. This 
	establishes
	the lower bounds for $\cWpriv[\priv]$. (Recall that the bound for
	$\prm=\infty$ then follows from setting $\prm = \log \dims$.)
	
	\subsubsection{Communication constraints, \new{and privacy 
	constraints 
	for $\priv \ge 1$}} \new{For these cases, to prove a lower bound 
	with the desired dependence on $\priv$ or $\numbits$, we will need to use the 
	tighter bounds in \cref{cor:ldp,cor:simple-numbits} which 
	hold only under \cref{assn:subgaussianity}.}
	This, however, leads to an issue: the random vector
	$\phi_{z}(X) = (\phi_{z,i}(X))_{i\in[\dims]}$ is not subgaussian, due
	to the one-sided exponential growth, and
	therefore~\cref{assn:subgaussianity} does not hold.

	To overcome this and still obtain a linear dependence on $\numbits$ 
	\new{(or $\priv$)}
	(instead of the suboptimal $2^\numbits$ \new{(or $e^\priv$)}), we will 
	consider instead the class of ``truncated'' 
	Gaussian 
	distributions, whose 
	corresponding $\phi$ functions are subgaussian; and argue that these truncated distributions 
	are close enough to the original Gaussian distributions such a lower bound in the truncated case implies one in the original Gaussian case.
	
	In particular, we consider the following collection of truncated 
	Gaussian distributions. For $z \in \cZ$, let $\p_z$ be the density 
	function of a spherical Gaussian
	distribution with mean  
	$\mu_z$ as defined in \cref{eq:gaussian-pdf}. For a truncation bound 
	$\trunc$, let $\p_{z,B}$ be the distribution of $X \sim \p_z$ conditioned 
	on the event that $\norminf{X} \le B$. That is, we have, for $x\in\R^\dims$,
	\[
		\p_{z,B}(x) = C_z \p_z(x)
		\indic{\norminf{X} \le B},
	\]
	where $C_z = 1/\bP{X\sim\p_z}{\norminf{X} \le B}$. Then the following 
	bound follows from standard Gaussian concentration bound on each 
	dimension and a union bound over all dimensions.
	\begin{fact}
		Setting $B \eqdef 4\sqrt{\ln(\dims \ns)}$, we have, for every $z \in \cZ$,
		$
			\totalvardist{\p_{z,B}}{\p_z} \le \frac{1}{\dims^7 \ns^8}.
		$\cmargin{This is a bit overkill, I guess, but works.}
	\end{fact}
	Let $\p_{z,B}^{Y^\ns}$ be the distribution of the messages obtained by 
	executing the protocol when each user gets a sample from $\p_{z,B}$ 
	and let the corresponding mixtures be denoted by $\p_{+i, B}^{Y^\ns}$ 
	and $\p_{-i, B}^{Y^\ns}$. Then we have
	\begin{align*}
		 \totalvardist{\p_{+i}^{Y^\ns}}{\p_{-i}^{Y^\ns}} & \le  
		 \totalvardist{\p_{+i, B}^{Y^\ns}}{\p_{-i, B}^{Y^\ns}} + 
		\totalvardist{\p_{+i}^{Y^\ns}}{\p_{+i, B}^{Y^\ns}} + 
		\totalvardist{\p_{-i, B}^{Y^\ns}}{\p_{-i}^{Y^\ns}}  \\
		& \le \totalvardist{\p_{+i, B}^{Y^\ns}}{\p_{-i, B}^{Y^\ns}}  + \max_z 
		\left\{\totalvardist{\p_{z}^{Y^\ns}}{\p_{z, B}^{Y^\ns}} + 
			\totalvardist{\p_{z, B}^{Y^\ns}}{\p_{z}^{Y^\ns}}\right\} \\
			& \le \totalvardist{\p_{+i, B}^{Y^\ns}}{\p_{-i, B}^{Y^\ns}}  + 2 
			\max_z \totalvardist{\p_{z,B}^{\otimes n}}{\p_z^{\otimes n}} \\
			& \le \totalvardist{\p_{+i, B}^{Y^\ns}}{\p_{-i, B}^{Y^\ns}}  + 2 \ns 
			\max_z \totalvardist{\p_{z,B}}{\p_z} \\
			& \le  \totalvardist{\p_{+i, B}^{Y^\ns}}{\p_{-i, B}^{Y^\ns}}  + 
			\frac{2}{\dims^7\ns^7}.
	\end{align*}
	The third inequality follows from data processing inequality and the 
	fourth inequality follows from subadditivity of TV distance.
	
	Combining this with \cref{lem:mean:estimation}, for any protocol that 
	correctly learns the Gaussian family, we must have 
	\begin{equation}
		\label{eq:assouad-type:truncated}
		\frac{1}{\dims}\sum_{i = 1}^\dims \totalvardist{\p_{+i, 
		B}^{Y^\ns}}{\p_{-i, B}^{Y^\ns}} \ge 
		\frac{1}{8}.
	\end{equation}
	\new{Next we show that the $\phi$ functions corresponding to $\p_{z, B}$'s 
		are subgaussian and establish the corresponding upper bounds on the 
		average information 
		bound above. 		
		Note that
		\begin{align} \label{eq:phiB_def}
			\phi^B_{z,i}(x) & \eqdef \frac{\p^B_{z^{\oplus i}}(x)}{\p^B_{z}(x)}
			 - 1=  \frac{C_{z^{\oplus i}}}{C_z}e^{-2\gamma 
			x_iz_i}e^{2\gamma^2z_i} \indic{\norminf{x} \le B}- 1
		\end{align}
		By the inequality $|ab - 1| \le |a|\cdot |b-1| + |a-1|$, we have
		have, for all $z \in \cZ$,
		\begin{align*}
			\abs{\frac{C_{z^{\oplus i}}}{C_z} - 1}
			&\leq \frac{1}{C_z}\abs{C_{z^{\oplus i}}-1} + \abs{\frac{1}{C_z}-1} 
			\leq \abs{\frac1{\bP{X\sim\p_{z^{\oplus 
						i}}}{\norminf{X} \le B}} - 1} + \abs{\bP{X\sim\p_z}{\norminf{X} 
						\le 
				B} - 1}\\
			& \le  \frac{10}{d^7 n^7}.
		\end{align*}
		Moreover, for all $z \in \cZ$, for $\gamma \le \frac{1}{3B}$,
		\begin{align}
			\abs{e^{-2\gamma 
				x_iz_i}e^{2\gamma^2z_i} \indic{\norminf{x} \le B} - 1} \le 
				\abs{e^{2 \gamma^2 + 2 \gamma B} - 1} \le \abs{e^{3\gamma B} 
				- 1}  \le 6 \gamma B.
		\end{align}
		 Hence, applying the inequality $|ab - 1| \le |a|\cdot|b-1| + |a-1|$ again on 
		 \cref{eq:phiB_def}, we have for $\gamma \le \frac{1}{3B}$, 
		 \[
		 		|\phi^B_{z,i}(x) | \le 12 \gamma B +  \frac{10}{d^7 n^7}.
		 \]
		 Thus, we get that for all $z \in \cZ, i \in [\dims]$, $\phi^B_{z,i}$ is 
		 subgaussian with proxy $\sigma_B  = 12 \gamma B +  \frac{10}{d^7 
		 n^7}$. 
	 
	 	\noindent Under communication constraints, applying 
	 	\cref{cor:simple-numbits}, 
	 	we get 
	 	\[
	 		\Paren{\frac{1}{\dims}\sum_{i = 1}^\dims \totalvardist{\p_{+i, 
	 				B}^{Y^\ns}}{\p_{-i, B}^{Y^\ns}} }^2 \le \frac{14}{\dims}  
	 				\sigma_B^2 \ns 
	 				 \numbits.
	 	\]
	 	To conclude, we observe that by plugging our setting of $\gamma=\gamma(\prm)$ in the above inequality, we 
	 	must have
	 	\[
	 		\dst^2 \geq \frac{\dims (s/2)^{\frac 2 \prm}}{14336\cdot  
	 	\ns  \cdot B ^2 \numbits }
	 	\] in order to satisfy~\cref{eq:assouad-type:truncated}, 
 	hence proving the desired lower bound. The lower bound for LDP with 
 	$\priv > 1$ follows similarly by applying \cref{cor:ldp}.}
\end{proof}
\subsection{Discrete distribution estimation}
\label{sec:discrete}
\newcommand{\Ddims}{D}
We derive a lower bound for $\cE_\prm(\distribs{\dims}, \cW, \ns)$,
the minimax rate for discrete density estimation,
under local privacy and communication constraints.
\begin{theorem}
\label{theorem:mean:estimation:discrete}
Fix $\prm\in[1,\infty)$. For $\priv > 0$, and $\numbits \geq 1$, we 
have 
\begin{align}\label{eq:discrete:ldp}
\cE_\prm(\distribs{\dims}, \cWpriv[\priv], \ns) \gtrsim 
\sqrt{\frac{\dims^{2/\prm}}{\ns \Paren{(e^\priv - 1)^2 \land e^\priv}}\land 
\Paren{\frac{1}{\ns\Paren{(e^\priv - 1)^2 \land 
e^\priv}}}^{\frac{\prm-1}{\prm}}}\land 1
\end{align}
and
\begin{align}
\cE_\prm(\distribs{\dims}, \cWcomm[\numbits], \ns) \gtrsim  \sqrt{\frac{\dims^{2/\prm}}{\ns2^\numbits}\land \Paren{\frac{1}{\ns2^\numbits}}^{\frac{\prm-1}{\prm}}}\land 1\,.
\end{align}
\end{theorem}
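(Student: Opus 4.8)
The plan is to instantiate the framework of \cref{lem:mean:estimation} together with \cref{cor:ldp,cor:simple-numbits} using a ``sub-simplex'' version of the classical perturbed-pairs (Paninski-type) construction, tuned by a single dimension parameter so that the two terms in the statement arise from two choices of that parameter. Fix an even integer $2 \le \Ddims \le \dims$ and a scalar $\gamma \in (0,1/2]$, both to be chosen below; set $\zdims \eqdef \Ddims/2$, $\cZ \eqdef \bool^\zdims$, and to each $z \in \cZ$ associate the pmf $\p_z = \theta_z \in \distribs{\dims}$ supported on $\{1,\dots,\Ddims\}$ with
\[
  \p_z(2j-1) \eqdef \frac{1+\gamma z_j}{\Ddims}, \qquad \p_z(2j) \eqdef \frac{1-\gamma z_j}{\Ddims}, \qquad j \in [\Ddims/2],
\]
and $\p_z(j) \eqdef 0$ for $j > \Ddims$. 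Since $\gamma \le 1$, every such $\p_z$ is a genuine element of $\distribs{\dims}$, so $\bP{Z}{\p_Z \in \distribs{\dims}} = 1$ whatever the law of $Z$. A direct computation gives, for all $z \in \cZ$ and $i \in [\zdims]$,
\[
  \dv{\p_{z^{\oplus i}}}{\p_z} = 1 + \alpha_{z,i}\phi_{z,i}, \qquad |\alpha_{z,i}| = \frac{2\sqrt 2\,\gamma}{\sqrt{\Ddims(1-\gamma^2)}} \asymp \frac{\gamma}{\sqrt{\Ddims}},
\]
where $\phi_{z,i}$ is supported on $\{2i-1,2i\}$ with $\bE{\p_z}{\phi_{z,i}} = 0$ and $\bE{\p_z}{\phi_{z,i}^2} = 1$; as $\phi_{z,i}$ and $\phi_{z,j}$ have disjoint supports for $i \neq j$, \cref{assn:decomposition-by-coordinates,assn:orthonormal} hold with $\alpha \asymp \gamma/\sqrt{\Ddims}$. (We will not invoke \cref{assn:subgaussianity}: the $\phi_{z,i}$ are essentially only $\Ddims$-subgaussian, so for communication the plain variance bound of \cref{cor:simple-numbits} --- which produces $2^\numbits$ --- is the right one to use.) Finally, with $\sparam \eqdef 1/2$, neighbouring parameters satisfy $\lp[\prm](\theta_z,\theta_{z'}) = (2\dist{z}{z'})^{1/\prm}\cdot(2\gamma/\Ddims)$, so \cref{assn:additive-loss} is implied by $\gamma \ge 4\dst\,\Ddims^{(\prm-1)/\prm}$; accordingly we set $\gamma \eqdef 4\dst\,\Ddims^{(\prm-1)/\prm}$, which lies in $(0,1/2]$ precisely when $\dst \lesssim \Ddims^{1/\prm-1}$.

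Now suppose an $(\ns,\dst)$-estimator for $\distribs{\dims}$ under $\lp[\prm]$ loss exists with $\dst$ in that range. Applying \cref{lem:mean:estimation} (legitimate since $\bP{Z}{\p_Z \in \distribs{\dims}} = 1 \ge 1 - \sparam/4$) gives $\frac1\zdims\sum_{i=1}^\zdims\totalvardist{\p_{+i}^{Y^\ns}}{\p_{-i}^{Y^\ns}} \ge \frac14$, while \cref{cor:ldp} --- respectively the first bound of \cref{cor:simple-numbits} with $|\cY| = 2^\numbits$ --- bounds the square of the left-hand side by $\frac7\zdims\ns\alpha^2\Lambda$, where $\Lambda \eqdef (e^\priv-1)^2 \land e^\priv$ (resp.\ $\Lambda \eqdef 2^\numbits$). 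Substituting $\zdims = \Ddims/2$, $\alpha^2 \asymp \gamma^2/\Ddims$, and $\gamma \asymp \dst\,\Ddims^{(\prm-1)/\prm}$, and rearranging, yields
\[
  \dst^2 \gtrsim \frac{\Ddims^{2/\prm}}{\ns\Lambda},
\]
a derivation that is valid exactly in the range $\gamma \lesssim 1$, that is, for $\Ddims^2 \lesssim \ns\Lambda$.

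It remains to optimize over $\Ddims$. We take $\Ddims$ to be the largest even integer at most $\min\{\dims, \sqrt{\ns\Lambda}\}$, or $\Ddims \eqdef 2$ if that quantity is below $2$. If $\ns\Lambda \gtrsim \dims^2$, then $\Ddims = \dims$ (up to rounding) and the bound reads $\dst \gtrsim \sqrt{\dims^{2/\prm}/(\ns\Lambda)}$; if $1 \lesssim \ns\Lambda \lesssim \dims^2$, then $\Ddims \asymp \sqrt{\ns\Lambda} \le \dims$, the consistency condition $\Ddims^2 \lesssim \ns\Lambda$ holds automatically, and the bound becomes $\dst \gtrsim \sqrt{(\ns\Lambda)^{1/\prm}/(\ns\Lambda)} = (\ns\Lambda)^{-(\prm-1)/(2\prm)}$; and if $\ns\Lambda \lesssim 1$, the same computation with $\Ddims = 2$ shows no $(\ns,\dst)$-estimator with $\dst$ below a small absolute constant exists, i.e.\ $\cE_\prm \gtrsim 1$. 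These three cases are exactly what the two levels of $\land$ in the statement encode, so taking the better construction in each regime gives both claimed lower bounds, the communication one being obtained verbatim with $\Lambda = 2^\numbits$.

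The perturbation computation (checking \cref{assn:decomposition-by-coordinates,assn:orthonormal}, evaluating $\alpha_{z,i}$, and identifying the $\lp[\prm]$ distances) is routine; the step demanding care is the regime bookkeeping --- one must verify that the choice $\Ddims \asymp \sqrt{\ns\Lambda}$ is self-consistent, namely that $\gamma = 4\dst\,\Ddims^{(\prm-1)/\prm} \le 1/2$ automatically whenever $\dst$ is below the bound being proved, and that rounding $\Ddims$ to an even integer in $[2,\dims]$ and handling the boundary regimes $\ns\Lambda \asymp \dims^2$ and $\ns\Lambda \lesssim 1$ cost only absolute constants. Making the interaction between the validity window $\dst \lesssim \Ddims^{1/\prm-1}$ of the construction and the bound $\dst \gtrsim \sqrt{\Ddims^{2/\prm}/(\ns\Lambda)}$ it yields fit together cleanly is the heart of recovering the two-term rate.
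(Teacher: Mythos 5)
Your proposal is correct and takes essentially the same route as the paper: the identical Paninski-type paired perturbation on an even sub-alphabet of size $D\le\dims$ with $\sparam=1/2$, the same verification of \cref{assn:decomposition-by-coordinates,assn:orthonormal} with $\alpha\asymp\gamma/\sqrt{D}$, and the same combination of \cref{lem:mean:estimation} with \cref{cor:ldp,cor:simple-numbits} (variance bound only, no subgaussianity). The only difference is bookkeeping: you fix $D\asymp\dims\land\sqrt{\ns\Lambda}$ (with $\Lambda\eqdef(e^\priv-1)^2\land e^\priv$ or $2^\numbits$) up front and case-split on regimes, whereas the paper chooses $D=\dims\land\lfloor(16\dst)^{-\prm/(\prm-1)}\rfloor$ as a function of the assumed error so that the validity constraint on $\gamma$ holds automatically and the two-term rate falls out of $\dst^2\gtrsim\bigl(\dims^{2/\prm}\land\dst^{-2/(\prm-1)}\bigr)/(\ns\Lambda)$ --- an equivalent argument, and your self-consistency check ($\gamma\le 1/2$ whenever $\dst$ is below the target bound, thanks to $D\lesssim\sqrt{\ns\Lambda}$) does go through.
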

In particular, for $\ns\Paren{(e^\priv - 1)^2 \land e^\priv} \geq \dims^2$ 
and $\ns(2^\numbits\land 
\dims) \geq \dims^2$, the first term of the corresponding lower bounds 
dominates. Before turning to the proof of this theorem, we provide two 
important corollaries; first, for the case of total variation distance 
($\lp[1]$), where combining it with known upper bounds we obtain 
optimal bounds for all $\priv > 0$. In particular, for $\priv\in(0,1]$ (high-privacy regime) we retrieve the lower bound established in~\cite{IIUIC}, which matches the upper bound from~\cite{AcharyaCFST21}. For $\priv > 1$ (low-privacy regime), our bound matches the upper bound for the 
noninteractive case, established in~\cite{YeB17,ASZ:18:HR}, showing that even in this low-privacy regime interactivity 
cannot lead to better rates, except maybe up to constant factors.
\begin{corollary}[Total variation distance]
\label{coro:mean:estimation:discrete:l1}
For $\priv \in (0,1]$, we have 
\begin{align}
\cE_1(\distribs{\dims}, \cWpriv[\priv], \ns) \asymp 
\sqrt{\frac{\dims^{2}}{\ns\priv^2}}\land 1
\end{align}
while, for $\priv > 1$, 
\begin{align}
\cE_1(\distribs{\dims}, \cWpriv[\priv], \ns) \asymp 
\sqrt{\frac{\dims^{2}}{\ns e^\priv}}\land 1\,.
\end{align}
For $\numbits \geq 1$, 
\begin{align}
\cE_1(\distribs{\dims}, \cWcomm[\numbits], \ns) \asymp \sqrt{\frac{\dims^{2}}{\ns (2^\numbits\land\dims)}}\land 1\,.
\end{align}
\end{corollary}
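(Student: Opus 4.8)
The plan is to obtain~\cref{coro:mean:estimation:discrete:l1} as the $\prm=1$ specialization of~\cref{theorem:mean:estimation:discrete} on the lower-bound side, matched against known estimation schemes on the upper-bound side. Since the excerpt states only~\cref{theorem:mean:estimation:discrete}, let me first recall how its lower bound goes and then set $\prm=1$. I would use the classical Paninski-type construction: assuming $\dims$ even, set $\zdims=\dims/2$, $\cZ=\bool^\zdims$, and for a scale $\gamma\in(0,1/2]$ let $\p_z$ put mass $\frac{1}{\dims}(1+\gamma z_i)$ on symbol $2i-1$ and $\frac{1}{\dims}(1-\gamma z_i)$ on symbol $2i$, for $i\in[\zdims]$. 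Then $\dv{\p_{z^{\oplus i}}}{\p_z}$ is supported on the $i$th pair, and a short computation gives~\cref{assn:decomposition-by-coordinates,assn:orthonormal} with $\phi_{z,i}$ supported on that pair and $\alpha\asymp\gamma/\sqrt{\dims}$; orthonormality across coordinates is immediate because the $\phi_{z,i}$ have pairwise disjoint supports. I would deliberately \emph{not} invoke~\cref{assn:subgaussianity} --- the vector $\phi_z(X)$ has a coordinate of order $\sqrt{\dims}$, so that route is lossy --- and instead use the variance-based bound~\cref{cor:ldp} and the first bound of~\cref{cor:simple-numbits}. Finally, since $\cP_\Theta=\distribs{\dims}$ consists of \emph{all} pmfs, every $\p_z$ lies in $\cP_\Theta$, so the hypothesis $\bPr{\p_Z\in\cP_\Theta}\ge 1-\sparam/4$ in~\cref{lem:mean:estimation} is vacuous and I take the dense regime $\sparam=1/2$ (so $\sparam\zdims\asymp\dims$).

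For $\prm=1$ I would then set $\gamma=4\dst$, which is at most $1/2$ when $\dst\le 1/8$ and makes~\cref{assn:additive-loss} hold since $\lp[1](\theta_z,\theta_{z'})=\frac{4\gamma}{\dims}\dist{z}{z'}\ge 4\dst\,\dist{z}{z'}/(\sparam\zdims)$. Upper bounding the average discrepancy by $\frac{7}{\zdims}\ns\alpha^2\big((e^\priv-1)^2\land e^\priv\big)$ via~\cref{cor:ldp} (resp.\ by $\frac{7}{\zdims}\ns\alpha^2 2^\numbits$ via~\cref{cor:simple-numbits}, using $\abs{\cY}=2^\numbits$), lower bounding it by $\frac14$ via~\cref{lem:mean:estimation}, and substituting $\alpha^2\asymp\gamma^2/\dims\asymp\dst^2/\dims$ and $\zdims\asymp\dims$, I get $\dst\gtrsim\sqrt{\dims^2/(\ns((e^\priv-1)^2\land e^\priv))}$ and $\dst\gtrsim\sqrt{\dims^2/(\ns 2^\numbits)}$; the remaining regime ($\dst>1/8$, equivalently $\ns 2^\numbits\lesssim\dims^2$, resp.\ $\ns((e^\priv-1)^2\land e^\priv)\lesssim\dims^2$) is absorbed by the trivial $\land 1$. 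Since $(e^\priv-1)^2\land e^\priv\asymp\priv^2$ for $\priv\in(0,1]$ and $\asymp e^\priv$ for $\priv>1$, this yields the two LDP lower bounds; and for communication, whenever $2^\numbits\ge\dims$ (so $\numbits\ge\lceil\log\dims\rceil$ and a symbol can be sent exactly) the problem is, up to constants, as hard as the unconstrained one, whose $\lp[1]$ rate is $\Theta(\sqrt{\dims/\ns})=\Theta(\sqrt{\dims^2/(\ns\dims)})$, upgrading $2^\numbits$ to $2^\numbits\land\dims$.

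For the upper bounds I would invoke off-the-shelf protocols: Hadamard/randomized response achieving $\lp[1]$ error $\lesssim\sqrt{\dims^2/(\ns\priv^2)}$ for $\priv\in(0,1]$~\cite{IIUIC,AcharyaCFST21} and $\lesssim\sqrt{\dims^2/(\ns e^\priv)}$ for $\priv>1$~\cite{YeB17,ASZ:18:HR}, and, for the $\numbits$-bit channel, the standard ``assign each player a uniformly random block of $2^\numbits$ symbols, then report the sampled symbol within one's block (or a reserved $\perp$)'' scheme, which achieves $\lesssim\sqrt{\dims^2/(\ns(2^\numbits\land\dims))}$ (reporting the exact symbol when $2^\numbits\ge\dims$). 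The trivial ``output the uniform pmf'' estimator covers the $\land 1$ regime. Combining these with the lower bounds above gives the claimed $\asymp$.

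The lower-bound half is essentially a direct call to the paper's machinery, so the real obstacle is on the upper-bound half: assembling the matching schemes from prior work and checking that the high-privacy ($\priv\le 1$), low-privacy ($\priv>1$), and large-bandwidth ($2^\numbits\ge\dims$) regimes all line up with the corresponding lower-bound branches. I would also flag a subtlety inherited from~\cref{theorem:mean:estimation:discrete} for general $\prm$: when $\dst$ exceeds the ``unconstrained resolution'' threshold $\dims^{1/\prm-1}$ one must run the construction on a sub-simplex whose dimension is tuned to $\dst$, and this is what produces the $(\ns 2^\numbits)^{-(\prm-1)/(2\prm)}$ branch of the general bound; at $\prm=1$ that threshold is a constant, coinciding with the trivial cap, so it plays no role in the corollary.
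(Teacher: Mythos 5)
Your proposal is correct and follows essentially the same route as the paper: the paper gives no separate proof of this corollary, obtaining it by specializing \cref{theorem:mean:estimation:discrete} (whose proof uses exactly the paired-perturbation construction, \cref{cor:ldp}/\cref{cor:simple-numbits}, and \cref{lem:mean:estimation} in the dense regime $\sparam=1/2$, with the $2^\numbits\land\dims$ improvement noted via the unconstrained-rate observation) to $\prm=1$, where indeed the truncation $\Ddims$ and the second branch of the general bound are vacuous, and by invoking the known matching upper bounds from~\cite{AcharyaCFST21} ($\priv\le 1$),~\cite{YeB17,ASZ:18:HR} ($\priv>1$), and standard grouping/simulation schemes for $\numbits$-bit communication. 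Your re-derivation of the construction at $\prm=1$ (with the relative perturbation $\gamma=4\dst$ and $\alpha\asymp\gamma/\sqrt{\dims}$) matches the paper's up to reparametrization, and your handling of the $\land 1$ and $2^\numbits\ge\dims$ regimes coincides with the paper's.
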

For the case of $\lp[2]$ estimation, we also obtain order-optimal bounds for local privacy, and for a wide range of $\numbits$ in the communication-constrained case:
\begin{corollary}[{$\lp[2]$ density estimation}] 
\label{coro:mean:estimation:discrete:l2}
For $\priv \in (0,1]$, we have 
\begin{align}
\cE_2(\distribs{\dims}, \cWpriv[\priv], \ns) \asymp 
\sqrt{\frac{\dims}{\ns\priv^2}}\land 
\sqrt[4]{\frac{\vphantom{\dims}1}{\ns\priv^2}}\land 1
\end{align}
while, for $\priv > 1$, 
\begin{align}
\cE_2(\distribs{\dims}, \cWpriv[\priv], \ns) \asymp 
\sqrt{\frac{\dims}{\ns e^\priv}}\land 
\sqrt[4]{\frac{\vphantom{\dims}1}{\ns e^\priv}}\land 1\,.
\end{align}
For $\numbits \geq 1$, 
\begin{align}
\cE_2(\distribs{\dims}, \cWcomm[\numbits], \ns)  \new{\asymp} 
\sqrt{\frac{\dims}{\ns (2^\numbits\land\dims)}}\land 
\sqrt[4]{\frac{\vphantom{\dims}1}{\ns (2^\numbits\land\dims)}}\land 1\,.
\end{align}
\end{corollary}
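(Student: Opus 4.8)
For the \emph{lower bound}, specialize~\cref{theorem:mean:estimation:discrete} to $\prm=2$: then $\dims^{2/\prm}=\dims$, $(\prm-1)/\prm=1/2$, and since $\sqrt{a\land b}=\sqrt{a}\land\sqrt{b}$ its right-hand side equals $\sqrt{\dims/(\ns c)}\land\sqrt[4]{1/(\ns c)}\land1$, where $c=(e^\priv-1)^2\land e^\priv$ in the private case and $c=2^\numbits$ in the communication case. It then remains to simplify $c$: on $(0,1]$ one has $\priv^2\le(e^\priv-1)^2\le(e-1)^2\priv^2$ and $e^\priv\ge1\ge\priv^2$, so $(e^\priv-1)^2\land e^\priv\asymp\priv^2$; for $\priv>1$ one checks $(e^\priv-1)^2>e^\priv$, so $(e^\priv-1)^2\land e^\priv=e^\priv$; this gives the two private displays. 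For communication the denominator is $2^\numbits\land\dims$ rather than $2^\numbits$ because $\lceil\log\dims\rceil$ bits already let a player forward its raw sample, while conversely the transcript is always a randomized function of the raw samples (data-processing inequality); hence $\cE_2(\distribs{\dims},\cWcomm[\numbits],\ns)$ is unchanged if $\numbits$ is replaced by $\min(\numbits,\lceil\log\dims\rceil)$, and applying~\cref{theorem:mean:estimation:discrete} with this reduced bandwidth, together with $2^{\min(\numbits,\lceil\log\dims\rceil)}\asymp 2^\numbits\land\dims$, yields the claim. (If one prefers not to treat~\cref{theorem:mean:estimation:discrete} as a black box, the same bound follows from our framework directly: take $\cZ=\bool^{\dims/2}$ with the paired perturbation of~\cref{eqn:discrete_assn}, verify~\cref{assn:decomposition-by-coordinates,assn:orthonormal,assn:additive-loss}, combine~\cref{cor:ldp} or~\cref{cor:simple-numbits} with~\cref{lem:mean:estimation}, and optimize over $\gamma\in(0,1/2]$ and over $\sparam$, the $\sqrt{\cdot}$ term coming from the ceiling $\gamma\le 1/2$ and the $\sqrt[4]{\cdot}$ term from the free choice of $\sparam$.)

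For the \emph{upper bound}, write $M\eqdef(e^\priv-1)^2\land e^\priv$ (private) or $M\eqdef 2^\numbits\land\dims$ (communication); it suffices to exhibit an interactive protocol with estimator $\hat{\theta}$ satisfying $\bEE{\norm{\hat{\theta}-\theta}_2^2}\lesssim \dims/(\ns M)\land 1/\sqrt{\ns M}\land1$ for all $\theta\in\distribs{\dims}$, which we obtain by running whichever of three candidates is best in the ambient regime. First, outputting the uniform distribution $u$ gives $\norm{u-\theta}_2^2=\norm{\theta}_2^2-1/\dims\le\norm{\theta}_\infty\le1$ (using $\norm{\theta}_2^2\le\norm{\theta}_\infty\norm{\theta}_1=\norm{\theta}_\infty$), the $\land 1$ term. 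Second, a standard privatized/bandwidth-limited histogram — a Hadamard-response-type estimator under LDP~\cite{ASZ:18:HR,AcharyaCFST21}, or a shared-randomness coordinate-subsampling scheme under communication~\cite{HOW:18} — is unbiased with total variance $\lesssim\dims/(\ns M)$, the first term. Third, in the intermediate regime, a two-round interactive protocol: round one identifies, up to constants, the $m\eqdef\lceil\sqrt{\ns M}\rceil$ heaviest symbols; round two runs the second scheme \emph{restricted} to those $m$ coordinates and sets the rest to $0$. Its variance contributes $\lesssim m/(\ns M)\asymp 1/\sqrt{\ns M}$, while the truncation error is $\sum_{j\text{ light}}\theta_j^2\le\max_{j\text{ light}}\theta_j\le 1/(m+1)\le 1/\sqrt{\ns M}$, giving the middle term. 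The protocols — including the $\priv>1$ variant, where a $\priv$-LDP mechanism is made to ``focus'' on $\lceil e^\priv\rceil$ random symbols — are routine adaptations of the cited schemes, whose full analysis I would defer to an appendix. The two $\lp[2]$ displays, and the $\lp[1]$ statement of~\cref{coro:mean:estimation:discrete:l1} (where $\prm=1$ makes the second term of~\cref{theorem:mean:estimation:discrete} trivial), then follow by substituting the relevant regime boundaries.

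The genuinely routine part is the lower bound (immediate from~\cref{theorem:mean:estimation:discrete}) together with the two extreme-regime upper bounds (a textbook Hadamard-type histogram and the trivial uniform estimator). The main obstacle is the intermediate-regime upper bound, i.e.\ matching the $\sqrt[4]{1/(\ns M)}$ term: the naive per-coordinate estimator overpays there by a factor of order $\sqrt{\dims/\sqrt{\ns M}}$, so one must exploit that the $\lp[2]$ loss penalizes only weakly the (at least $\dims-m$) necessarily small-mass symbols one is forced to drop, and pin the bias--variance trade-off at exactly $m\asymp\sqrt{\ns M}$; in the communication case one must additionally implement the heavy-hitter identification round within the $\numbits$-bit-per-message budget and handle the subcase $m>M$ by assigning users to coordinate blocks. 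Matching the upper and lower bounds for \emph{all} $\numbits$ in the communication case relies on absorbing into the $\land 1$ term the crude a-priori bound $\norm{\hat{\theta}-\theta}_2\le\sqrt2$.
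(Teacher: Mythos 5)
Your lower-bound derivation coincides with the paper's: \cref{theorem:mean:estimation:discrete} is specialized to $\prm=2$, $(e^\priv-1)^2\land e^\priv$ is $\asymp\priv^2$ on $(0,1]$ and equals $e^\priv$ for $\priv>1$, and $2^\numbits$ is upgraded to $2^\numbits\land\dims$ by observing that $\lceil\log\dims\rceil$ bits suffice to forward a raw sample --- the paper makes exactly this point in a footnote inside the proof of \cref{theorem:mean:estimation:discrete}. (One inaccuracy in your parenthetical ``direct'' route: in the paper the fourth-root term does not come from tuning $\sparam$, which stays fixed at $1/2$; it comes from capping the effective alphabet at $D\asymp \dst^{-\prm/(\prm-1)}$ so that $\gamma\le 1/(2D)$ keeps the perturbed distributions valid.) For the upper bounds, the paper itself constructs nothing for the discrete family: the $\land 1$ term is the trivial estimator, and the remaining terms are attributed to known schemes (see the discussion around \cref{coro:mean:estimation:discrete:l1} and the remark that for $\ns\,((e^\priv-1)^2\land e^\priv)\ge\dims^2$, resp.\ $\ns\,(2^\numbits\land\dims)\ge\dims^2$, the first term dominates and known rates are recovered); no protocol for $\distribs{\dims}$ appears in the appendices. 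So matching the paper only required citing; you instead sketch a protocol, and that sketch has a real gap.

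The gap is the middle term. With your notation $M$, you claim round one ``identifies, up to constants, the $m=\lceil\sqrt{\ns M}\rceil$ heaviest symbols,'' and your truncation bound $\sum_{j\notin S}\theta_j^2\le\max_{j\notin S}\theta_j\le 1/(m+1)$ silently assumes that every symbol of mass above $1/(m+1)\asymp(\ns M)^{-1/2}$ gets selected. But $(\ns M)^{-1/2}$ is exactly the per-symbol standard deviation of a frequency-oracle round with $\ns$ players, so reliable selection at that mass level among $\dims$ candidates is impossible when $\dims$ is huge: the largest of the $\dims-m$ null estimates is of order $\sqrt{\log(\dims/m)/(\ns M)}$, and symbols of true mass between $(\ns M)^{-1/2}$ and that level are crowded out of the top $m$ with probability close to one. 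Concretely, for $\theta$ uniform over roughly $1/(\sigma\sqrt{\log\dims})$ symbols each of mass roughly $\sigma\sqrt{\log\dims}$, where $\sigma\eqdef(\ns M)^{-1/2}$, your two-round scheme incurs squared error of order $\sigma\sqrt{\log\dims}$ rather than $\sigma$; tree-based heavy-hitter identification carries the same logarithmic thresholds. As written, your argument therefore yields only $\cE_2\lesssim(\log\dims)^{1/4}(\ns M)^{-1/4}$ in the regime $1\ll\ns M\ll\dims^2$ where the middle term is the minimum, and the ``bias--variance trade-off at $m\asymp\sqrt{\ns M}$'' that you flag as the main obstacle is not the obstacle --- the selection guarantee is. Supplying a log-free scheme (or a citation that actually covers this regime, rather than $\ns M\gtrsim\dims^2$) is the missing content; the two extreme-regime upper bounds and the $\priv>1$ mechanism are indeed routine.
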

Note that for $\ns\Paren{(e^\priv - 1)^2 \land e^\priv}\geq \dims^2$ and 
$\ns(2^\numbits\land \dims) \geq \dims^2$, respectively, our bounds 
recover those obtained in~\cite{BCO:20} and~\cite{BHO:20} under these 
assumptions. 
We now establish~\cref{theorem:mean:estimation:discrete}.
\begin{proof}[Proof of~\cref{theorem:mean:estimation:discrete}]
Fix $\prm\in[1,\infty)$, and suppose that, for some $\dst \in (0,1/16]$, there exists an $(\ns,\dst)$-estimator for $\distribs{\dims}$ under $\lp[\prm]$ loss. Set
\[
    \Ddims \eqdef \dims \land \flr{\Paren{\frac{1}{16\dst}}^{\frac{\prm}{\prm-1}}}
\]
and assume, without loss of generality, that $\Ddims$ is even. By definition, we then have $\dst \in (0,1/(16\Ddims^{1-1/\prm})]$ and $D\leq \dims$; we can therefore restrict ourselves to the first $\Ddims$ elements of the domain, embedding $\distribs{\Ddims}$ into $\distribs{\dims}$, to prove our lower bound.

Let $\zdims=\frac{\Ddims}{2}$, $\cZ=\bool^{\Ddims/2}$, and $\sparam=\frac{1}{2}$; and suppose that, for some $\dst \in (0,1/(16\Ddims^{1-1/\prm})]$, there exists an $(\ns,\dst)$-estimator for $\distribs{\Ddims}$ under $\lp[\prm]$ loss. (We will use the fact that $\dst\leq 1/(16\Ddims^{1-1/\prm})$ for~\cref{eq:lp:construction:discrete} to be a valid distribution with positive mass, as we will need $|\gamma|\le \frac 1\Ddims$; and to bound $\kappavar$ later on, as we will require $|\gamma|\le \frac{1}{2\Ddims}$.)  
Define $\gamma=\gamma(\prm)$ as
\begin{equation}
        \label{eq:choice:gamma:p:discrete}
    \gamma(\prm) \eqdef \frac{4\cdot 2^{1/\prm} \dst}{\Ddims^{1/\prm}},
\end{equation}
which implies $\gamma\in[0,1/(2\Ddims)]$.
Consider the set of $\Ddims$-ary distributions $\cP^{\gamma}_{\rm Discrete}=\{\p_z\}_{z\in\cZ}$  defined as follows. 
For $z\in\cZ$, and $x\in\cX=[\Ddims]$
\begin{equation}
    \label{eq:lp:construction:discrete}
      \p_z(x) =
      \begin{cases}
            \frac{1}{\Ddims}+\gamma z_i, &\text{ if } x=2i,\\
            \frac{1}{\Ddims}-\gamma z_i, &\text{ if } x=2i-1.\\
      \end{cases}
\end{equation} 
 For $z\in\cZ$ and $i\in[\Ddims/2]$, we have 
 \begin{align}
    \p_{z^{\oplus i}}(x) 
    &=  \Paren{1-\frac{2\Ddims\gamma z_i}{1+\Ddims\gamma z_i}\indic{x=2i}+\frac{2\Ddims\gamma z_i}{1-\Ddims\gamma z_i}\indic{x=2i-1} }\p_{z}(x)
\nonumber
    \\
    &= \Paren{1+\phi_{z,i}(x)}\p_{z}(x),
    \label{eqn:discrete_assn}
 \end{align}
 where
 \[
\phi_{z,i}(x)\eqdef z_i\cdot\frac{2\Ddims\gamma}{1-\Ddims^2\gamma^2}\Paren{ (1+\Ddims\gamma z_i)\indic{x=2i-1}-(1-\Ddims\gamma z_i)\indic{x=2i}}.
 \]
Once again, we can verify that for $i\neq j$
 \[
    \bE{\p_z}{\phi_{z,i}(X)} = 0,\quad \bE{\p_z}{\phi_{z,i}(X)^2} = \new{\frac{8\gamma^2\Ddims}{1-\gamma^2\Ddims^2}}, \text{ and } \bE{\p_z}{\phi_{z,i}(X)\phi_{z,j}(X)} = 0, 
 \]
 so that~\cref{assn:decomposition-by-coordinates,assn:orthonormal} are satisfied \new{for $\kappavar\eqdef 16\gamma^2\Ddims$ (using that $\Ddims\gamma\leq 1/2$ to simplify the bound)}.\footnote{It is worth noting that~\cref{assn:subgaussianity} will not hold for any useful choice of the subgaussianity parameter.} Thus, we can invoke the first part of~\cref{thm:avg:coordinate}. Note that~\cref{assn:additive-loss} holds, since  
$\lp[\prm](\p_z,\p_{z'}) = \gamma\dist{z}{z'}^{1/\prm} = 4\dst\mleft(\frac{\dist{z}{z'}}{\tau\Ddims}\mright)^{1/\prm}$.
Therefore, we can apply~\cref{lem:mean:estimation} as well.

For $\cWpriv[\priv]$, by combining the bounds obtained 
by~\cref{cor:ldp,lem:mean:estimation}, we get
\[
\Ddims\leq \new{56}\ns\kappavar\Paren{(e^\priv -1)^2 \land e^\priv},
\]
whereby, upon recalling the value of $\kappavar$ 
and using the setting of $\gamma=\gamma(\prm)$ from~\cref{eq:choice:gamma:p:discrete},
it follows that
\[
\dst^2\geq\frac{\Ddims^{\frac{2}{\prm}}}{7168\cdot 2^{2/\prm}\cdot 
\ns\Paren{(e^\priv -1)^2 \land e^\priv}} \asymp \frac{\dims^{2/\prm}\land 
\dst^{-2/(\prm-1)}}{\ns\Paren{(e^\priv -1)^2 \land e^\priv}}.
\]
Thus we obtain the bound \cref{eq:discrete:ldp} as claimed.

Similarly, for $\cWcomm[\numbits]$, upon combining the bounds obtained by~\cref{cor:simple-numbits,lem:mean:estimation} and recalling that $\abs{\cY}=2^\numbits$, we get
\[
\dst^2\geq\frac{\Ddims^{\frac{2}{\prm}}}{7168\cdot2^{2/\prm}\cdot \ns 2^\numbits},
\]
which gives
$\cE_{\prm}(\distribs{\Ddims}, \cWcomm[\numbits], \ns) = \bigOmega{\sqrt{\frac{\dims^{2/\prm}}{\ns 2^\numbits} \land \Paren{\frac{1}{\ns 2^\numbits}}^{\frac{\prm-1}{\prm}} } }$,\footnote{Finally, note that we could replace the quantity $2^\numbits$ above by $2^\numbits\land \dims$, or even $2^\numbits\land \Ddims$, as for $2^\numbits \geq \Ddims$ there is no additional information any player can send beyond the first $\log_2\Ddims$ bits, which encode their full observation. However, this small improvement would lead to more cumbersome expressions, and not make any difference for the main case of interest, $\prm=1$.} concluding the proof.
\end{proof}

\section*{Acknowledgments}
The authors would like to thank Yanjun Han, and the anonymous reviewers for helpful comments on an earlier version of this paper.

\bibliographystyle{imsart-number} %
\bibliography{references-aos}       %

\begin{appendix}
\section{Fully interactive model}
  \label{app:full:interactive}
  In this appendix, we describe how to extend our results, presented in the sequentially interactive model, to the more general interactive setting. We first formally define this setting and the corresponding notion of protocols. Hereafter, we use ${}^\ast$ for the Kleene star operation, \ie $V^\ast = \bigcup_{n=0}^\infty V^n$.
\begin{definition}[Interactive Protocols]
Let $X_1, \dots, X_{\ns}$ be \iid samples from $\p_\theta$, $\theta\in 
\Theta$, and $\cW^\ast$ be a collection of sequences of pairs of channel 
families and players; that is, each element of $\cW^\ast$ is a sequence
$
  (\cW_{t}, j_t)_{t\in\N}
$
where $j_t\in[\ns]$. 
An \emph{interactive protocol $\Pi$ using $\cW^\ast$} comprises a
random variable $U$ (independent of the input $X_1, \dots, X_{\ns}$) and, for each $t\in\N$,
mappings 
\begin{align*}
  \sigma_t&\colon Y_1, \ldots, Y_{t-1}, U\mapsto N_{t} \in [\ns]\cup \{\bot\}\\
  g_t&\colon Y_1, \ldots, Y_{t-1}, U\mapsto W_t
\end{align*}
with the constraint that $((W_1,N_1),\dots,(W_t,N_t))$ must be consistent 
with some sequence from $\cW^\ast$; that is, there exists 
$((\cW_{s},j_s))_{s\in\N}\in \cW^\ast$ such that $W_s \in \cW_{s}$ and 
$N_s=j_s$ for all $1\leq s\leq t$. 
These two mappings respectively indicate (i)~whether the protocol is to stop (symbol $\bot$), and, if not, which player is to speak at round $t\in\N$, and (ii))~which channel this player selects at this round.  

In round $t$, if $N_t=\bot$, the protocol ends. Otherwise, player $N_t$ (as selected by the protocol, based on the previous messages) 
uses the channel $W_t$ to produce the message (output) $Y_t$
according to the probability measure $W_t(\cdot\mid X_{N_t})$. We further require that $T \eqdef \inf\setOfSuchThat{t\in\N}{ N_t = \bot }$ is finite a.s.  The messages
$Y^T=(Y_1, \ldots, Y_{T})$ received by the referee and the
public randomness $U$ constitute the \emph{transcript} of the protocol $\Pi$.
\end{definition}

In other terms, the channel used by the player $N_t$ speaking at time $t$ is a Markov kernel
\[
    W_t\colon \saY_t\times \cX\times \cY^{t-1}\to[0,1]\,,
\]
with $\cY_t \subseteq \cY$; 
and, for player $j\in[\ns]$, the allowed subsequences $(\cW_{t},j_t)_{t\in\N: j_t=j}$ capture the possible sequences of channels allowed to the player. As an example, if we were to require that any single player can speak at most once, then for every $j\in[\ns]$ and every $(\cW_{t}, j_t)_{t\in\N}\in\cW^\ns$, we would have
$
    \sum_{t=1}^\infty \indic{j_t=j}\leq 1
$.

In the interactive model, we can then capture the constraint that each player must communicate at most $\numbits$ bits in total by letting $\cW^\ns$ be the set of sequences $(\cW_t^{{\rm comm},\numbits_t}, j_t)_{t\in\N}$ such that
\[
    \forall j\in[\ns], \qquad  \sum_{t=1}^\infty \numbits_t \cdot \indic{j_t=j} \leq \numbits\,.
\]
In the simpler sequentially interactive model, this condition simply 
becomes the choice of $\cW^\ns=(\cW^{{\rm 
comm},\numbits},\dots,\cW^{{\rm comm},\numbits})$.

\subsection{Lower Bounds under Full Interactive Model}

Next we discuss how our technique extends to the full interactive model. 
For any full interactive protocol $\Pi$, let $Y^\ast \in \cY^\ast$ be the 
message 
sequence generated by the protocol. Then, for all $y^\ast \in 
\cY^\ast$, we have
\[
	\probaDistrOf{X^n \sim \p}{Y^\ast = y^\ast} = \bE{X^\ns \sim  \p}{ 
	\prod_{t = 
		1}^\infty 
	W_t\Paren{y_t \mid 
		X_{\sigma_t(y^{t-1})}, {y^{t-1}}}}.
\]

The following lemma states that if $X^\ns$ are 
generated from a 
product distribution, the distribution of the 
transcript satisfies a property similar to the 
``cut-and-paste'' property from~\cite{BarYossefJKS04}.
\begin{lemma}[\cite{HOW:18:v1}] \label{lem:cut-paste}
	If $X^\ns \sim \p = \otimes_{t = 1}^\ns \p_t$, the transcript of the 
	protocol 
	satisfies
	\begin{align} \label{eq:cut-paste}
		\probaDistrOf{X^\ns \sim \p}{Y^\ast = y^\ast} = \prod_{t=1}^{\ns} 
		\bE{X_t
		\sim 
		\p_t}{\ff_t(y^\ast, X_t)},
	\end{align}
	where
	$
	\ff_t(y^\ast, x_t) = \prod_{j = 
		1}^\infty 
	W_j(y_j \mid 
	x_t, {y^{j-1}}) \indic{\sigma_j(y^{j-1}) = t}.
	$
\end{lemma}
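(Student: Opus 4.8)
The plan is to start from the explicit formula for the transcript distribution recorded just above the lemma statement,
\[
  \probaDistrOf{X^\ns\sim\p}{Y^\ast = y^\ast}
  = \bE{X^\ns\sim\p}{\prod_{t=1}^\infty W_t\Paren{y_t\mid X_{\sigma_t(y^{t-1})},\, y^{t-1}}},
\]
(conditioning throughout on the public randomness $U$, so that each $\sigma_t$ and $g_t$ becomes a deterministic function of the message prefix; the dependence of $\ff_t$ on $U$ is then implicit), and to regroup the factors of this product according to which player speaks in each round.

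First I would observe that, for a \emph{fixed} target transcript $y^\ast$, the identity of the speaker in round $t$, namely $\sigma_t(y^{t-1})$, is completely determined by $y^\ast$ and does not depend on the samples $X^\ns$. Hence, for each player $s\in[\ns]$, the set $\setOfSuchThat{t\in\N}{\sigma_t(y^{t-1})=s}$ is a deterministic (protocol- and $y^\ast$-dependent) subset of $\N$, and these sets partition $\{1,\dots,T-1\}$, where $T$ is the round at which the stopping symbol $\bot$ is emitted along $y^\ast$ (all factors with $t\geq T$ being equal to $1$, since no further messages are produced). Using this partition, I would rewrite
\[
  \prod_{t=1}^\infty W_t\Paren{y_t\mid X_{\sigma_t(y^{t-1})},\, y^{t-1}}
  = \prod_{s=1}^\ns \prod_{t\,:\,\sigma_t(y^{t-1})=s} W_t\Paren{y_t\mid X_s,\, y^{t-1}}
  = \prod_{s=1}^\ns \ff_s(y^\ast, X_s),
\]
the last equality being just the definition of $\ff_s$, whose indicator $\indic{\sigma_j(y^{j-1})=s}$ picks out exactly the rounds in which player $s$ speaks and contributes a factor $1$ in all other rounds.

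Finally, since the $s$-th factor $\ff_s(y^\ast, X_s)$ depends on $X^\ns$ only through $X_s$, and the coordinates $X_1,\dots,X_\ns$ are mutually independent under $\p=\otimes_{s=1}^\ns\p_s$, the expectation of the product factorizes over $s$:
\[
  \bE{X^\ns\sim\p}{\prod_{s=1}^\ns \ff_s(y^\ast, X_s)}
  = \prod_{s=1}^\ns \bE{X_s\sim\p_s}{\ff_s(y^\ast, X_s)},
\]
which is exactly~\eqref{eq:cut-paste}.

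The part I would be most careful about is the bookkeeping around the random stopping time $T$ and the meaning of the infinite products: one must check that, for any $y^\ast$ in the support of the transcript, all but finitely many factors are identically $1$ (so the products are genuinely finite and the regrouping is legitimate), and that $y^\ast\mapsto(T,(\sigma_t)_t)$ is measurable, so that the only interchange needed --- pushing the expectation inside a product of finitely many factors depending on disjoint, independent blocks of coordinates --- is justified. Modulo this, the proof is nothing more than the remark that in an interactive protocol run on product data, ``who speaks when'' is a function of the transcript alone, so the likelihood of a transcript decouples across users exactly as in the classical cut-and-paste property of~\cite{BarYossefJKS04}.
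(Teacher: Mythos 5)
Your proof is correct. The paper itself does not prove this lemma---it is imported by citation from \cite{HOW:18:v1}---and your argument (for fixed $y^\ast$ and public randomness, the speaker $\sigma_t(y^{t-1})$ and channel choice depend only on the transcript prefix, so the finite product of kernel factors regroups by speaker into $\prod_s \ff_s(y^\ast,X_s)$, after which independence of $X_1,\dots,X_\ns$ factorizes the expectation) is exactly the standard derivation of this cut-and-paste identity. Your reading of $\ff_t$, with the indicator $\indic{\sigma_j(y^{j-1})=t}$ acting as a selector of the rounds where player $t$ speaks (all other rounds contributing a factor $1$) rather than as a multiplicative factor, is the intended one, and your handling of the stopping time and the trivial factors beyond $T$ is the right bookkeeping.
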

Hence, when $X^\ns \sim \p_z^{\otimes \ns}$ we have
\[
	\p_z^{y^\ast} \eqdef \probaDistrOf{X^n \sim \p_z^{\otimes 
	n}}{Y^\ast = y^\ast} = \prod_{t = 1}^{\ns} \bE{X_t \sim 
\p_z}{\ff_t(y^\ast, X_t)}.
\]
Here we can define a similar notion of ``channel'' for a communication 
protocol 
$\Pi$ for the $i$th player when the underlying distribution is $\p_z$ by setting 
\begin{equation} \label{eqn:channel}
\channel_{t, \p_z}(y^\ast \mid x) = \ff_t(y^\ast, x) \Paren{\prod_{j \neq t} 
	\bE{X_{j} 
		\sim 
		\p_z}{\ff_{j}(y^\ast, X_{j})}}.
\end{equation}
Then we have, for all $t \in [\ns]$,
\[
\bE{X_t 
	\sim 
	\p_z}{\channel_{t, \p_z}(y^\ast\mid X_t) } = \bP{X^\ns \sim \p_z^{\otimes 
		n}}{Y^\ast = y^\ast}.
\]
We proceed to prove a bound similar to 
\cref{lemma:per:coordinate} in terms of the ``channel'' defined in 
\cref{eqn:channel}, as stated below. 

\begin{theorem}[Information contraction bound]
	\label{lemma:per:coordinate:full:interactive} 
	Fix $\sparam\in(0,1/2]$. Let $\Pi$ be a fully interactive
	protocol using $\cW^\ns$, and let $Z$ be a random variable on $\cZ$ 
	with distribution $\rademacher{\sparam}^{\otimes\zdims}$. Let 
	$(Y^\ast,U)$ be the transcript of $\Pi$ when the input
	$X_1, \ldots, X_\ns$ is i.i.d.\ with common distribution $\p_Z$.
	Then, 
	under~\cref{assn:decomposition-by-coordinates},
	\begin{align} 
	&\lefteqn{\Paren{\frac{1}{\zdims}\sum_{i=1}^\zdims\totalvardist{\p_{+i}
				^{Y^\ast}}{\p_{-i}^{Y^\ast}}}^2} \nonumber 
	\\
	\nonumber &\le 
	\frac{7}{\zdims} \alpha^2  
	\sum_{j=1}^\ns\max_{z\in\cZ}\max_{ (\cW_{t}, j_t)_{t\in\N} \in 
	\cW^\ns} \sum_{i=1}^\zdims 
	\int_{y^\ast \in \cY^\ast} \frac{\bE{\p_{z}}{\phi_{z,i}(X)\channel_{j, 
	\p_z}(y^\ast\mid 
	X)}^2}{\bE{\p_{z}}{\channel_{j, \p_z}(y^\ast\mid X)}} \dd{\mu} \,, 
	\end{align}
	where $\p_{+i}^{Y^\ast}\eqdef \bEEC{\p_Z^{Y^\ast}}{Z_i=1}$, 
	$\p_{-i}^{Y^\ast}\eqdef \bEEC{\p_Z^{Y^\ast}}{Z_i=1}$.
\end{theorem}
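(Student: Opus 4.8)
The plan is to mimic the proof of \cref{lemma:per:coordinate} essentially line by line, the only structural change being that the sequential channel $W^{y^{t-1}}(\cdot\mid X)$ of player $t$ is replaced by the effective channel $\channel_{t,\p_z}(\cdot\mid X)$ of~\eqref{eqn:channel}, and that the ``one sample at a time'' decomposition of Hellinger distance is now justified via the cut-and-paste property of \cref{lem:cut-paste} rather than via the autoregressive factorization of $\p_z^{Y^\ns}$ that was available in the sequential model. The first portion of the argument is model-agnostic: Cauchy--Schwarz, the inequality between total variation and squared Hellinger distances, joint convexity of squared Hellinger, and symmetry together with independence of the coordinates of $Z$ give, exactly as in~\eqref{eq:symmetry},
\[
  \Paren{\frac{1}{\zdims}\sum_{i=1}^\zdims\totalvardist{\p_{+i}^{Y^\ast}}{\p_{-i}^{Y^\ast}}}^2 \leq \frac{2}{\zdims}\sum_{i=1}^\zdims \bE{Z}{\hellinger{\p_Z^{Y^\ast}}{\p_{Z^{\oplus i}}^{Y^\ast}}^2}.
\]

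Next I would apply the fully interactive analogue of \cref{lem:hellinger:oneoff}. By \cref{lem:cut-paste}, when $X^\ns\sim\p_z^{\otimes\ns}$ the transcript law factorizes as $\p_z^{Y^\ast}(y^\ast)=\prod_{t=1}^\ns\bE{X_t\sim\p_z}{\ff_t(y^\ast,X_t)}$; the same holds for $\p_{z^{\oplus i}}^{Y^\ast}$ with $\p_z$ replaced throughout by $\p_{z^{\oplus i}}$, and the ``one-off'' law $\p_{t\gets z^{\oplus i}}^{Y^\ast}$ in which only player $t$ draws from $\p_{z^{\oplus i}}$ has density $y^\ast\mapsto\bE{X_t\sim\p_{z^{\oplus i}}}{\channel_{t,\p_z}(y^\ast\mid X_t)}$ by the definition~\eqref{eqn:channel} (and it is a genuine probability measure, being itself a transcript law). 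Since Jayram's Theorem~7 only uses that the densities of all these measures factorize over a common index set---here the $\ns$ \emph{players}---we obtain, for $c_{\rm H}<7$,
\[
  \hellinger{\p_z^{Y^\ast}}{\p_{z^{\oplus i}}^{Y^\ast}}^2 \leq c_{\rm H}\sum_{t=1}^\ns \hellinger{\p_z^{Y^\ast}}{\p_{t\gets z^{\oplus i}}^{Y^\ast}}^2.
\]
Then, for fixed $z$ and $t$, I would expand the one-off term over $\cY^\ast$, use $(\sqrt a-\sqrt b)^2\leq (a-b)^2/a$, and invoke \cref{assn:decomposition-by-coordinates} in the form $\bE{\p_{z^{\oplus i}}}{\channel_{t,\p_z}(y^\ast\mid X)}=\bE{\p_z}{(1+\alpha_{z,i}\phi_{z,i}(X))\channel_{t,\p_z}(y^\ast\mid X)}$, to get
\[
  2\sum_{i=1}^\zdims\hellinger{\p_z^{Y^\ast}}{\p_{t\gets z^{\oplus i}}^{Y^\ast}}^2 \leq \sum_{i=1}^\zdims\alpha_{z,i}^2\int_{\cY^\ast}\frac{\bE{\p_z}{\phi_{z,i}(X)\channel_{t,\p_z}(y^\ast\mid X)}^2}{\bE{\p_z}{\channel_{t,\p_z}(y^\ast\mid X)}}\dd{\mu}.
\]
Note that, in contrast to \cref{lemma:per:coordinate}, there is no need here for the auxiliary integration over $\cY^{t-1}$ and the $\sup$ over past transcripts, since $\channel_{t,\p_z}$ already packages the whole history and the decomposition above is over players, not rounds. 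Plugging this back, bounding $|\alpha_{z,i}|\leq\alpha$ and $c_{\rm H}<7$ (the factor $\tfrac12$ from the Hellinger expansion cancelling the other factor of $2$), replacing $\bE{Z}{\cdot}$ by $\max_{z\in\cZ}$, and finally bounding the protocol's realization-dependent effective channels by the worst case over channel sequences in $\cW^\ns$ compatible with $\Pi$, yields the stated inequality.

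The main obstacle is not a computation but making the second step fully rigorous: one must check that \cref{lem:cut-paste} really produces a factorization of $\p_z^{Y^\ast}$ \emph{over the $\ns$ players} in the presence of full adaptivity, (a.s.\ finite) random stopping, and public randomness $U$; that each one-off law $\p_{t\gets z^{\oplus i}}^{Y^\ast}$ is a bona fide probability measure with precisely the density predicted by~\eqref{eqn:channel}; and hence that Jayram's cut-and-paste bound applies verbatim with ``players'' playing the role the rounds played in the sequential setting. As in \cref{lemma:per:coordinate}, one should also either condition on $U$ throughout or observe that every estimate above holds pointwise in $U$.
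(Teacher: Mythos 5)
Your proposal is correct and follows essentially the same route as the paper's own proof: the TV-to-Hellinger reduction, the player-indexed cut-and-paste decomposition via \cref{lem:cut-paste} and Jayram's lemma, the bound $(\sqrt a-\sqrt b)^2\le (a-b)^2/a$ combined with \cref{assn:decomposition-by-coordinates}, and a final maximum over $z$ and admissible channel sequences. The only cosmetic difference is that you phrase the per-player bound directly in terms of $\channel_{t,\p_z}$ (using that the one-off transcript law has density $\bE{\p_{z^{\oplus i}}}{\channel_{t,\p_z}(\cdot\mid X)}$), whereas the paper manipulates the factors $\ff_t$ and identifies $\channel_{t,\p_z}$ at the end; your closing caveat about rigorously justifying the factorization for adaptive, randomly stopped protocols is precisely the point the paper itself flags for verification.
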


We can see the bound is in identical form to \cref{lemma:per:coordinate} 
except that we replace each player's channel with the $\channel_{j, 
\p_z}(y^\ast\mid 
X)$ we defined. Other similar bounds in \cref{sec:general-bound} 
can also be derived under additional assumptions and specific constraints. 
We present 
the proof for \cref{lemma:per:coordinate:full:interactive} below and omit 
the detailed statements and proof for other bounds.
\begin{proof}
	Analogously to \cref{eq:decomposition}, we can get\todonote{Doublecheck this.}
	\begin{equation} \label{eq:tv2hell:full}
  \frac{1}{\zdims}\Paren{\sum_{i=1}^\zdims \totalvardist{\p_{+i}^{Y^\ast}}{\p_{-i}^{Y^\ast}}}^2
  \leq 14 \sum_{t=1}^\ns \bE{Z}{ \sum_{i=1}^\zdims \hellinger{\p^{Y^\ast}_{Z}}{\p^{Y^\ast}_{t \gets Z^{\oplus i}}}^2 } 
  \end{equation}
	
	For all $z \in \bool^\zdims$ and $i,t$, by the definition of Hellinger distance and~\cref{eq:cut-paste}, we have
	\begin{align*}
	2\hellinger{\p^{Y^\ast}_{z}}{\p^{Y^\ast}_{t \gets z^{\oplus i}}}^2
	&= \int_{ y^\ast \in \cY^\ast} \prod_{\substack{1\leq j\leq \ns\\j\neq 
	t}}\bE{X_{j} \sim \p_z}{\ff_{j}(y^\ast, X_{j})} \Paren{\sqrt{\bE{X_{t} \sim 
	\p_{z^{\oplus i}}}{\ff_{t}(y^\ast, X_{t})}} - \sqrt{\bE{X_{t} \sim 
	\p_z}{\ff_{t}(y^\ast, X_{t})}}}^2 \dd{\mu} \nonumber  \\
	& \leq\int_{ y^\ast \in \cY^\ast} \Big(\prod_{j\neq t}\bE{X_{j} \sim 
			\p_z}{\ff_{j}(y^\ast, X_{j})} \Big) \Paren{\frac{(\bE{X_t \sim 
				\p_z}{\ff_t(y^\ast, X_t)} - \bE{X_t \sim 
				\p_{z^{\oplus i}}}{\ff_t(y^\ast, X_t)} )^2}{\bE{X_t \sim 
				\p_{z}}{\ff_t(y^\ast, X_t)} }} \dd{\mu} ,\label{eqn:square}
	\end{align*}
	Proceeding from above, we get 
	under~\cref{assn:decomposition-by-coordinates},
\begin{align*}
2\hellinger{\p^{Y^\ast}_{z}}{\p^{Y^\ast}_{t \gets z^{\oplus i}}}^2
 & \le \alpha^2\int_{ y^\ast \in \cY^\ast} \Paren{\prod_{j \neq t}\bE{X_{j} 
 \sim \p_z}{\ff_{j} (y^\ast, X_{j} )}} 
\Paren{\frac{\bE{X_t \sim \p_{z}}{\phi_{z,i}(X_t)\ff_t(y^\ast, X_t)}^2}{\bE{X_t 
\sim 
			\p_{z}}{\ff_t(y^\ast, X_t)} }}  \dd{\mu}\\
		&= \alpha^2\int_{ y^\ast \in \cY^\ast} 
\frac{\bE{X_t \sim \p_{z}}{\phi_{z,i}(X_t)\ff_t(y^\ast, X_t)\prod_{j \neq 
t}\bE{X_{j} \sim \p_z}{\ff_{j} (y^\ast, X_{j} )}}^2}{\bE{X_t \sim 
\p_{z}}{\ff_t(y^\ast, X_t) \prod_{j \neq t}\bE{X_{j} \sim \p_z}{\ff_{j} (y^\ast, 
X_{j} )}} }  \dd{\mu}\\
		&= \alpha^2  \int_{ y^\ast \in 
						\cY^\ast} 
						\frac{\bE{X_t \sim \p_{z}}{\phi_{z,i}(X_t)\channel_{t, 
						\p_z}(y^\ast\mid 
									X)}^2}{\bE{X_t \sim 
									\p_{z}}{\channel_{t, \p_z}(y^\ast\mid 
									X) }} \dd{\mu}.
\end{align*}
Plugging the above bound into \cref{eq:tv2hell:full}, we can obtain the bound 
in~\cref{lemma:per:coordinate:full:interactive} by taking the maximum 
over all $z \in \bool^\zdims$ and all possible channel sequences.
\end{proof}

\section{A measure change bound}
\label{app:measure:change}
  We here provide a variant of Talagrand’s transportation-cost inequality
which is used in deriving~\cref{eqn:subgaussian-bound}
(under~\cref{assn:subgaussianity})
in the second part of~\cref{thm:avg:coordinate}.
We note that this type of result is not novel, and can be derived from standard arguments in the literature (see, e.g.,~\cite[Chapter~8]{Boucheron:13} or~\cite[Chapter~4]{vanHandel:16}). However, the lemma below is specifically tailored for our purposes, and we provide the proof for completeness. A similar bound was derived in~\cite{ACT:20}, where Gaussian mean testing under communication constraints was considered. 
\new{
\begin{lemma}[A measure change bound]\label{l:basic_mc}
Consider a random variable $X$ taking values in $\cX$
and with distribution $P$. Let
$\Phi\colon\cX\to\R^\zdims$ be such that the random vector
$\Phi(X)$ is $\sigma^2$-subgaussian.
 Then, for any
function $a\colon\cX\to[0,\infty)$ such that $\bEE{a(X)}<\infty$, we have
\[
\frac{\normtwo{\bE{}{\Phi(X)a(X)}}^2}{\bE{}{a(X)}^2}\leq 2\sigma^2
\frac{\bE{}{a(X)\ln a(X)}}{\bE{}{a(X)}} + 2\sigma^2\ln \frac{1}{\bE{}{a(X)}}.
\]
\hnote{Commenting the part below since it is not serving  any purpose now.}
\end{lemma}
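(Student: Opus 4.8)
The plan is to recognize the claimed inequality as a transportation-cost inequality in disguise: it is exactly a covariance bound for the measure obtained by tilting $P$ with the nonnegative weight $a$, and it follows from the Donsker--Varadhan variational representation of the Kullback--Leibler divergence together with the subgaussian moment-generating-function bound. First I would dispose of the degenerate cases. The statement presumes $\bE{}{a(X)}>0$ (it appears in a denominator); and if $\bE{}{a(X)\ln a(X)}=+\infty$ — its negative part being harmless since $u\ln u\ge -1/e$ for $u\ge 0$ — the right-hand side is $+\infty$ and there is nothing to prove. So I may assume $\bE{}{a(X)}\in(0,\infty)$ and $\bE{}{a(X)\ln a(X)}<\infty$.

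Next I would introduce the tilted probability measure $Q\ll P$ with Radon--Nikodym derivative $\dv{Q}{P}=a(X)/\bE{}{a(X)}$. Two elementary computations rewrite both sides of the target inequality in terms of $Q$:
\[
\frac{\normtwo{\bE{}{\Phi(X)a(X)}}^2}{\bE{}{a(X)}^2}=\normtwo{\bE{Q}{\Phi(X)}}^2,\qquad \kldiv{Q}{P}=\frac{\bE{}{a(X)\ln a(X)}}{\bE{}{a(X)}}+\ln\frac{1}{\bE{}{a(X)}},
\]
so the lemma is \emph{equivalent} to the clean bound $\normtwo{\bE{Q}{\Phi(X)}}^2\le 2\sigma^2\,\kldiv{Q}{P}$. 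Since $\kldiv{Q}{P}\ge 0$, I may further assume $\kldiv{Q}{P}<\infty$ (otherwise trivial), which together with subgaussianity of $\Phi(X)$ under $P$ guarantees that $\bE{Q}{\Phi(X)}$ is well defined and finite.

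To prove the reduced bound, I would fix a unit vector $v\in\R^\zdims$ and apply the Donsker--Varadhan (Gibbs) variational formula with the test function $x\mapsto\lambda\dotprod{\Phi(x)}{v}$, obtaining for every $\lambda\in\R$
\[
\lambda\,\dotprod{\bE{Q}{\Phi(X)}}{v}\;\le\;\kldiv{Q}{P}+\ln\bE{}{e^{\lambda\dotprod{\Phi(X)}{v}}}\;\le\;\kldiv{Q}{P}+\frac{\sigma^2\lambda^2}{2},
\]
where the last step uses that $\dotprod{\Phi(X)}{v}$ is $\sigma^2$-subgaussian under $P$ — which holds by the very definition of $\sigma^2$-subgaussianity for vectors, and in any case follows from independence of the $\sigma^2$-subgaussian coordinates of $\Phi(X)$. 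Optimizing the extreme sides over $\lambda$, i.e.\ taking $\lambda=\dotprod{\bE{Q}{\Phi(X)}}{v}/\sigma^2$, gives $\dotprod{\bE{Q}{\Phi(X)}}{v}^2\le 2\sigma^2\,\kldiv{Q}{P}$ for every unit $v$; choosing $v$ in the direction of $\bE{Q}{\Phi(X)}$ (the case $\bE{Q}{\Phi(X)}=\zero$ being trivial) yields $\normtwo{\bE{Q}{\Phi(X)}}^2\le 2\sigma^2\,\kldiv{Q}{P}$, which is the claim.

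I do not anticipate a genuine obstacle here: the result is essentially Marton's/Talagrand's transportation inequality repackaged as a covariance bound, so the only real work is bookkeeping — well-definedness of $Q$, finiteness of $\bE{Q}{\Phi(X)}$, and the degenerate cases $\bE{}{a(X)}=0$, $\sigma=0$ (where $\Phi$ is $P$-a.s.\ constant and both sides vanish), $\kldiv{Q}{P}=\infty$, and $\bE{Q}{\Phi(X)}=\zero$, all handled as above. If one preferred not to invoke Donsker--Varadhan as a black box, the same estimate can be produced by a direct change-of-measure and Cauchy--Schwarz argument as in~\cite[Chapter~8]{Boucheron:13}, but the variational route is the shortest.
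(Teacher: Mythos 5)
Your proof is correct, and it reaches the paper's bound by a genuinely leaner route than the one the paper takes. Both arguments start identically: tilt $P$ into $Q$ with $\dv{Q}{P}=a(X)/\bE{P}{a(X)}$, note that the claim is exactly $\normtwo{\bE{Q}{\Phi(X)}}^2\le 2\sigma^2\kldiv{Q}{P}$, and then invoke the Gibbs/Donsker--Varadhan variational principle together with the subgaussian bound $\ln\bE{P}{e^{\lambda Z}}\le \sigma^2\lambda^2/2$ and an optimization over $\lambda$. The difference is in how the vector structure is handled. The paper applies the variational inequality \emph{coordinate-by-coordinate}, with $Z=\Phi(X)_i$, obtaining $\bE{Q}{\Phi(X)_i}^2\le 2\sigma^2\kldiv{Q'_i}{P'_i}$ for the marginals $Q'_i,P'_i$ of $\Phi(X)_i$, and must then show $\sum_{i=1}^\zdims\kldiv{Q'_i}{P'_i}\le\kldiv{Q}{P}$ --- this is where the independence of the coordinates of $\Phi(X)$ under $P$ enters, via convexity of KL, the chain rule, and data processing. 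You instead apply the variational inequality \emph{once}, to the scalar $\dotprod{\Phi(X)}{v}$ with $v$ the unit vector in the direction of $\bE{Q}{\Phi(X)}$, and use the projection definition of vector subgaussianity (exactly the one in the paper's footnote). This collapses the proof: no marginal or conditional KL divergences, no tensorization step, and --- under the lemma's stated hypothesis of vector $\sigma^2$-subgaussianity --- no use of the independence assumption at all, which your argument shows is redundant. What the paper's longer route buys is that it only ever uses per-coordinate subgaussianity (each $\Phi(X)_i$ being $\sigma^2$-subgaussian) together with independence, which is the form in which the assumption is actually verified in the applications (Bernoulli and Gaussian constructions), and the two hypotheses are equivalent to the vector statement in that setting; your route is the shorter and more modular one given the lemma as stated. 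Your handling of the degenerate cases ($\bE{P}{a(X)}=0$ excluded, $u\ln u\ge -1/e$, infinite KL, $\sigma=0$, zero mean vector) and of the integrability needed for Donsker--Varadhan is also fine.
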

\begin{proof}
By an application of Gibb's variational principle ($cf.$~\cite[Corollary~4.14]{Boucheron:13}) the following holds:
For a random variable
$Z$ and distributions $P$ and $Q$ on the underlying
probability space satisfying $Q\ll P$ (that is, such that $Q$ is absolutely continuous with respect to $P$),
we have
\[
\lambda \bE{Q}{Z} \leq \ln \bE{P}{e^{\lambda Z}}+ \kldiv{Q}{P}.
\] 
To apply this bound, set $P$ to be the distribution of $X$
and let $Q\ll P $ be defined
using its density (Radon--Nikodym derivative) with respect to $P$
given by 
\[
\dv{Q}{P} =  \frac{a(X)}{\bE{P}{a(X)}}.
\]
Now, note that for any unit vector $v$, we have, setting $Z= v^\transp \Phi(X)$ and
 using the $\sigma^2$-subgaussianity of $\Phi(X)$, that
\[
\lambda \bE{Q}{v^\transp \Phi(X)} \leq \ln \bE{P}{e^{\lambda v^\transp \Phi(X) }}+\kldiv{Q}{P} \leq
\frac{\sigma^2\lambda^2}{2}+ \kldiv{Q}{P}.
\]
In particular, for $\lambda = \frac{1}{\sigma}\sqrt{2\kldiv{Q}{P}}$, we get
\[
\bE{Q}{v^\transp \Phi(X)} \leq \sigma\sqrt{2\kldiv{Q}{P}}.
\]
Applying this to the unit vector $v \eqdef \frac{\bE{Q}{\Phi(X)}}{\normtwo{\bE{Q}{\Phi(X)}}}$ then yields
\[
\normtwo{\bE{Q}{\Phi(X)}} \leq \sigma\sqrt{2\kldiv{Q}{P}}.
\]
To conclude, it then suffices to observe that
\begin{align*}
\kldiv{Q}{P} &= \frac{\bE{P}{a(X)\ln a(X)}}{\bE{P}{a(X)}} + \ln \frac{1}{\bE{P}{a(X)}}.
\end{align*}
The proof is completed by combining the bounds above, as 
$
\bE{Q}{\Phi(X)} = \frac{\bE{P}{\Phi(X)a(X))}}{\bE{P}{a(X)}}.$
\end{proof}
}

\section{Upper bounds}
\noindent We now describe and analyze the interactive algorithms for the estimation tasks we consider. 
  \subsection{Product Bernoulli Distributions}
    \label{app:ub:bernoulli}
      Recall that $\mathcal{B}_{\dims,\sprs}$, the family of 
$\dims$-dimensional $\sprs$-sparse product Bernoulli distributions, is defined as
\begin{equation}\label{def:sparse:product:bernoulli}
    \mathcal{B}_{\dims,\sprs} \eqdef \setOfSuchThat{ 
    \bigotimes_{j=1}^\dims\rademacher{\frac{1}{2}(\mu_j+1)} }{ 
    \mu\in[-1,1]^\dims, \norm{\mu}_0 \leq \sprs}\,.
\end{equation}

We now provide the interactive protocols achieving the upper bounds of~\cref{theorem:mean:estimation:bernoulli} for sparse product Bernoulli mean
estimation under LDP and communication constraints . 

Our protocols has two ingredients described below: 

\begin{itemize}
	\item [{\bf 1. Estimating non-zero mean coordinates.}] In this step we will start with $S_0=[d]$, the set of all possible coordinates. Then we will iteratively prune the set $S_0\to S_1\to\ldots\to S_T$, such that $|S_T|=3s$ (this step is skipped if $\sprs\ge\dims/3$) is a good estimate for the set of coordinates with non-zero mean. 
	\item [{\bf 2. Estimating the non-zero means.}] We then estimate the means of the coordinates in $S_T$, which is equivalent to solving a dense mean estimation problem in $3s$ dimensions.
\end{itemize}

In the next two sections, we provide the details of the algorithm that matches the lower bounds obtained
in~\cref{sec:bernoulli-product} for interactive protocols under LDP  and communication constraints respectively.

\subsubsection{LDP constraints}
\newchange{In this subsection, we will focus on the case $\priv \in (0, 1]$ (high-privacy regime). For 
the case $\priv > 1$, we rely a privatization of the 
communication-limited 
algorithm, which will be discussed at the end of 
\cref{sec:communication-upper}.} 
Our protocol for Bernoulli mean estimation under LDP constraints is 
described in~\cref{alg:ldp}. As stated above, in each round $t=1, \ldots, T$, 
for each $j\in S_{t-1}$ a new group of players apply the well known binary 
Randomized Response (RR) 
mechanism~\cite{Warner:65,KLNRS:11} to their $j$th coordinate. Using 
these messages we then guess a set of coordinates with highest possible 
means (in absolute value) and prune the set to $S_t$. This is done in 
Lines~\ref{state-3}-\ref{state-6} of~\cref{alg:ldp}.

In Lines \ref{state-7}-\ref{state-11}, the algorithm uses the same approach 
to estimate the means of coordinates within $S_T$ and sets remaining 
coordinates to zero.

The privacy guarantee
follows immediately from that of the RR
mechanism, and further, this only requires one bit of communication
per player.  

\begin{algorithm}[h]
	\caption{LDP protocol for mean estimation for the product of Bernoulli family}
	\label{alg:ldp}
	\begin{algorithmic}[1]
		\Require $\ns$ players, dimension $\dims$, sparsity parameter 
		$\sprs$, privacy parameter $\priv$.
		\State Set $\numT \eqdef \log_3 \frac{\dims}{3\sprs}$, $\alpha \eqdef 
		\frac{e^\priv}{1 + e^\priv}$, $\setS_0 = [\dims]$, $N_0 \eqdef
		\frac{\ns}{6\dims}$. 
		\For {$t = 1, 2, \ldots, \numT$}\label{state-3}
			\For{$j\in S_{t-1}$}
			\State Get a group of new players 
			$G_{t,j}$ of size $N_t = N_0 \cdot 2^t$. 
			\State Player 
			$i\in G_{t,j}$, upon observing $X_i\in\bool^\dims$ sends the message $Y_i\in\bool$ such that
			\begin{equation} \label{eqn:rr}
			Y_i =
			\begin{cases}
			(X_i)_j & \text{w.p. $\alpha$,}	\\
			-(X_i)_j & \text{w.p. $1-\alpha$.}	
			\end{cases}
			\end{equation}\label{state-4}
		\State Set $M_{t,j} \eqdef \sum_{i\in G_{t,j}} Y_i$. Let $S_{t} \subseteq S_{t-1}$ be the 
		set of the $|S_{t-1}|/3$ indices with the largest $|M_{t,j}|$.\label{state-6}
		\EndFor
		\EndFor
		\For {$j\in S_T$}\label{state-7}
		\State Get a group of new players $G_{T,j}, j 
		\in S_{\numT}$ of size $N_{\numT+1}= N_0 \cdot 
		2^\numT$. 
		\State Player 
		$i\in 
		G_{T,j}$, sends 
		the message $Y_i\in\bool$ according to~\cref{eqn:rr} and $M_{T,j} \eqdef \sum_{i\in G_{T,j}} Y_i$\label{state-8}
		\EndFor
		
\For{$j\in[d]$}
\State\begin{equation*}
			\widehat{\mu}_j =
			\begin{cases}
			\frac{M_{j,\numT}}{(2\alpha - 1) N_{\numT+1}} & \text{if $j\in S_T$,}	\\
			0 & \text{otherwise.}	
			\end{cases}
			\end{equation*}
\EndFor
		\State \Return $\widehat{\mu}$.\label{state-11}
		\end{algorithmic}
\end{algorithm}

The performance guarantee of \cref{alg:ldp} is stated below, which matches the 
lower 
bounds obtained
in~\cref{sec:bernoulli-product}.
\begin{proposition}
	\label{theorem:mean:estimation:bernoulli:ub:ldp}
	Fix $\prm\in[1,\infty]$. For $\ns \geq 1$ and $\priv \in (0,1]$, \
	\cref{alg:ldp} is an $(\ns, \dst)$-estimator using $\cW_\priv$ under 
	$\ell_\prm$ loss for 
	$\mathcal{B}_{\dims,\sprs}$ with 
	$
		\dst = \bigO{\sqrt{\frac{\prm \dims \sprs^{2/\prm}}{\ns \priv^2}}}
	$ for $\prm \le 2 \log \sprs$ and $
	\dst = \bigO{\sqrt{\frac{\dims \log \sprs  }{\ns \priv^2}}}
	$ for $\prm >  2 \log \sprs$.
\end{proposition}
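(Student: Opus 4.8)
The plan is to analyze Algorithm~\ref{alg:ldp} in two phases — the support-recovery (pruning) phase and the dense-estimation phase — and control the total error by a union bound over failure events, then convert the resulting $\lp[2]$-type guarantee into an $\lp[\prm]$ guarantee via H\"older's inequality. I would first set up notation: each round $t\in[T]$ uses $|S_{t-1}|\cdot N_t = |S_{t-1}| \cdot N_0 2^t$ fresh players, and since $|S_{t-1}| = \dims/3^{t-1}$, this is $\dims N_0 2^t/3^{t-1} = 3\dims N_0 (2/3)^t \le 6\dims N_0$ players per round, and $T = \log_3(\dims/3\sprs)$ rounds, so the total over the pruning phase plus the final estimation step stays $\bigO{\dims N_0 \log(\dims/\sprs)}$; choosing $N_0 \asymp \ns/(\dims\log(\dims/\sprs))$ (the statement writes $N_0 = \ns/6\dims$, so one should track the logarithmic factor carefully here — I'd note that the clean $\ns/6\dims$ works if one instead charges rounds geometrically, since $\sum_t 6\dims N_0 (2/3)^t \cdot$ something telescopes; this bookkeeping is the first thing to nail down).

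\textbf{Pruning analysis.} For a fixed round $t$ and coordinate $j\in S_{t-1}$, $M_{t,j}$ is a sum of $N_t$ i.i.d.\ $\{-1,+1\}$ variables with mean $(2\alpha-1)\mu_j$, where $2\alpha-1 = \frac{e^\priv-1}{e^\priv+1} \asymp \priv$ for $\priv\in(0,1]$. By a Hoeffding/Bernstein bound, $|M_{t,j}/N_t - (2\alpha-1)\mu_j| \lesssim \sqrt{\log(\dims/\delta)/N_t}$ with probability $1-\delta$. The key structural fact I would use is that at most $\sprs$ coordinates are nonzero, so after retaining the top $|S_{t-1}|/3$ indices by $|M_{t,j}|$, a nonzero coordinate $j$ with $|\mu_j|$ above the ``resolution'' $\asymp\sqrt{\log(\dims/\delta)/N_t}$ of round $t$ survives; coordinates with smaller $|\mu_j|$ may be dropped, but they contribute at most $\sprs \cdot (\text{round-}t\text{ resolution})^\prm$ to the $\lp[\prm]^\prm$ error, and this is dominated by the final estimation error because the resolution improves geometrically ($N_t = N_0 2^t$ doubles each round). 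Thus I would argue by a potential/telescoping argument that, conditioned on all $\bigO{T\dims}$ concentration events holding, the squared bias incurred by wrongly pruned coordinates, summed across rounds, is $\bigO{\sprs \log(\sprs)/(N_0 2^T\,(2\alpha-1)^2)} = \bigO{\sprs^2 \log \sprs \cdot \dims/(\ns\priv^2)}$ after substituting $2^T \asymp \dims/\sprs$ and $N_0 \asymp \ns/\dims$ — matching the target up to the H\"older conversion.

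\textbf{Dense estimation and $\lp[\prm]$ conversion.} In the final step, $|S_T| = 3\sprs$ (or $\dims$ if $\sprs \ge \dims/3$), each surviving coordinate is estimated by averaging $N_{T+1} = N_0 2^T \asymp \ns\sprs/\dims^2 \cdot (\dims/\sprs) = \ns/\dims$ \ldots — again the exact value needs the bookkeeping above — randomized-response bits, giving per-coordinate squared error $\bigO{\frac{1}{N_{T+1}(2\alpha-1)^2}} \asymp \frac{\dims}{\ns\priv^2}$ in expectation, hence $\bE{}{\norm{\hat\mu - \mu}_2^2} = \bigO{\frac{\sprs\dims}{\ns\priv^2}}$. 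Now for $\prm\in[1,2]$: since $\hat\mu - \mu$ is supported on $\le \bigO{\sprs}$ coordinates, $\norm{\hat\mu-\mu}_\prm \le (\bigO{\sprs})^{1/\prm - 1/2}\norm{\hat\mu-\mu}_2$, and with a factor-$\prm$ slack from taking $\prm$-th moments via Jensen/equivalence of moments of sub-Gaussians one gets $\bE{}{\norm{\hat\mu-\mu}_\prm^\prm}^{1/\prm} = \bigO{\sqrt{\prm\,\sprs^{2/\prm}\dims/(\ns\priv^2)}}$; for $\prm \ge 2$ and up to $\prm = 2\log\sprs$ one uses $\norm{\cdot}_\prm \le \norm{\cdot}_2$ together with the moment control, and the bound $\sqrt{\prm\sprs^{2/\prm}\dims/(\ns\priv^2)}$ still dominates; once $\prm > 2\log\sprs$, $\sprs^{1/\prm} = \bigO{1}$ and $\sqrt{\prm}$ is replaced by the crude $\norm{\hat\mu-\mu}_\infty$-type bound scaling as $\sqrt{\log\sprs}$ (from a union bound over the $\bigO{\sprs}$ coordinates), giving $\dst = \bigO{\sqrt{\dims\log\sprs/(\ns\priv^2)}}$.

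\textbf{Main obstacle.} The genuinely delicate part is the pruning analysis: one must show that \emph{all} the bias introduced by coordinates that are nonzero but too small to survive a given round is, when aggregated across the $T$ rounds and raised to the $\prm$-th power, dominated by the final dense-estimation error — this requires choosing the round sizes $N_t$ so that the ``resolution thresholds'' decrease geometrically and using $\norm{\mu}_0 \le \sprs$ to bound the number of such coordinates per round. The subtlety is that a coordinate wrongly dropped in round $t$ is never recovered, so the argument must be a genuine one-shot bound on the ``missed mass'' rather than an iterative-improvement argument; handling the interaction between this and the $\prm$-dependence (especially that the extra $\sqrt\prm$ factor and the transition at $\prm \asymp \log\sprs$ come out cleanly) is where the care lies. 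A secondary nuisance is reconciling the stated $N_0 = \ns/6\dims$ with the need for a $\log(\dims/\sprs)$ budget across rounds; I would resolve this by charging each round $t$ only $\bigO{\dims N_0 (2/3)^t}$ players via $|S_{t-1}|=\dims/3^{t-1}$, whose sum over $t$ is $\bigO{\dims N_0}$, so no log factor is actually lost in the player budget — the $\log\sprs$ in the final rate comes only from the $\lp[\infty]$ union bound in the large-$\prm$ regime.
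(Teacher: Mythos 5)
Your overall architecture necessarily matches the paper's (the algorithm is given), but your error analysis would not prove the stated bound. The main gap is the pruning phase. You propose to condition on all $\bigO{T\dims}$ concentration events, each with resolution $\asymp\sqrt{\log(\dims/\delta)/N_t}$, and to charge each wrongly pruned nonzero coordinate the resolution of the round in which it was dropped. Since $N_t$ grows with $t$, the binding round is the \emph{first} one (your claim that the pruning bias is ``dominated by the final estimation error because the resolution improves geometrically'' is backwards), so this accounting gives an aggregate of order $\sprs\,\bigl(\log(\dims/\delta)\,\dims/(\ns\priv^2)\bigr)^{\prm/2}$, i.e.\ a $\log(\dims/\delta)$ factor where the target has only $\prm$. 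Worse, the guarantee is on the \emph{expected} $\lp[\prm]^\prm$ loss, so the failure event of probability $\delta$ contributes up to $\delta\cdot\dims\,2^{\prm}$ and forces $\delta$ to be polynomially small in $\dims$; the logarithm is then genuinely present and the stated log-free rate $\sqrt{\prm\,\dims\,\sprs^{2/\prm}/(\ns\priv^2)}$ is not reached for small $\prm$. The paper avoids conditioning altogether: for each nonzero coordinate $j$ it bounds $|\mu_j|^{\prm}\,\bPr{j\notin S_T}$ directly, writing $|\mu_j|=\beta_j\cdot\Theta(\sqrt{\dims/(\ns\priv^2)})$, showing via Hoeffding (for $M_{t,j}$) together with a Markov bound on the \emph{number} of zero-mean coordinates whose noise exceeds half the signal that the removal probability in round $t$ is $\lesssim e^{-c\,2^{t}\beta_j^{2}}$, summing this geometric series over $t$, and finally using $x^{\prm}e^{-cx^{2}}\lesssim \prm^{\prm/2}$; this in-expectation, per-coordinate argument is precisely what produces the clean $\prm^{\prm/2}$ dependence with no $\log\dims$, and it is the idea missing from your proposal.

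A second, smaller gap is the $\lp[\prm]$ conversion for $2<\prm\le 2\log\sprs$: the inequality $\norm{\cdot}_{\prm}\le\norm{\cdot}_{2}$ only yields $\sqrt{\sprs\,\dims/(\ns\priv^2)}$, which exceeds the claimed $\sqrt{\prm\,\sprs^{2/\prm}\dims/(\ns\priv^2)}$ by roughly $\sqrt{\sprs/\log\sprs}$ at $\prm=2\log\sprs$; one must instead use per-coordinate $\prm$-th moment bounds $\bEE{|\hat\mu_j-\mu_j|^{\prm}}\lesssim(\prm\,\dims/(\ns\priv^2))^{\prm/2}$ (the paper invokes a binomial moment fact) and sum over the $\bigO{\sprs}$ retained coordinates — your ``moment control'' remark gestures at this but the argument as written does not deliver the $\sprs^{2/\prm}$ factor. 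Your final resolution of the player-budget bookkeeping is correct (the series $\sum_t |S_{t-1}|N_t$ is geometric, so $N_0=\ns/(6\dims)$ works with no log loss), and the $\prm\in[1,2]$ and $\prm>2\log\sprs$ regimes are essentially fine, though intermediate arithmetic (e.g.\ $2^{T}\asymp\dims/\sprs$, and the aggregate $\sprs^{2}\log\sprs\cdot\dims/(\ns\priv^2)$, which does not sit below the $\lp[2]$ target $\sprs\dims/(\ns\priv^2)$ in all regimes) contains slips.
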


\begin{proof}
The total number of players used by \cref{alg:ldp} uses is
	\[
		\sum_{t = 1}^{\numT+1} |\setS_{t-1}|\cdot N_t = |\setS_0|\cdot N_0 \cdot\sum_{t = 
		1}^{\numT+1} \frac{2^t}{3^{t-1}} \le 6 |\setS_0| \cdot N_0  = \ns.
	\]
To prove the utility guarantee, we bound the estimation error in the estimated set $\setS_{\numT}$ and the 
	error outside the set $\setS_{\numT}$ in the following lemma. 

\begin{lemma}\label{lem:err:withinS}
	Let $\setS_\numT$ be the subset obtained from the first stage of 
	\cref{alg:ldp}. Then, 
	\[
	\max\left\{\bEE{\sum_{j \notin \setS_\numT} \abs{\mu_j - \widehat{\mu}_j}^\prm},\bEE{\sum_{j \in \setS_\numT} \abs{\mu_j - \widehat{\mu}_j}^\prm}\right\} = 
	\bigO{\sprs \Paren{\frac{\prm\dims}{\ns 
				\priv^2}}^{p/2}}.
	\]
\end{lemma}

The proposition follows directly from the lemma.
Indeed, for $\prm 
> 2 \log \sprs$,  by monotonicity of $\lp[\prm]$ norms we have $\norm{\mu-\hat{\mu}}_{\prm} \leq \norm{\mu-\hat{\mu}}_{\prm'}$ for 
all 
$\prm' \leq \prm$, and thus choosing $\prm' \eqdef 2\log \sprs$ is sufficient to 
obtain the stated bound.
\end{proof}

\begin{proof}[Proof of \cref{lem:err:withinS}]
We prove the bound on each term individually. The first term captures the performance of our estimator within coordinates in $S_T$ and the second term states that we do not ``prune'' too many coordinates with high non-zero means. 

\medskip
\noindent\textbf{Bounding the first term.}
For $j \notin \setS_\numT$, we output $\widehat{\mu}_j = 0$. Therefore,
	\[
		\bEE{\sum_{j \notin \setS_\numT} \abs{\mu_j -\widehat{\mu}_j}^\prm} = 
		\sum_{j}  \bEE{\abs{\mu_j - 
				\widehat{\mu}_j}^\prm \cdot \indic{j \notin \setS_\numT}} = \sum_{j}  \abs{\mu_j}^\prm \cdot \bPr{j \notin \setS_\numT}.
	\]
	Since $\mu$ is $\sprs$-sparse, it will suffice to show that for all $j$ with $|\mu_j| > 0$,
	\begin{equation} \label{eqn:prob:missing}
	|\mu_j|^\prm \cdot \bPr{j \notin \setS_\numT}= \bigO{ 
		\Paren{\frac{\prm \dims}{\ns 
				\priv^2}}^{\prm/2}}.
	\end{equation}
	
Let \[
\Th := 20\sqrt{\frac{\dims }{\ns 
			(2\alpha-1)^2}}.
\]
Note that for $\priv\in(0,1]$, we have	$2\alpha - 1 \geq \frac{e-1}{e+1}\priv$. Therefore, if $|\mu_j| \le 
	\Th$, then~\cref{eqn:prob:missing} holds since 
			$\bPr{j 
			\notin \setS} \le 1$. We 
			hereafter assume $|\mu_j| > 
	\Th$, and let $\mu_j = \beta_j \Th$ 
	with $\beta_j > 1$. Let $E_{t,j}$ be the event that coordinate $j$ is 
	removed in round $t$ given that $j \in \setS_{t-1}$. Then we have
	\[
		\bPr{j \notin \setS_\numT} \le \sum_{t = 1}^\numT \bPr{E_{t,j}}.
	\]

We proceed to bound each $\bPr{E_{t,j}}$ separately. 
Note that for $i \in G_{t,j}$, $Y_i\in\bool$ and by~\cref{eqn:rr}
\begin{equation}
\label{eqn:exp-Y}	
\expect{Y_i} = 
(2 \alpha - 1)\cdot \mu_j = (2 \alpha - 1)\beta_j 
	\Th.
\end{equation}
 	
	Let $\numlc_{t,j}$ be the number 
	of coordinates $j'$ with $\mu_{j'} = 0$ and 
	$|M_{t,j'}| \ge \frac{1}{2}N_t (2 \alpha - 1)\beta_j\Th$. 
	Since we select the $|S_{t-1}|/3$ coordinates with the largest 
	magnitude of the sum, for 
	$j \notin \setS_{t}$ to happen at least one of the 
	following must occur: (i)~$\numlc_{t,j} > \frac{1}{3}|S_{t-1}| - \sprs$, or (ii)~$M_{t,j} < \frac{1}{2}N_t  (2\alpha-1)\beta_j\Th$. 

	By Hoeffding's inequality, we have
	\[
		\bPr{M_{t,j} < \frac{1}{2}N_t(2\alpha -1 )\beta_j\Th} 
		\le 
		\exp\Paren{ - \frac{1}{8} N_t\Paren{(2\alpha-1) 
		\beta_j\Th}^2} < \exp\Paren{ - 5\cdot 2^t \beta_j^2}
		.
	\]
 	Let $p_{t,j} \eqdef e^{- 5\cdot 2^t \beta_j^2}$. Similarly, for 
 	any 
 	$j'$ such that $\mu_{j'} = 0$,  
	\[
		\bPr{|M_{t,j'}| \ge \frac{1}{2}N_t (2\alpha -1 ) \beta_j\Th} \le  
		2p_{t,j}.
	\]
Since all coordinates are independent, $\numlc_{t,j}$ is 
	binomially distributed with mean at most $2 p_{t,j} |S_{t-1}|$. By Markov's inequality, 
	we get
	\[
		\bPr{\numlc_{t,j}> \frac{1}{3}|S_{t-1}| - \sprs} \le 
		\frac{\bEE{\numlc_{t,j}}}{|S_{t-1}|/3 - \sprs} \le 
		p_{t,j},
	\]
	recalling that $|S_{t-1}| = \dims 3^{t-1} \geq 9\sprs$.
	By a union bound and summing over $t \in [\numT]$, we get
	\[
	\bPr{j \notin \setS_\numT} \le \sum_{t = 1}^\numT \bPr{E_{t,j}} 
	\le\sum_{t = 1}^\numT 3 p_{t,j} = 3\sum_{t = 1}^\numT 
	\exp\Paren{ - 2^t \cdot 5 \beta_j^2} \le 6 \exp\Paren{ - 5 
	\beta_j^2} .
	\]
	Not that for $x > 0$, $x^\prm e ^{-x^2} \le 
	\Paren{\frac{\prm}{2e}}^{\prm/2}$. 
	Hence
	\[
		|\mu_j|^\prm \cdot \bPr{j \notin \setS_\numT} \le 6 H^\prm 
		\beta_j^{\prm}e^{-5\beta_j^2} \leq 
	\Paren{C\frac{\prm \dims }{\ns \priv^2}}^{\prm/2},
	\]
	for some absolute constant $C>0$, completing the proof.

\medskip
\noindent\textbf{Bounding the second term.}
Note that $S_\numT$ is a random variable 
	itself. We show that the 
	bound holds for any realization of $S_\numT$.
	 We need the following result which follows from standard moment bounds on binomial 
	distributions. 

\begin{fact}
		\label{fact:binomial-lp}
		Let $\prm \geq 1$, $m\in\N$, $0\le q\le 1$, and $N\sim\binomial{m}{q}$. Then,
		$
		\expect{\abs{N- mq}^{\prm}}\le 2^{-\prm/2}m^{\prm/2} 
		\prm^{\prm/2}
		$ .
	\end{fact}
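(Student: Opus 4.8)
The plan is to realise the centred binomial as a sum of independent, bounded, mean-zero summands, deduce subgaussianity via Hoeffding's lemma, and then apply a standard subgaussian moment bound; the regime $1\le\prm<2$ I would handle separately by interpolating with the second moment. The target bound is loose by polynomial factors, so no sharp analysis is needed and it suffices to track absolute constants. (One could alternatively just cite a textbook estimate for binomial moments, but doing it from scratch is just as short.)

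First I would write $N-mq=\sum_{i=1}^m (B_i-q)$, where $B_1,\dots,B_m$ are i.i.d.\ $\{0,1\}$-valued with mean $q$. Each summand $B_i-q$ takes values in an interval of length $1$ and has mean zero, so by Hoeffding's lemma ($cf.$~\cite{Boucheron:13}) it is $\tfrac14$-subgaussian; by independence $N-mq$ is $\tfrac m4$-subgaussian, and hence satisfies the tail bound $\bPr{\abs{N-mq}\ge t}\le 2e^{-2t^2/m}$ for all $t\ge 0$.

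Next, for $\prm\ge 2$ I would integrate this tail bound: starting from $\expect{\abs{N-mq}^\prm}=\int_0^\infty \prm t^{\prm-1}\bPr{\abs{N-mq}\ge t}\,\dd{t}\le 2\prm\int_0^\infty t^{\prm-1}e^{-2t^2/m}\,\dd{t}$, the substitution $u=2t^2/m$ turns the right-hand side into
\[
\prm\Paren{\frac m2}^{\prm/2}\Gamma\Paren{\frac\prm2}.
\]
Using the elementary estimate $\Gamma(\prm/2)\le(\prm/2)^{\prm/2-1}$, valid for $\prm\ge 2$ (it follows from $\Gamma(x+1)\le x^x$ for $x\ge1$, together with $\Gamma(x)\le 1$ on $[1,2]$), this is at most $2(\prm m/4)^{\prm/2}=2^{1-\prm}(\prm m)^{\prm/2}\le 2^{-\prm/2}m^{\prm/2}\prm^{\prm/2}$, where the last step uses $2^{1-\prm}\le 2^{-\prm/2}$ for $\prm\ge 2$. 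For $1\le\prm<2$, Lyapunov's inequality gives $\expect{\abs{N-mq}^\prm}^{1/\prm}\le\expect{\abs{N-mq}^2}^{1/2}=\sqrt{mq(1-q)}\le\tfrac12\sqrt m$, so $\expect{\abs{N-mq}^\prm}\le 2^{-\prm}m^{\prm/2}$, which is again at most $2^{-\prm/2}m^{\prm/2}\prm^{\prm/2}$ since $2^{-\prm}\le 2^{-\prm/2}\prm^{\prm/2}$ whenever $\prm\ge1$. Combining the two regimes proves the claim.

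The only mildly delicate ingredient is the elementary Gamma-function estimate used for $\prm\ge 2$ (equivalently, controlling $\Gamma(\prm/2)$ by $(\prm/2)^{\prm/2-1}$); everything else is routine bookkeeping with absolute constants, so I do not anticipate any real obstacle.
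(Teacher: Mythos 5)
Your proof is correct, and every constant checks out: Hoeffding's lemma makes each centred Bernoulli summand $\tfrac14$-subgaussian, the tail bound $2e^{-2t^2/m}$ follows by independence, the substitution indeed gives $\prm(m/2)^{\prm/2}\Gamma(\prm/2)$, the estimate $\Gamma(x)\le x^{x-1}$ for $x\ge1$ is valid (via $\Gamma(x+1)\le x^x$ on $x\ge1$ and $\Gamma\le1$ on $[1,2]$), and the final comparisons $2^{1-\prm}\le 2^{-\prm/2}$ for $\prm\ge2$ and the Lyapunov interpolation for $1\le\prm<2$ are right. For comparison: the paper does not prove this fact at all — it is stated as following from ``standard moment bounds on binomial distributions'' — so your write-up supplies exactly the standard argument the authors are implicitly invoking (subgaussianity of the centred binomial plus tail integration and an elementary Gamma-function bound). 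The only ingredient worth flagging is the one you already flagged, $\Gamma(x+1)\le x^x$ for $x\ge1$; it is a true and classical inequality (e.g.\ from monotonicity of $x\ln x-\ln\Gamma(x+1)$, whose derivative is $\ln x+1-\psi(x+1)\ge 1-\ln(1+1/x)\ge0$), so there is no gap.
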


Applying this with $m= N_T \ge \frac{\ns}{6\dims}$, the transformation 
	from Bernoulli 
	to $\{-1,+1\}$, and the scaling by $2\alpha-1$, yields for $j \in 
	\setS_\numT$, and using~\cref{eqn:exp-Y}
	\[
	\bEE{\abs{\mu_j-\widehat{\mu}_j}^{\prm}} \leq 
	\Paren{\frac{\prm}{(\ns/6\dims)(2\alpha-1)^2}}^{\prm/2}.
	\]
	Upon summing over $j\in \setS_\numT$, we obtain
	\[
	\bEE{\sum_{j \in \setS_\numT} \abs{\mu_j - \widehat{\mu}_j}^\prm} 
	\leq 
	3 \sprs\cdot
	\Paren{\frac{6(e+1)^2\dims}{(e-1)^2\ns\priv^2}}^{\prm/2} \leq 
	3 \cdot 6^\prm\cdot \sprs \Paren{\frac{\prm\dims}{\ns 
				\priv^2}}^{\prm/2}.\qedhere
	\]
\end{proof}

\subsubsection{Communication constraints} 
\label{sec:communication-upper}
In~\cref{alg:communication} we propose a protocol to estimate the mean of 
product Bernoulli distributions under $\numbits$-bit communication 
constraints. \newchange{As mentioned in the previous subsection, the $\priv$-LDP algorithm with $\priv>1$ will follow from a simple modification of the communication-constrained one; we discuss how
to privatize the latter to obtain the former at the end of the section.} As 
in the LDP case when $\priv \in (0, 1]$, 
in~\ref{stage-one-begin}--\ref{stage-one-end} the algorithm iteratively 
prunes an initial set $S_0=[\dims]$ to obtain a set $S_T$ of size 
$\max\{3s,\numbits\}$, which denotes the set of potential non-zero 
coordinates. We then estimate the mean of coordinates in $S_T$. If 
$\numbits>3s$, then we can directly send the values of all coordinates in 
$S_T$ and use it for estimation; otherwise, when $3s>\numbits$, we again 
partition $S_T$ into sets of size $\numbits$ and each player sends the bits 
of its sample in this set. This is done in 
Lines~\ref{stage-two-begin}--\ref{stage-two-end}. We state the 
performance of~\cref{alg:communication} below.

\begin{algorithm}[h]
	\caption{$\numbits$-bit protocol for estimating product of Bernoulli 
		family}
	\label{alg:communication}
	\begin{algorithmic}[1]
		\Require $\ns$ players, dimension $\dims$, sparsity parameter 
		$\sprs$, communication bound $\numbits$.
		\State Set $\numT \eqdef \log_3 (\dims/\max \{3\sprs, \numbits\}) 
		$, $\setS_0 \eqdef [\dims]$, $N_0 \eqdef
		\frac{\ns \numbits}{18\dims}$. 
		\For {$t = 1, 2, \ldots, \numT$}\label{stage-one-begin}
		\State Set $P \eqdef \frac{\dims}{3^{t-1}\numbits}$, and partition $S_{t - 1}$ into $P$ 
		subsets $S_{t - 1, 1}, \ldots, S_{t - 1, P}$, each of size $\numbits$. 
		\For{$j=1, 2,\ldots,  P$}
		\State Get a group of new players 
			$G_{t,j}$ of size $N_t = N_0 \cdot 2^t$. 
		\State Player $i\in G_{t,j}$, upon observing $X_i\in\bool^\dims$ sends the message $Y_i = \{(X_i)_x\}_{x \in S_{t - 1, j}}$.
		\State For  $x \in S_{t - 1, j}$, 
		let $M_{t,x} \eqdef \sum_{i\in G_{t,j}} (X_i)_x$. 
		\EndFor
		\State Set $S_{t} \subseteq S_{t-1}$ to be the 
		set of indices with the largest $|M_{t,x}|$ and $|S_{t}| = |S_{t-1}|/3$.
		\EndFor
		\If{$\numbits \le 3\sprs$}
		\State Partition $S_{\numT}$ into $3 \sprs/\numbits$ 
		subsets of size $\numbits$ each, $S_{\numT, j}, j \in [3 \sprs/\numbits]$.\label{stage-one-end}
		\For{$j=1,\ldots,3s/\numbits$}\label{stage-two-begin}
		\State Get a new group 
		$G_{\numT + 1,j}$ of players of size $\ns\numbits/(6\sprs)$. 
		\State Player 
		$i\in 
		G_{\numT + 1, j}$,
		sends 
		the message $Y_i = \{(X_i)_x\}_{x \in S_{\numT, j}}$.
		\State For $x \in S_{\numT, j}$, let 
		$M_{\numT+1,x} = \sum_{i\in 
			G_{\numT + 1,j}} (X_i)_x $. Set 
		\[
		\widehat{\mu}_x \eqdef \frac{6 \sprs}{\ns \numbits}
		M_{\numT+1, x},
		\]
		\State For $x \notin  S_{\numT}$, set $\widehat{\mu}_x= 0$.
		\EndFor
		\EndIf
		\If{$\numbits > 3\sprs$},
		\State Get $\ns/2$ new players $G_{\numT+1}$ and for $i \in G_{\numT+1}$, 
		player $i$ sends $Y_i = \{(X_i)_x\}_{x \in S_{\numT}}$\Comment{This 
			can be done since $|S_{\numT}| = \numbits$ if $\numbits > 3\sprs$.}
		\State For $x \in S_{\numT}$, let $M_{\numT+1, x} = \sum_{i\in 
			G_{\numT + 1,j}} (X_i)_x $. Set $S_{\numT+1} \subseteq 
		S_{\numT}$ 
		to be the 
		set of indices with the largest $|M_{\numT+1, x}|$ and $|S_{\numT+1}| 
		= 
		3\sprs$.
		For all $x \in  S_{\numT+1}$, set 
		\[
		\widehat{\mu}_x \eqdef \frac{2}{\ns}
		M_{\numT+1, x},
		\]
		and for all $x \notin  S_{\numT+1}, \widehat{\mu}_x = 0$.\label{stage-two-end}
		\EndIf
		\State \Return $\widehat{\mu}$.
	\end{algorithmic}
\end{algorithm}

\begin{proposition}
	\label{theorem:mean:estimation:bernoulli:ub:comm}
	Fix $\prm\in[1,\infty]$. For $\ns \geq 1$ and $\numbits \le \dims$, we 
	have 
	\cref{alg:communication} is an $(\ns, \dst)$-estimator using 
	$\cW_\numbits$ under 
	$\numbits_\prm$ loss for 
	$\mathcal{B}_{\dims,\sprs}$ with 
	$
	\dst = \bigO{\sqrt{\frac{\prm \dims \sprs^{2/\prm}}{\ns \numbits} + 
	\frac{(\prm + \log(2\numbits/\sprs)) \sprs^{2/\prm} }{\ns}} }
	$ for $\prm \le 2 \log \sprs$ and $
	\dst = \bigO{ \sqrt{\frac{\dims \log \sprs  }{\ns \numbits} + \frac{\log\numbits}{\ns}} }
	$ for $\prm >  2 \log \sprs$.
\end{proposition}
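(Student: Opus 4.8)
The plan is to mirror the LDP argument (\cref{theorem:mean:estimation:bernoulli:ub:ldp} and \cref{lem:err:withinS}), replacing the randomized-response attenuation $(2\alpha-1)\asymp\priv$ by the ``bandwidth gain'': in \cref{alg:communication} a player in a group reports $\numbits$ of its coordinates \emph{exactly}, so the effective number of samples per coordinate is larger by a factor $\Theta(\numbits)$, while the groups are correspondingly $\numbits$ times smaller. First I check that every message is a restriction of $X_i$ to a block of size at most $\numbits$, so the $\numbits$-bit constraint holds. Next I check the player budget: the first stage uses $\sum_{t=1}^{\numT}\frac{\dims}{3^{t-1}\numbits}N_0 2^t=\frac{\dims N_0}{\numbits}\sum_{t\ge1}2(2/3)^{t-1}\le 6\dims N_0/\numbits=\ns/3$ with $N_0=\ns\numbits/(18\dims)$, and the second stage adds $\ns/2$ more in either branch, for a total below $\ns$. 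Finally, writing $\widehat S$ for the surviving support ($S_{\numT}$ when $\numbits\le 3\sprs$, and $S_{\numT+1}$ when $\numbits>3\sprs$), I split
\[
\bEE{\lp[\prm](\mu,\widehat\mu)^\prm}\le \underbrace{\sum_{j:\,\mu_j\ne 0}|\mu_j|^\prm\,\bPr{j\notin\widehat S}}_{\text{pruning error}}\;+\;\underbrace{\sum_{x\in\widehat S}\bEE{|\mu_x-\widehat\mu_x|^\prm}}_{\text{estimation error}},
\]
where $\sprs$-sparsity of $\mu$ makes the first sum carry at most $\sprs$ nonzero terms and the second at most $3\sprs$ terms.

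For the pruning error I would follow the first half of \cref{lem:err:withinS}. The key structural fact, special to the product family, is that for each round $t$ the statistics $\{M_{t,x}\}_{x\in S_{t-1}}$ are \emph{mutually independent}: coordinates within one block are independent because $X_i$ has independent entries, and coordinates in different blocks are fed by disjoint player groups. Setting $\Th\eqdef C\sqrt{\dims/(\ns\numbits)}$ for a suitable absolute constant $C$ and, for a nonzero coordinate $j$, writing $|\mu_j|=\beta_j\Th$ with $\beta_j>1$ (the case $\beta_j\le 1$ is trivial since $|\mu_j|^\prm\le\Th^\prm$), Hoeffding's inequality together with $N_t=N_0 2^t$ gives $\bPr{|M_{t,j}|<\tfrac12 N_t|\mu_j|}\le e^{-5\cdot2^t\beta_j^2}=:p_{t,j}$ and $\bPr{|M_{t,j'}|\ge\tfrac12 N_t|\mu_j|}\le 2p_{t,j}$ for every zero-mean $j'$. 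Since $|S_{t-1}|=\dims/3^{t-1}\ge 3\max\{3\sprs,\numbits\}\ge9\sprs$, Markov's inequality on the (binomial) count of ``confusing'' zero coordinates and a union bound over $t\le\numT$ give $\bPr{j\notin S_{\numT}}\le \bigO{e^{-c\beta_j^2}}$, as in the LDP case. Then $|\mu_j|^\prm\bPr{j\notin S_{\numT}}\le \Th^\prm\beta_j^\prm\cdot\bigO{e^{-c\beta_j^2}}\le (C'\prm\dims/(\ns\numbits))^{\prm/2}$ using $x^\prm e^{-x^2}\le(\prm/2e)^{\prm/2}$, and summing over the $\le\sprs$ nonzero coordinates bounds the pruning error (through $S_{\numT}$) by $\bigO{\sprs(\prm\dims/(\ns\numbits))^{\prm/2}}$.

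For the estimation error there are two cases. If $\numbits\le 3\sprs$, each $x\in S_{\numT}$ is estimated from $m=\ns\numbits/(6\sprs)$ exact samples, so after the $\{0,1\}\to\{-1,1\}$ rescaling \cref{fact:binomial-lp} gives $\bEE{|\mu_x-\widehat\mu_x|^\prm}\le(2\prm/m)^{\prm/2}$, and summing over $3\sprs$ coordinates yields $\bigO{\sprs(\prm\sprs/(\ns\numbits))^{\prm/2}}$, which is dominated by the pruning bound since $\sprs\le\dims$. If $\numbits>3\sprs$, there is an additional step $S_{\numT}\to S_{\numT+1}$ retaining the $3\sprs$ largest among $\numbits$ magnitudes, each built from $m=\ns/2$ exact samples; here the relevant threshold is $\Th'\eqdef C\sqrt{\log(2\numbits/\sprs)/\ns}$, since the $3\sprs$-th largest of $\numbits$ independent, roughly-Gaussian statistics $M_{\numT+1,\cdot}$ of variance $\Theta(\ns)$ concentrates at order $\sqrt{\ns\log(\numbits/\sprs)}$ (a maximal/order-statistics estimate via Hoeffding and a union bound over candidate sets). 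Hence a coordinate with $|\mu_x|=\beta\Th'$, $\beta>1$, fails to survive with probability at most $e^{-c\beta^2}$, contributing $\bigO{\sprs(\log(2\numbits/\sprs)/\ns)^{\prm/2}}$ of discarded nonzero mass, while the kept coordinates of $S_{\numT+1}$ contribute $\bigO{\sprs(\prm/\ns)^{\prm/2}}$ by \cref{fact:binomial-lp}; together $\bigO{\sprs((\prm+\log(2\numbits/\sprs))/\ns)^{\prm/2}}$. Adding the pruning and estimation contributions and taking $\prm$-th roots gives, for $\prm\in[1,\infty)$, $\dst=\bigO{\sqrt{\prm\dims\sprs^{2/\prm}/(\ns\numbits)+(\prm+\log(2\numbits/\sprs))\sprs^{2/\prm}/\ns}}$. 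For $\prm>2\log\sprs$ (in particular $\prm=\infty$), I invoke this bound at $\prm'\eqdef 2\log\sprs$ using monotonicity of $\lp[\prm]$ norms, where $\sprs^{2/\prm'}=\Theta(1)$, $\prm'=\Theta(\log\sprs)$, and $\prm'+\log(2\numbits/\sprs)=\Theta(\log\numbits)$ in the relevant regime, obtaining $\dst=\bigO{\sqrt{\dims\log\sprs/(\ns\numbits)+\log\numbits/\ns}}$, and (as in \cref{theorem:mean:estimation:bernoulli}) for $\numbits\ge\dims$ the unconstrained rate dominates.

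I expect the main obstacle to be the $\numbits>3\sprs$ branch. The per-round first-stage pruning only needs a per-coordinate Hoeffding bound plus Markov's inequality, but the final ``top-$3\sprs$-of-$\numbits$'' step requires controlling an order statistic of $\numbits$ independent magnitudes, i.e.\ a maximal inequality with the $\log(2\numbits/\sprs)$ scaling, and this tail must then be combined with the $\prm$-dependent moment estimate of \cref{fact:binomial-lp} so that the $\beta^\prm e^{-c\beta^2}$ dependence produces exactly the $\prm+\log(2\numbits/\sprs)$ combination without losing spurious factors; reconciling the constants and the two error sources cleanly is the delicate part.
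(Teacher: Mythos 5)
Your proposal is correct and follows essentially the same route as the paper: the $\numbits\le 3\sprs$ branch (and the first-stage pruning) is the LDP analysis with $\alpha=1$ and group sizes inflated by $\numbits$, while the $\numbits>3\sprs$ branch adds a top-$3\sprs$-of-$\numbits$ selection handled by Hoeffding at threshold $\asymp\sqrt{\log(2\numbits/\sprs)/\ns}$ together with a count-of-confusing-zero-coordinates argument (the paper uses Markov's inequality on that binomial count rather than your order-statistics/union-bound phrasing, but the content is identical) plus the binomial moment fact for the retained coordinates, and $\prm>2\log\sprs$ via monotonicity at $\prm'=2\log\sprs$. The only imprecision is your stated survival-failure tail $e^{-c\beta^2}$ in the last step: as your own Hoeffding setup (and the paper) shows, the exponent is $c\beta^2\log(2\numbits/\sprs)$, and this extra $\log$ factor is exactly what cancels against the threshold so that the discarded-mass term is $\bigO{\sprs((\prm\lor\log(2\numbits/\sprs))/\ns)^{\prm/2}}$ rather than a product of $\prm$ and $\log(2\numbits/\sprs)$.
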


When $\numbits \le 3 \sprs$, the bound we get is $\dst \lesssim \sqrt{\frac{\prm 
\dims \sprs^{2/\prm}}{\ns \numbits} }$.
The analysis is almost identical to the case under 
LDP constraints, since in both cases, the information we get about
coordinate $j$ are samples from a Rademacher distribution with mean 
$(2\alpha - 1)\mu_j$. There are only two 
differences. (i) 
$\alpha = 
1$ instead of 
$\Theta\Paren{\priv^2}$. (ii) There is a factor of $\numbits$ more players in the 
corresponding groups. Combing both factors, we can obtain the desired 
bound by replacing 
$\priv^2$ by $\numbits$. We omit the detailed proof in this case.

When $\numbits > 3\sprs$, after $\numT \asymp \log(\dims/\numbits)$ rounds, we can 
find a subset 
$S_\numT$ 
of size $\numbits$ which contains most of the coordinates with large biases. The 
protocol then asks new players to send all coordinates within $S_\numT$ 
using $\numbits$ bits. 
In this case, it would be enough to prove 
\cref{lem:err:withinS:comm} since for the coordinates outside $S_\numT$, 
we 
can show the error is small following exactly the same steps as the proof 
for bouding the first term in \cref{lem:err:withinS} as we explained in the 
case when $\numbits \le 3 
\sprs$.
\begin{lemma}\label{lem:err:withinS:comm}
	Let $\setS_\numT$ be the subset obtained from the first stage of 
	\cref{alg:communication}, we 
	have
	\[
	\bEE{\sum_{j \in \setS_\numT} \abs{\mu_j - \widehat{\mu}_j}^\prm} = 
	O\Paren{\sprs \Paren{\frac{\prm + \log\frac{2\numbits}{\sprs}}{\ns}}^{\prm/2}}.
	\]
	\end{lemma}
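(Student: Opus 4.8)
The plan is to treat only the regime $\numbits>3\sprs$, which is the case in which $|\setS_\numT|=\numbits$ and the second stage of \cref{alg:communication} draws a fresh group $G_{\numT+1}$ of $\ns/2$ players (each reporting its full restriction to $\setS_\numT$), keeps in $\setS_{\numT+1}$ the $3\sprs$ coordinates $x\in\setS_\numT$ with largest $|M_{\numT+1,x}|$, and outputs $\widehat\mu_x=\tfrac2\ns M_{\numT+1,x}$ for $x\in\setS_{\numT+1}$ and $\widehat\mu_x=0$ otherwise; for $\numbits\le 3\sprs$ the term $\log(2\numbits/\sprs)$ is an absolute constant and the claim follows from the LDP analysis with $\priv^2$ replaced by $\numbits$, as already indicated. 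Since $G_{\numT+1}$ is independent of stage~1, I would fix an arbitrary realization of $\setS_\numT$ and split it as $\setS_\numT=A\sqcup B$, with $A\eqdef\{\,j\in\setS_\numT:\mu_j\neq 0\,\}$ (so $|A|\le\sprs$ by sparsity) and $B\eqdef\{\,j\in\setS_\numT:\mu_j=0\,\}$ (so $|B|\le\numbits$). For each $x\in\setS_\numT$, $M_{\numT+1,x}$ is a sum of $\ns/2$ i.i.d.\ $\{-1,+1\}$ variables with mean $\mu_x$, so $\tfrac2\ns M_{\numT+1,x}$ is its empirical mean, and pointwise $\abs{\mu_j-\widehat\mu_j}^\prm=\abs{\mu_j-\tfrac2\ns M_{\numT+1,j}}^\prm\indic{j\in\setS_{\numT+1}}+\abs{\mu_j}^\prm\indic{j\notin\setS_{\numT+1}}$. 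The bound on $\bEE{\sum_{j\in\setS_\numT}\abs{\mu_j-\widehat\mu_j}^\prm}$ then splits into the three pieces $A\cap\setS_{\numT+1}$, $A\setminus\setS_{\numT+1}$, and $B\cap\setS_{\numT+1}$ ($B\setminus\setS_{\numT+1}$ contributes $0$).

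For $A\cap\setS_{\numT+1}$, using $\indic{j\in\setS_{\numT+1}}\le 1$ and the binomial moment bound of \cref{fact:binomial-lp} (with $m=\ns/2$, composed with the affine reparametrisation from $\{0,1\}$ to $\{-1,+1\}$) gives $\bEE{\abs{\mu_j-\tfrac2\ns M_{\numT+1,j}}^\prm}=\bigO{(\prm/\ns)^{\prm/2}}$ per coordinate, hence a contribution of $\bigO{\sprs(\prm/\ns)^{\prm/2}}$. For $A\setminus\setS_{\numT+1}$, I would reprise the peeling argument used for the ``first term'' of \cref{lem:err:withinS}, now with threshold $\eta\eqdef C\sqrt{(\prm+\log(2\numbits/\sprs))/\ns}$: for $\abs{\mu_j}\le\eta$ simply bound $\abs{\mu_j}^\prm\bPr{j\notin\setS_{\numT+1}}\le\eta^\prm$; for $\abs{\mu_j}>\eta$, a Hoeffding bound gives $|M_{\numT+1,j}|\ge\tfrac\ns4\abs{\mu_j}$ except with probability $2e^{-c\ns\mu_j^2}$, while for each of the $\le\numbits$ coordinates $x\in B$ we have $\bPr{|M_{\numT+1,x}|\ge\tfrac\ns4\abs{\mu_j}}\le 2e^{-c\ns\mu_j^2}$, so — since $j$ is dropped only if at least $3\sprs-|A|\ge 2\sprs$ such $x$ occur — Markov's inequality yields $\bPr{j\notin\setS_{\numT+1}}\le 2e^{-c\ns\mu_j^2}+\tfrac{2\numbits}\sprs e^{-c\ns\mu_j^2}$. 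Choosing $C$ large enough that $\abs{\mu_j}>\eta$ forces $\tfrac\numbits\sprs e^{-c\ns\mu_j^2}\le e^{-(c/2)\ns\mu_j^2}$ (absorbing the selectivity factor $\numbits/\sprs$ into half the exponent) and then using $\sup_{u\ge0}u^\prm e^{-(c/2)\ns u^2}=(\prm/(ec\ns))^{\prm/2}$ gives $\abs{\mu_j}^\prm\bPr{j\notin\setS_{\numT+1}}=\bigO{(\prm/\ns)^{\prm/2}}$. Either way the per-coordinate bound is $\bigO{((\prm+\log(2\numbits/\sprs))/\ns)^{\prm/2}}$, so this piece contributes $\bigO{\sprs((\prm+\log(2\numbits/\sprs))/\ns)^{\prm/2}}$.

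For $B\cap\setS_{\numT+1}$, where $\mu_j=0$ so the error is $W_j^\prm$ with $W_j\eqdef\tfrac2\ns\abs{M_{\numT+1,j}}$: since $|B\cap\setS_{\numT+1}|\le 3\sprs$ and $\setS_{\numT+1}$ keeps the largest magnitudes, $\sum_{j\in B\cap\setS_{\numT+1}}W_j^\prm\le\sum_{k=1}^{3\sprs}W_{(k)}^\prm$, where $W_{(1)}\ge W_{(2)}\ge\cdots$ are the order statistics of the $\le\numbits$ i.i.d.\ variables $\{W_j:j\in B\}$, which satisfy the sub-Gaussian tail $\bPr{W_j>u}\le 2e^{-\ns u^2/4}$. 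Combining the deterministic inequality $\sum_{k=1}^{3\sprs}W_{(k)}^\prm\le 3\sprs\, t^\prm+\sum_{j\in B}W_j^\prm\indic{W_j>t}$ (valid for every $t>0$) with a standard tail-moment estimate $\bEE{W_j^\prm\indic{W_j>t}}=\bigO{(t^\prm+(\prm/\ns)^{\prm/2})e^{-\ns t^2/8}}$ for $t\gtrsim\sqrt{\prm/\ns}$, and choosing $t^2\asymp(\prm+\log(2\numbits/\sprs))/\ns$ (so that $\numbits\,e^{-\ns t^2/8}\lesssim\sprs$), makes this piece $\bigO{\sprs((\prm+\log(2\numbits/\sprs))/\ns)^{\prm/2}}$ as well. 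Adding the three pieces yields the lemma.

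The main obstacle is the ``selectivity'' factor $\numbits/\sprs$ incurred when pruning $\numbits$ candidates down to $3\sprs$: it is exactly what turns the ``$\prm$'' one would naively expect into ``$\prm+\log(2\numbits/\sprs)$'', and it shows up in two places — in the drop probability $\bPr{j\notin\setS_{\numT+1}}$ for $j\in A$, and in the order-statistics bound for $B$. In both it is tamed by the same device, taking the relevant threshold ($\eta$, resp.\ $t$) of order $\sqrt{(\prm+\log(2\numbits/\sprs))/\ns}$ so that $\numbits/\sprs$ is swallowed by the Gaussian tail; the remainder is routine bookkeeping with Hoeffding's and Markov's inequalities and \cref{fact:binomial-lp} (in particular, the usual $C^\prm$-type constants are harmless since $\dst$ is recovered by taking a $\prm$-th root).
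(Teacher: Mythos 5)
Your proposal is correct and follows the same overall strategy as the paper's proof: restrict to the regime $\numbits > 3\sprs$ (handling $\numbits \le 3\sprs$ by the LDP argument with $\priv^2$ replaced by $\numbits$), fix a realization of $\setS_\numT$, split the error into the coordinates kept in $\setS_{\numT+1}$ versus those dropped, bound the kept coordinates via the binomial moment bound of \cref{fact:binomial-lp}, and bound the dropped biased coordinates with a threshold of order $\sqrt{(\prm+\log(2\numbits/\sprs))/\ns}$, Hoeffding's inequality, and a Markov bound on the number of unbiased competitors exceeding half the signal --- exactly the mechanism used in the paper to establish \cref{eqn:missing:2}. The one place where you genuinely diverge is the treatment of the kept coordinates: the paper writes ``Fix $\setS_{\numT+1}$'' and applies the unconditional binomial moment bound to every $j\in\setS_{\numT+1}$, obtaining \cref{eqn:err:final:set} with no logarithmic factor, which quietly ignores that $\setS_{\numT+1}$ is selected using the same statistics $M_{\numT+1,x}$ being averaged (so the zero-mean coordinates that survive are conditioned to have large $\abs{M_{\numT+1,x}}$). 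You instead isolate the piece $B\cap\setS_{\numT+1}$ and control it by an order-statistics/truncated-moment argument with threshold $t \asymp \sqrt{(\prm+\log(2\numbits/\sprs))/\ns}$, paying the extra $\log(2\numbits/\sprs)$ for that piece; this is slightly lossier than the paper's stated intermediate bound but more rigorous about the selection bias, and it still lands within the lemma's claimed rate, so the final statement is unaffected.
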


\begin{proof} Similar to~\cref{lem:err:withinS}, we will 
prove that the
statement is true for any realization of $S_\numT$, which is a stronger 
statement than the claim.
	\begin{align*}
	\bEE{\sum_{j \in \setS_\numT} \abs{\mu_j - \widehat{\mu}_j}^\prm}  
	& = \bEE{ \sum_{j \in \setS_\numT}  \abs{\mu_j - \widehat{\mu}_j}^\prm \indic{j \in S_{\numT + 1}}} + \bEE{ \sum_{j\in \setS_\numT}   |\mu_j|^\prm \indic{j \notin S_{\numT + 1}}}  \\
	& \le \bEE{ \sum_{j \in \setS_{\numT + 1}}  \abs{\mu_j 
			- \widehat{\mu}_j}^\prm } + \sum_{j \in \setS_\numT} |\mu_j|^\prm \bPr{j \notin  \setS_{\numT + 1}}.
	\end{align*}
	Fix $\setS_{\numT + 1}$. For each $j \in \setS_{\numT + 1}$, 
	$M_{\numT+1, j}$ is binomially distributed with mean $\mu_j$ 
	and $\ns/2$ trials. By 
	similar 
	computations as~\cref{lem:err:withinS}, we have 
	
	\begin{equation} \label{eqn:err:final:set}
		\bEE{ \sum_{j \in \setS_{\numT + 1} } 
		\abs{\mu_j 
				- \widehat{\mu}_j}^\prm } = 
				O\Paren{\sprs 
				\Paren{\frac{\prm}{\ns}}^{\prm/2}}.
	\end{equation}
	Next we show for all $j \in \setS_{\numT}$ such that $\mu_j \neq 0$,
	\begin{equation}\label{eqn:missing:2}
	|\mu_j|^\prm \bPr{j \notin  \setS_{\numT + 1}} \leq
		2\Paren{\frac{\prm \lor  64\ln\frac{2\numbits}{\sprs}}{\ns}}^{\prm/2}.
	\end{equation}
	If $|\mu_j| \le 
	\Th' \eqdef 8\sqrt{\frac{\ln\frac{2\numbits}{\sprs} }{\ns 
			}}$,~\cref{eqn:missing:2} always holds since 
	$\bPr{j 
		\notin \setS} \le 1$. Hence we 
	hereafter assume that $|\mu_j| > 
	\Th'$, and write $\mu_j = \beta_j \Th'$ 
	for some $\beta_j  > 1$. 	
	
	Let $\numlc_{\numT+1,j}$ be 
	the 
	number 
	of coordinates $j'$ with $\mu_{j'} = 0$ and 
	$|M_{\numT+1,j'}| \ge \frac{\ns}{2}\cdot  \frac{\beta_j\Th'}{2}$. 
	Then 
	since $\setS_{\numT+1}$ contains the top $3\sprs$ coordinates with 
	the largest 
	magnitude of the sum,
	we have 
	$j \notin \setS_{\numT+1}$ happens 
	only if at least one of the 
	following occurs (i)~$\numlc_{\numT+1,j} > 2\sprs$, or (ii)~$M_{\numT+1,j} < \frac{\ns}{2}\cdot  \frac{\beta_j\Th'}{2}$. 
	
	By Hoeffding's inequality, we have
	\[
	\bPr{M_{\numT+1,j} < \frac{\ns}{2}\cdot  \frac{\beta_j\Th'}{2}} 
	\le 
	\exp\Paren{ - \frac12\cdot \frac{\ns}{2}
		\cdot\Paren{\frac{ 
				\beta_j\Th'}{2}}^2} = 
				\Paren{\frac{2\numbits}{\sprs}}^{-4\beta_j^2} 
	\eqdef p_{\numT+1,j}.
	\]	
	Similarly, for any $j'$ such that $\mu_{j'} = 0$,  
	\[
	\bPr{|M_{\numT+1,j'}| \ge \frac{\ns}{2}\cdot  \frac{
	\beta_j\Th'}{2}} \le  
	2 p_{\numT+1,j}.
	\]
	Since all coordinates are independent, $\numlc_{\numT+1,j}$ is binomially distributed with mean at most $2 p_{\numT+1,j} 
	\numbits$, and therefore, by Markov's inequality,
	\[
	\bPr{\numlc_{\numT+1,j}> 2\sprs} \le 
	\frac{2 p_{\numT+1,j} 
	\numbits }{2\sprs} \le \Paren{\frac{2\numbits}{\sprs}}^{1 -4\beta_j^2} 
	\le \Paren{\frac{2\numbits}{\sprs}}^{-3\beta_j^2} 
	\]
	the last step since $\beta_j > 1$. By a union bound, we have
	\[
		 \bPr{j \notin \setS_\numT}  \le \bPr{\numlc_{\numT+1,j}> 2\sprs} + 
		 \bPr{M_{\numT+1,j} < \frac{1}{4}\frac{\ns}{2}\cdot  \frac{\beta_j\Th'}{2}}  \le 2 
		 \Paren{\frac{2\numbits}{\sprs}}^{-3\beta_j^2}.
	\]
	Using the inequality $x^\prm a ^{-x^2} \le \Paren{\frac{\prm}{2e\ln 
	a}}^{\prm/2}$ which holds for all $x > 0$, we get overall
	\[
	|\mu_j|^\prm \cdot \bPr{j \notin \setS_\numT} 
	\leq 2 H'^\prm \beta_j^{\prm} 
	\Paren{\frac{2\numbits}{\sprs}}^{-4\beta_j^2}
	\leq 2\Paren{\frac{\prm }{e\ns}}^{\prm/2},
	\]
	establishing \cref{eqn:missing:2}.
	Combining \cref{eqn:err:final:set} and \cref{eqn:missing:2} concludes the proof~\cref{lem:err:withinS:comm} since there are at most 
	$\sprs$ unbiased coordinates.
\end{proof}

\newchange{
	\paragraph{Algorithm under LDP with $\priv > 1$} To get a
	$\priv$-LDP
	algorithm in the regime $\priv > 1$ (low-privacy regime), we perform the following changes to obtain a 
	private algorithm from \cref{alg:communication}:
	\begin{itemize}
		\item Each user  independently flips each 
		coordinate of 
		their local sample to get $Z_i$ where, for all $x \in [d]$, $(Z_i)_x = 
		(X_i)_x$ with probability $\frac{e}{e + 1}$ and $(Z_i)_x = 1 -
		(X_i)_x$ with probability $\frac{1}{e + 1}$ (note that this corresponds to applying Randomized Response independently to each bit with privacy parameter $1$).  
		\item Users then follow~\cref{alg:communication} with the setting $\numbits = 
		\flr{\priv}$ and local data $\{ Z_i \}_{i \in [\ns]}$, and obtain estimate 
		$\widehat{\mu}$.
		\item The final estimate is then $\frac{e + 1}{e -  1}\widehat{\mu}$.
	\end{itemize} 
The privacy guarantee of the algorithm comes from the fact that  
\cref{alg:communication}  sends at most $\numbits = \flr{\priv}$ coordinates of 
each $Z_i$, and for any $S$ with $|S| \le \flr{\priv}$ 
\[
\frac{\bPr{\{(Z_i)_x\}_{x \in S} \mid X_i}}{\bPr{\{(Z_i)_x\}_{x \in S} \mid X'_i}} 
= \prod_{x \in S} \frac{\bPr{(Z_i)_x \mid (X_i)_x}}{\bPr{(Z_i)_x \mid (X'_i)_x}} 
\le e^{\flr{\priv}}.
\]
The utility guarantee follows from observing that $\mu_Z = \frac{e - 1}{e + 1}\mu$ and 
hence any $\lp[\prm]$ error guarantee will be preserved up to a constant.
}

  \subsection{Gaussian Mean Estimation}
    \label{app:ub:gaussian}
Recall that $\mathcal{G}_{\dims,\sprs}$ denotes the family of 
$\dims$-dimensional spherical Gaussian distributions with 
$\sprs$-sparse mean in 
$[-1,1]^\dims$, \ie{}
\begin{equation}\label{def:sparse:gaussian}
    \mathcal{G}_{\dims,\sprs} = \setOfSuchThat{\gaussian{\mu}{\II}}{ 
    \norminf{\mu} \leq 1, \norm{\mu}_0 \leq \sprs}\,.
\end{equation}
We will prove the following results for LDP and communication constraints, respectively.
\begin{proposition}
	\label{theorem:mean:estimation:gaussian:ub:ldp}
	Fix $\prm\in[1,\infty]$. For $\ns \geq 1$ and $\priv \in (0,1]$, there 
	exists 
	an $(\ns, \dst)$-estimator using $\cW_\priv$ under 
	$\numbits_\prm$ loss for 
	$\mathcal{G}_{\dims,\sprs}$ with 
	$
	\dst = \bigO{\sqrt{\frac{\prm \dims \sprs^{2/\prm}}{\ns \priv^2}}}
	$ for $\prm \le 2 \log \sprs$ and $
	\dst = \bigO{\sqrt{\frac{\dims \log \sprs  }{\ns \priv^2}}}
	$ for $\prm >  2 \log \sprs$.
\end{proposition}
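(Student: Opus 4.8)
The plan is to derive \cref{theorem:mean:estimation:gaussian:ub:ldp} by a black-box reduction from Gaussian mean estimation to the product Bernoulli mean estimation problem already solved in \cref{theorem:mean:estimation:bernoulli:ub:ldp}, using a \emph{local} one-bit quantization that each player applies to its sample \emph{before} passing it through any channel. Let $\Phi$ denote the standard univariate Gaussian CDF, $\varphi=\Phi'$ its density, and set $g(u)\eqdef 2\Phi(u)-1$. Given $X_i\sim\gaussian{\mu}{\II}$ with $\norminf{\mu}\le 1$, player $i$ forms $B_i\eqdef(\mathrm{sign}((X_i)_1),\dots,\mathrm{sign}((X_i)_\dims))\in\{-1,1\}^\dims$. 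Because the coordinates of $X_i$ are independent, $B_i$ is a product Rademacher vector whose $j$-th mean equals $g(\mu_j)$; writing $\nu\eqdef(g(\mu_1),\dots,g(\mu_\dims))$ we have $\norminf{\nu}\le g(1)<1$ and, since $g(0)=0$, also $\norm{\nu}_0\le\norm{\mu}_0\le\sprs$. Hence the law of $B_i$ belongs to $\mathcal{B}_{\dims,\sprs}$, and \cref{theorem:mean:estimation:bernoulli:ub:ldp} applies to it \emph{verbatim}.

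I would then run the sequentially interactive protocol of \cref{theorem:mean:estimation:bernoulli:ub:ldp} on $B_1,\dots,B_\ns$. The key point is that local differential privacy is preserved under the (here deterministic, and more generally randomized) preprocessing $X_i\mapsto B_i$: if $W$ is $\priv$-LDP on $\{-1,1\}^\dims$, then $x\mapsto W(\cdot\mid\mathrm{sign}(x),\cdot)$ satisfies \cref{eq:ldp} on $\cX=\R^\dims$, because the supremum there is taken only over input values and $\mathrm{sign}(x)$ always lands in $\{-1,1\}^\dims$. The composed object is therefore a valid sequentially interactive $\priv$-LDP protocol for the Gaussian samples, and it returns an estimate $\hat\nu$ of $\nu$ with $\bEE{\norm{\nu-\hat\nu}_\prm^\prm}^{1/\prm}=\bigO{\sqrt{\prm\dims\sprs^{2/\prm}/(\ns\priv^2)}}$ for $\prm\le 2\log\sprs$ and $\bigO{\sqrt{\dims\log\sprs/(\ns\priv^2)}}$ for $\prm>2\log\sprs$ (including $\prm=\infty$).

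Finally, I would decode $\hat\mu_j\eqdef g^{-1}\bigl(\mathrm{clip}_{[g(-1),g(1)]}(\hat\nu_j)\bigr)$, with the convention $g^{-1}(0)=0$ so that the sparsity of the Bernoulli estimate carries over to $\hat\mu$. On $[-1,1]$ one has $g'(u)=2\varphi(u)\in[2\varphi(1),2\varphi(0)]$, so $g$ is bi-Lipschitz there with absolute constants and $g^{-1}$ is $L$-Lipschitz on $[g(-1),g(1)]$ with $L=1/(2\varphi(1))=\bigO{1}$. Since $\nu_j\in[g(-1),g(1)]$, the clipping is a coordinatewise contraction toward $\nu_j$, whence $\abs{\hat\mu_j-\mu_j}\le L\abs{\hat\nu_j-\nu_j}$ for every $j$; taking $\prm$-th moments gives $\bEE{\norm{\hat\mu-\mu}_\prm^\prm}^{1/\prm}\le L\,\bEE{\norm{\hat\nu-\nu}_\prm^\prm}^{1/\prm}$, which is the claimed bound on $\dst$. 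For $\prm=\infty$ one simply invokes the $\prm>2\log\sprs$ case (equivalently, the $\prm=2\log\sprs$ bound together with $\norm{v}_\infty\le\norm{v}_{2\log\sprs}$ up to a constant factor on $\sprs$-sparse $v$; cf.\ \cref{ft:holder:logd}).

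There is no deep analytic obstacle here; the real content is careful bookkeeping: (i) checking that the one-bit quantization keeps the induced distribution inside the box $\norminf{\cdot}\le 1$ and preserves $\sprs$-sparsity, so that \cref{theorem:mean:estimation:bernoulli:ub:ldp} can be invoked as a black box; (ii) verifying that the local preprocessing neither degrades the privacy level nor breaks sequential interactivity; and (iii) ensuring $g$ and $g^{-1}$ are bi-Lipschitz with \emph{absolute} constants on the relevant interval — this is precisely where the standing normalization $\norminf{\mu}\le 1$ is essential, since otherwise $\varphi$ is not bounded away from $0$ and the reduction would lose more than a constant factor.
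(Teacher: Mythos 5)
Your proposal is correct and is essentially the paper's own argument: the paper likewise reduces to the product Bernoulli problem by having each player take coordinatewise signs (so the induced mean vector is $\operatorname{Erf}(\mu_j/\sqrt{2})\in[-\eta,\eta]$ with sparsity preserved), runs the Bernoulli LDP protocol, and inverts via the Lipschitzness of $\operatorname{Erf}^{-1}$ on that interval (\cref{lemma:mean:estimation:gaussian:ub:reduction}), which is the same bi-Lipschitz bookkeeping you do with $g=2\Phi-1$. Your explicit clipping step and the remark on LDP being preserved under local preprocessing are minor refinements of details the paper leaves implicit.
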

\begin{proposition}
	\label{theorem:mean:estimation:gaussian:ub:comm}
	Fix $\prm\in[1,\infty]$. For $\ns \geq 1$ and $\numbits \le \dims$, 
	there exists an $(\ns, \dst)$-estimator using 
	$\cW_\numbits$ under 
	$\numbits_\prm$ loss for 
	$\mathcal{G}_{\dims,\sprs}$ with 
	$
	\dst = \bigO{\sqrt{\frac{\prm \dims \sprs^{2/\prm}}{\ns \numbits} + 
		\frac{(\prm + \log(2\numbits/\sprs)) \sprs^{2/\prm} }{\ns}}}
	$ for $\prm \le 2 \log \sprs$ and $
	\dst = \bigO{\sqrt{\frac{\dims \log \sprs  }{\ns \numbits} + \frac{\log \numbits}{\ns}}}
	$ for $\prm >  2 \log \sprs$.
\end{proposition}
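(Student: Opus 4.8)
The plan is to obtain~\cref{theorem:mean:estimation:gaussian:ub:comm} from the product Bernoulli protocol of~\cref{theorem:mean:estimation:bernoulli:ub:comm} via a local, communication-free reduction. The key observation is that if $X\sim\gaussian{\mu}{\II}$ with $\mu\in[-1,1]^\dims$ and $\norm{\mu}_0\le\sprs$, then the deterministic coordinatewise map $x\mapsto\mathrm{sign}(x)$ turns $X$ into a $\{-1,+1\}^\dims$-valued vector with independent coordinates and $\bEE{\mathrm{sign}(X_j)}=2\Phi(\mu_j)-1=:f(\mu_j)$, where $\Phi$ is the standard Gaussian cdf. Hence $\mathrm{sign}(X)$ is distributed as the product Bernoulli law with mean $\nu\eqdef f(\mu)$ (applied coordinatewise), and this law lies in $\mathcal{B}_{\dims,\sprs}$: indeed $\norminf{\nu}\le f(1)=2\Phi(1)-1<1$, and since $f(0)=0$ and $f$ is strictly increasing, $\nu$ has the same support as $\mu$, so $\norm{\nu}_0\le\sprs$.

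Given this, I would first have each player $t$ replace its sample $X_t$ by $\tilde X_t\eqdef\mathrm{sign}(X_t)$ --- a purely local step costing no communication --- and then run the $\numbits$-bit sequentially interactive protocol of~\cref{alg:communication} on the \iid{} product Bernoulli samples $\tilde X_1,\dots,\tilde X_\ns$, whose common law lies in $\mathcal{B}_{\dims,\sprs}$, obtaining by~\cref{theorem:mean:estimation:bernoulli:ub:comm} an estimate $\hat\nu$ with $\bEE{\norm{\nu-\hat\nu}_\prm^\prm}^{1/\prm}\le\dst_{\mathrm B}$, where $\dst_{\mathrm B}$ is exactly the bound stated there (its two cases $\prm\le 2\log\sprs$ and $\prm>2\log\sprs$ covering all $\prm\in[1,\infty]$, the latter including $\prm=\infty$). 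The estimate of $\mu$ is then recovered by inverting $f$: set $\hat\mu_j\eqdef f^{-1}(\hat\nu_j)$ when $\hat\nu_j\in[f(-1),f(1)]$, and $\hat\mu_j\eqdef+1$ (resp.\ $-1$) when $\hat\nu_j>f(1)$ (resp.\ $<f(-1)$). Since $f\colon[-1,1]\to[f(-1),f(1)]$ is an increasing bijection this is well defined, $\hat\mu\in[-1,1]^\dims$, and (as $f(0)=0$) $\hat\mu$ is supported inside the support of $\hat\nu$, hence $\bigO{\sprs}$-sparse.

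The quantitative step is that $f$ is bi-Lipschitz on $[-1,1]$: $f'(\mu)=2\varphi(\mu)$ lies in $[2\varphi(1),2\varphi(0)]$ for $\abs{\mu}\le1$, with $2\varphi(1)=\sqrt{2/\pi}\,e^{-1/2}$ an absolute positive constant. Thus $f^{-1}$ is $L$-Lipschitz on $[f(-1),f(1)]$ for $L\eqdef 1/(2\varphi(1))$, and the extension defined above --- which is the nearest-point projection onto $f([-1,1])$ composed with $f^{-1}$ --- is $L$-Lipschitz on all of $\R$. Since $f^{-1}(\nu_j)=\mu_j$, this yields $\abs{\hat\mu_j-\mu_j}\le L\abs{\hat\nu_j-\nu_j}$ for every $j$, hence $\norm{\hat\mu-\mu}_\prm\le L\norm{\hat\nu-\nu}_\prm$ for all $\prm\in[1,\infty]$; taking $\prm$-th moments gives $\bEE{\norm{\mu-\hat\mu}_\prm^\prm}^{1/\prm}\le L\,\dst_{\mathrm B}=\bigO{\dst_{\mathrm B}}$, which is the claimed rate.

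I do not expect a genuine obstacle here: the reduction is local, so it leaves the $\numbits$-bit budget and the interaction structure of~\cref{alg:communication} untouched, and the only point requiring care is to set up the inversion of $f$ so that $\hat\mu$ is a valid bounded, $\bigO{\sprs}$-sparse mean vector while keeping $L$ an absolute constant, so that the rate is unchanged. The identical argument, with~\cref{alg:communication} and~\cref{theorem:mean:estimation:bernoulli:ub:comm} replaced by the randomized-response protocol of~\cref{alg:ldp} and~\cref{theorem:mean:estimation:bernoulli:ub:ldp}, gives~\cref{theorem:mean:estimation:gaussian:ub:ldp} as well.
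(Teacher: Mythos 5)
Your reduction is exactly the paper's own argument: take coordinatewise signs to obtain a product Bernoulli sample with mean $\operatorname{Erf}(\mu_j/\sqrt 2)$, run the Bernoulli protocol of~\cref{alg:communication}, and invert the (bi-Lipschitz on the bounded range) link function, which is precisely~\cref{lemma:mean:estimation:gaussian:ub:reduction} combined with~\cref{theorem:mean:estimation:bernoulli:ub:comm}. Your explicit clipping of $\hat\nu$ onto the range of $f$ before inverting is a small point of extra care that the paper glosses over, but otherwise the two proofs coincide.
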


We reduce the problem of Gaussian mean estimation to that of Bernoulli 
mean estimation and then 
invoke
\cref{theorem:mean:estimation:bernoulli:ub:ldp,theorem:mean:estimation:bernoulli:ub:comm}
 from the previous section. 
	At the heart of the reduction is a simple idea that was used in, 
	\eg{}~\cite{BGMNW:16,ACT:20,CKMUZ:19}: the sign of a Gaussian random 
	variable already preserves sufficient information about the mean. Details 
	follow.

Let $\p\in\mathcal{G}_{\dims, \sprs}$ with mean 
$\mu(\p)=(\mu(\p)_1,\ldots, \mu(\p)_\dims)$. 
For $X\sim\p$, let $Y = (\sign(X_i))_{i\in[\dims]}\in \bool^\dims$ be a random variable indicating the signs of the $\dims$ coordinates of $X$. By the independence of the coordinates of $X$, note that $Y$ is distributed as a product Bernoulli distribution (in $\mathcal{B}_\dims$) with mean vector $\nu(\p)$ given by
\begin{equation}
    \label{eq:def:nup}
    \nu(\p)_i = 2\bP{X\sim\p}{ X_i > 0}-1 = \operatorname{Erf}\Paren{\frac{\mu(\p)_i}{\sqrt{2}}}, \qquad i\in[\dims],
\end{equation}
and, since $|\mu(\p)_i|\le 1$, we have $\nu(\p)\in[-\eta,\eta]^\dims$, where $\eta\eqdef \operatorname{Erf}\Paren{1/\sqrt{2}}\approx 0.623$. 
Moreover, it is immediate to see that each player, given a sample from $\p$, can convert it to a sample from the corresponding product Bernoulli distribution. 
We now show that a good estimate for $\nu(\p)$ yields a good estimate for $\mu(\p)$.
\begin{lemma}
  \label{lemma:mean:estimation:gaussian:ub:reduction}
    Fix any $\prm\in[1,\infty)$, and $\p\in\mathcal{G}_\dims$. For $\widehat{\nu}\in[-\eta,\eta]^\dims$, define $\widehat{\mu}\in[-1,1]^\dims$ by
    $
        \widehat{\mu}_i \eqdef \sqrt{2}\operatorname{Erf}^{-1}(\widehat{\nu}_i),
    $ for all $i\in[\dims]$. 
    Then
    \[
        \norm{\mu(\p)-\widehat{\mu}}_\prm \leq \sqrt{\frac{e\pi}{2}}\cdot \norm{\nu(\p)-\widehat{\nu}}_\prm\,.
    \]
\end{lemma}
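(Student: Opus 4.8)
The statement is a pointwise Lipschitz estimate, lifted coordinate-by-coordinate to the $\lp[\prm]$ norm. The key observation is that, writing $f(t)\eqdef \sqrt{2}\operatorname{Erf}^{-1}(t)$, we have both $\mu(\p)_i = f(\nu(\p)_i)$ (by inverting~\eqref{eq:def:nup}) and $\widehat{\mu}_i = f(\widehat{\nu}_i)$ by definition. So it suffices to show that $f$ is $L$-Lipschitz on $[-\eta,\eta]$ with $L = \sqrt{e\pi/2}$: then for each $i\in[\dims]$,
\[
\abs{\mu(\p)_i - \widehat{\mu}_i} = \abs{f(\nu(\p)_i) - f(\widehat{\nu}_i)} \leq L\abs{\nu(\p)_i - \widehat{\nu}_i}\,,
\]
and raising to the $\prm$-th power, summing over $i$, and taking $\prm$-th roots gives the claim. (Here we use that $\nu(\p)\in[-\eta,\eta]^\dims$ from~\eqref{eq:def:nup} and $\widehat{\nu}\in[-\eta,\eta]^\dims$ by hypothesis, so that for each $i$ the whole segment joining $\nu(\p)_i$ and $\widehat{\nu}_i$ lies in the interval $[-\eta,\eta]$ where $f$ is defined and differentiable.)

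\textbf{Bounding the Lipschitz constant.} Differentiating the identity $\operatorname{Erf}(\operatorname{Erf}^{-1}(t)) = t$ and using $\operatorname{Erf}'(x) = \frac{2}{\sqrt{\pi}}e^{-x^2}$ gives $\frac{d}{dt}\operatorname{Erf}^{-1}(t) = \frac{\sqrt{\pi}}{2}\,e^{(\operatorname{Erf}^{-1}(t))^2}$, hence
\[
f'(t) = \sqrt{2}\cdot\frac{\sqrt{\pi}}{2}\,e^{(\operatorname{Erf}^{-1}(t))^2} = \sqrt{\frac{\pi}{2}}\,e^{(\operatorname{Erf}^{-1}(t))^2}\,.
\]
For $\abs{t}\leq\eta = \operatorname{Erf}(1/\sqrt 2)$ we have $\abs{\operatorname{Erf}^{-1}(t)}\leq 1/\sqrt 2$ by monotonicity of $\operatorname{Erf}$, so $(\operatorname{Erf}^{-1}(t))^2\leq 1/2$ and therefore $\abs{f'(t)}\leq \sqrt{\pi/2}\cdot e^{1/2} = \sqrt{e\pi/2}$. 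By the mean value theorem $f$ is $\sqrt{e\pi/2}$-Lipschitz on $[-\eta,\eta]$, completing the argument.

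\textbf{Main obstacle.} There is no serious obstacle: the only thing to be careful about is that the derivative bound is applied on an interval that actually contains both of the relevant points, which is guaranteed by the a priori bounds $\norminf{\mu(\p)}\le 1$ (forcing $\nu(\p)\in[-\eta,\eta]^\dims$) and the assumed $\widehat{\nu}\in[-\eta,\eta]^\dims$; this is precisely why the lemma is stated with the truncated range $[-\eta,\eta]$ for $\widehat{\nu}$ rather than the full $[-1,1]$ (where $\operatorname{Erf}^{-1}$ has unbounded derivative near $\pm 1$). Everything else is the routine computation of $(\operatorname{Erf}^{-1})'$ and a one-line numerical bound.
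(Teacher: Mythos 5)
Your proposal is correct and follows essentially the same route as the paper: bound the derivative of $\operatorname{Erf}^{-1}$ on $[-\eta,\eta]$ using $(\operatorname{Erf}^{-1})'(t)=\frac{\sqrt{\pi}}{2}e^{(\operatorname{Erf}^{-1}(t))^2}\leq\frac{\sqrt{e\pi}}{2}$, and lift the resulting coordinatewise Lipschitz estimate to the $\lp[\prm]$ norm. Absorbing the $\sqrt{2}$ factor into $f$ rather than pulling out $2^{\prm/2}$ as the paper does is a purely cosmetic difference.
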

\begin{proof}
  By computing the maximum of its derivative,\footnote{Specifically, we have that $\max_{x\in[-\eta,\eta]} \operatorname{Erf}^{-1}(x) = 1/\sqrt{2}$ by definition of $\eta$ and monotonicity of $\operatorname{Erf}$. Recalling then that, for all $x\in[-\eta,\eta]$, $(\operatorname{Erf}^{-1})'(x) = \frac{1}{\operatorname{Erf}'(\operatorname{Erf}^{-1}(x))}= \frac{\sqrt{\pi}}{2}e^{(\operatorname{Erf}^{-1}(x))^2} \leq \frac{\sqrt{\pi}}{2} e^{\frac{1}{2}}$, we get the Lipschitzness claim.} we observe that the function $\operatorname{Erf}^{-1}$ is $\frac{\sqrt{e\pi}}{2}$-Lipschitz on $[-\eta,\eta]$. 
  By the definition of $\widehat{\mu}$ and recalling~\cref{eq:def:nup}, we then have
  \begin{align*}
      \norm{\mu(\p)-\widehat{\mu}}_\prm^\prm
        &= \sum_{i=1}^\dims \abs{\mu(\p)_i-\widehat{\mu}_i}^\prm
        = 2^{\prm/2}\cdot\sum_{i=1}^\dims \abs{\operatorname{Erf}^{-1}(\nu_i)-\operatorname{Erf}^{-1}(\widehat{\nu}_i)}^\prm
        \leq \Paren{\frac{e\pi}{2}}^{\prm/2} \cdot\sum_{i=1}^\dims \abs{\nu_i-\widehat{\nu}_i}^\prm,
  \end{align*}
  where we used the fact that $\nu,\widehat{\nu}\in[-\eta,\eta]^\dims$.
\end{proof}
As previously discussed, 
combining~\cref{lemma:mean:estimation:gaussian:ub:reduction} with~
\cref{theorem:mean:estimation:bernoulli:ub:ldp,theorem:mean:estimation:bernoulli:ub:comm}
 (with $\dst' \eqdef \sqrt{\frac{2}{e\pi}}\dst$) immediately 
implies~\cref{theorem:mean:estimation:gaussian:ub:ldp,theorem:mean:estimation:gaussian:ub:comm}
 for $\prm \in[1,\infty]$. 

\begin{remark}
	Note that for the Gaussian family, we also consider the linear measurement 
	constraint. Under linear measurement constraints, we can use the linear 
	measurement matrix to obtain $\ldim$ out of $\dims$ coordinates and 
	perform the above reduction to product of Bernoulli family. The obtained 
	bound will be same as that under communication constraints.
\end{remark}

\end{appendix}

\end{document}